\numberwithin{equation}{section}
\numberwithin{figure}{section}
\theoremstyle{plain}
\newtheorem{thm}{\protect\theoremname}
\newtheorem{lemma}[thm]{Lemma}
\newtheorem{corollary}[thm]{Corollary}
\newtheorem{remark}[thm]{Remark}
\providecommand{\theoremname}{Theorem}
\DeclareOldFontCommand{\rm}{\normalfont\rmfamily}{\mathrm}
\DeclareOldFontCommand{\sf}{\normalfont\sffamily}{\mathsf}
\DeclareOldFontCommand{\tt}{\normalfont\ttfamily}{\mathtt}
\DeclareOldFontCommand{\bf}{\normalfont\bfseries}{\mathbf}
\DeclareOldFontCommand{\it}{\normalfont\itshape}{\mathit}
\DeclareOldFontCommand{\sl}{\normalfont\slshape}{\@nomath\sl}
\DeclareOldFontCommand{\sc}{\normalfont\scshape}{\@nomath\sc}
\DeclareRobustCommand*\cal{\@fontswitch\relax\mathcal}
\DeclareRobustCommand*\mit{\@fontswitch\relax\mathnormal}
\newcommand{\bsa}{{\mathbf{a}}}
\newcommand{\bsb}{{\mathbf{b}}}
\newcommand{\bsx}{{\mathbf{x}}}  
\newcommand{\bsy}{{\mathbf{y}} }  
\newcommand{\bsz}{{\mathbf{z}} } 
\newcommand{\bsv}{{\mathbf{v}} } 
\newcommand{\bsw}{{\mathbf{w}} } 
\newcommand{\bsu}{{\mathbf{u}} } 
\newcommand{\bsmu}{{\boldsymbol{\mu}} } 
\newcommand{\bsalpha}{{\boldsymbol{\alpha}} }
\newcommand{\bsbeta}{{\boldsymbol{\beta}} }
\newcommand{\bstheta}{{\boldsymbol{\theta}}}
\newcommand{\bsC}{{\boldsymbol{C}}}  
\newcommand{\boldzero}{\mathbf{0}}
\newcommand{\rd}{\mathrm{d}}	
\newcommand{\N}{\mathbb{N}}	
\newcommand{\R}{\mathbb{R}}
\newcommand{\cO}{\mathcal{O}}
\newcommand{\convprob}{\overset{p}{\rightarrow}}
\newcommand{\samplespace}{\mathcal{Z}}
\newcommand{\plimn}{\plim_{n\rightarrow\infty}}
\DeclareMathOperator*{\plim}{plim}
\DeclareMathOperator*{\argmax}{arg\,max}
\begin{document}

\title{Maximum Approximated Likelihood Estimation}

\author{M. Griebel \and F. Heiss \and J. Oettershagen \and C. Weiser}
\maketitle

\begin{abstract}
	Empirical economic research frequently applies maximum likelihood estimation in cases where the likelihood function is analytically intractable. Most of the theoretical literature focuses on maximum simulated likelihood (MSL) estimators, while empirical and simulation analyzes often find that alternative approximation methods such as quasi--Monte Carlo simulation, Gaussian quadrature, and integration on sparse grids behave considerably better numerically. This paper generalizes the theoretical results widely known for MSL estimators to a general set of \emph{maximum approximated likelihood (MAL)} estimators. We provide general conditions for both the model and the approximation approach to ensure consistency and asymptotic normality. We also show specific examples and finite--sample simulation results.
\end{abstract}


\section{Introduction}
Consider classical maximum-likelihood estimation, i.e. the estimated parameter vector \(\hat{\bstheta}_{ML} \in \Theta \subset \R^p\) is obtained by maximizing
\begin{equation} \label{eqn_intro_likelihood}
 L_n(\bstheta) = \frac{1}{n} \sum_{i=1}^n \log f(\bsz_i, \bstheta) .
\end{equation}
Here, \(f(\bsz_i, \bstheta)\) denotes the individual likelihood contribution of the sample \(\bsz_i \in \samplespace \subset \R^q\).
As \(n\) tends to infinity, the ML-estimator converges to the true parameter \(\bstheta_0\), which defines the distribution the sample \(\bsz_1,\ldots, \bsz_n\) was drawn from, i.e. \(\plim_{n \to \infty} \hat{\bstheta}_{ML} = \bstheta_0\).

Often, the function \(f\) stems from an integral representation
\begin{equation} \label{eqn_intro_integeral}
 f(\bsz, \bstheta) = \int_\Omega \varphi(\bsv, \bsz, \bstheta) \, \omega(\bsv) \rd \bsv ,
\end{equation}
where \(\varphi: \Omega \times \samplespace \times \Theta \to \R\) is some (usually non-negative) function that is integrated with respect to the variable \(\bsv\) and a weight function \(\omega: \Omega \to \R_+\) that is defined on a domain \(\Omega \subset \R^d\).
In many relevant cases, these integrals cannot be computed in closed form, e.g. for models of discrete choice \citep{Butler1982, McFaddenTrain2000, Train2009}, or general limited dependent variable models \citep{Hajivassiliou1994}.
Then, \(f\) can be approximated with an \(r\)-point quadrature rule, i.e.
\begin{equation} \label{eqn_intro_quadrature}
 \tilde{f}_r(\bsz, \bstheta) := \sum_{j=1}^r w_{j,r} \varphi(\bsv_{j,r}, \bsz, \bstheta) \quad \approx f(\bsz, \bstheta) .
\end{equation}
Here, the quadrature points and weights \((\bsv_{j,r}, w_{j,r})_{j=1}^r\) are independed of $\bsz$ and need to be chosen properly to guarantee a certain accuracy, provided that specific assumptions on \(\varphi\) are valid.
Moreover, increasing the parameter \(r\) allows to increase the accuracy on the one hand, but leads to additional cost on the other hand. Therefore, it is necessary to determine the required accuracy for the approximation of \(f\) by \(\tilde{f}_r\) such that the resulting estimator maintains consistency and asymptotical normality.

To this end, we introduce a link function \(R: \N \to \N\) that couples the number of integration points \(r\) to the sample size \(n\), i.e. \(r = R(n)\). Now, \(R(n)\) can be chosen such that \(\tilde{f}_{R(n)}(\bsz_i, \bstheta)\) approximates \(f(\bsz_i, \bstheta)\) well enough to ignore this additional approximation error but not better, to keep the overall cost at a tractable level.
Altogether, this leads to the \emph{maximum approximated likelihood estimator} (MALE) given by
\begin{equation}
 \hat{\bstheta}^{(n)}_{MAL} := \argmax_{\bstheta \in \Theta} \tilde{L}_n(\bstheta), \quad \text{ where } \tilde{L}_n(\bstheta) := \frac{1}{n} \sum_{i=1}^n \log \tilde{f}_{R(n)}(\bsz_i, \bstheta) .
\end{equation}
Here, the cost for one evaluation of \(\tilde{L}_n\) in terms of evaluations of \(\varphi\) is \(n \cdot R(n)\).

A classical and flexible tool to deal with the numerical integration problems in \eqref{eqn_intro_quadrature} is Monte Carlo simulation (MC). Here, all weights are chosen uniformly as \(w_{j,r} = 1/r, j=1,\ldots,r\) and the points \(\bsv_{j,r}\) are drawn identically and independently from the probability distribution induced by the weight function \(\omega\). Since MC only requires weak assumptions on \(\varphi\), simulation-based estimation has become part of the standard econometrics toolkit. It is implemented in software packages and taught at graduate schools.
Comprehensive surveys can be found in \cite{Hajivassiliou1994}, \cite{GourierouxMonfort1996} and \cite{Train2009}.

Maximum simulated likelihood (MSL) is consistent under the usual assumptions if the number of simulation draws \(r\) increases with the sample size \(n\). In order to achieve asymptotic efficiency for identical sampling, \(r\) has to grow at least linearly in \(n\), i.e. \(r = R(n) \geq n\).\footnote{"Identical sampling" denotes the use of one draw-set for all likelihood contributions, i.e. the same quadrature rule is used for all samples $\bsz_i$. In contrast, "independent sampling" denotes the use of one draw-set per likelihood contribution, i.e. for each likelihood contribution an individual quadrature rule is applied. In case of independent sampling, \(r\) only has to increase faster than \(\sqrt{n}\), see for example \cite{Hajivassiliou1994}. A recent discussion about the difference between both methodologies can also be found in \cite{kristensen2017}.}

While asymptotically the simulation error disappears under the appropriate conditions, the computational burden imposed by a sufficient number of simulation draws to achieve well-behaved estimators can be cumbersome or prohibitive in practice. Many studies have found that approximation algorithms other than Monte-Carlo simulation can achieve much higher accuracy and better behaved estimators with a reduced level of computational costs. For one-dimensional problems that are sufficiently smooth, Gaussian quadrature \citep{Butler1982} is an obvious choice. For multivariate problems, quasi-Monte-Carlo methods like Halton draws \citep{Bhat2001} or other deterministic approaches like integration on sparse grids \citep{Heiss2008} have been applied successfully.

These deterministic approximation algorithms work well enough in practice to warrant their routine use in software packages like the mixed logit implementation of Stata. But while there is plenty of literature on properties of simulation-based approaches, little is known for estimators based on the other methods.
Some papers that use deterministic numerical integration methods other than pure simulation ignore the approximation error in the discussion of the estimator. This applies mainly to examples with one-dimensional integration problems which are tackled with Gaussian quadrature such as \cite{Butler1982}.
Other papers discuss the well-studied simulation-based estimators before arguing that their deterministic approaches tend to work better in practice, e.g. in \cite{Bhat2001} or in \cite{sandor2004}.
This might be due to the fact that the theoretical properties and prerequisites of deterministic approximation schemes are not yet very well understood.
\cite{Ackerberg2009} took a first step by giving a set of conditions for the approximated log likelihood contributions that imply consistency and asymptotic normality. However their results remain on the rather abstract level of log-likelihood approximation and provide no guidance  on how to check these conditions for specific applications and approximation algorithms. In particular, it remains unclear how the integration error for the approximation of \eqref{eqn_intro_integeral} by \eqref{eqn_intro_quadrature} influences the statistical properties of the approximated likelihood estimator. 

This paper aims at closing this gap and provides a general and unified discussion of the asymptotic properties of a large class of estimators that is based on a broad range of integration methods. The well-known results of simulation-based approaches emerge as special cases. We provide specific conditions under which these estimators are consistent and under which additional conditions the approximation error is irrelevant for their asymptotic distribution.
For example, we can derive that a logarithmic growth of the number of samples \(r\) is sufficient, if the convergence in \(r\) is fast enough. Therefore, in the setting of Gaussian quadrature, it suffices to have \(r = R(n) = \log(n)\). This is a huge reduction compared to \(R(n) \approx n\) or \(R(n) \approx \sqrt{n}\).
As an application and demonstration of our framework to a specific model, we deal with mixed logit models and the Butler-Moffitt model, which both are estimated using an approximation that is based on Gauss-Hermite quadrature or Gauss-Hermite sparse grids.

The remainder of this article is organized as follows: In Section \ref{sec_asymp_properties} we discuss conditions that imply the consistency and asymptotic normality of general extremum estimators.
In Section \ref{sec_tildef} we specialize on maximum approximated likelihood estimators and the assumptions that have to be made such that the conditions from the previous section are fulfilled.
Then, in Section \ref{s_rules}, we deal with specific integration algorithms, like Gaussian quadrature, quasi--Monte Carlo and sparse grids. In Section \ref{s_examples}, we put our theory into practice, by analyzing specific econometric models in the light of our results. This analysis is supplemented by numerical results in Section \ref{sec_pre_asymp}.

\section{Asymptotic theory with approximated objective functions} \label{sec_asymp_properties}

As the most general framework, we consider extremum or M-estimators as discussed in \citet{Newey1994}, NM hereafter. An M-estimator \(\hat{\bstheta}_{M}\) maximizes (or minimizes) some objective function \(Q_n(\bstheta)\), i.e.
\begin{equation}
\hat{\bstheta}_{M} = \arg\max_{\bstheta\in\Theta} Q_{n}(\bstheta) ,
\end{equation}
where $n$ refers to the number of samples $\bsz_1,\ldots, \bsz_n$ that contribute to \(Q_n(\bstheta)\).
Examples include least squares, maximum likelihood, GMM, and minimum distance.
Throughout this paper, we will maintain the assumption that \(\hat{\bstheta}_{M}\) would have all desired properties if we were able to compute the respective objective function \(Q_{n}(\bstheta)\) and therefore the estimator itself. NM comprehensively discuss the conditions to ensure these properties.

We are interested in the problems that arise if \(Q_n(\bstheta)\) cannot be evaluated analytically and therefore needs to be approximated. Let \(\tilde Q_n(\bstheta)\) denote the approximated objective function. Then, our approximated M-estimator is simply
\begin{equation}
\hat{\bstheta}_{AM} = \arg\max_{\bstheta\in\Theta} \tilde Q_{n}(\bstheta) .
\end{equation}
We now give general conditions to ensure consistency and the asymptotic distribution of
\(\hat{\bstheta}_{AM}\).
We start by recalling Theorem 2.1 of NM:

\begin{lemma} \label{extremum_estimator-consist}
  Assume that there is a function \(Q_0(\bstheta)\) such that
  (i) \(Q_0(\bstheta)\) is uniquely maximized at \(\bstheta_0\);
  (ii) \(\bstheta \in \Theta\) and \(\Theta\) is compact;
  (iii) \(Q_0(\bstheta)\) is continuous;
  (iv) the function \(Q_n(\bstheta)\) converges uniformly in probability to \(Q_0(\bstheta)\).
  Then, \(\hat{\bstheta}_M = \arg\max_{\bstheta \in \Theta}Q_n(\bstheta) \) is a consistent estimator of \(\bstheta_0\), i.e.
\(  \plimn \hat{\bstheta} = \bstheta_{0}.
\)
\end{lemma}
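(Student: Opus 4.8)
The plan is to prove consistency by the classical identification-plus-uniform-convergence argument, showing that the probability of \(\hat{\bstheta}_M\) falling into any prescribed neighborhood of \(\bstheta_0\) tends to one. Fix an arbitrary open neighborhood \(\mathcal{N} \subset \Theta\) of \(\bstheta_0\). First I would exploit the topological structure: since \(\Theta\) is compact by (ii), the set \(\Theta \setminus \mathcal{N}\) is compact, and since \(Q_0\) is continuous by (iii) it attains its supremum there. By the unique-maximizer assumption (i) this supremum is strictly below \(Q_0(\bstheta_0)\), so there exists a gap \(\delta > 0\) with
\[
\sup_{\bstheta \in \Theta \setminus \mathcal{N}} Q_0(\bstheta) \le Q_0(\bstheta_0) - \delta .
\]
This \(\delta\) is the quantitative separation that the rest of the argument converts into a statement about the random estimator.

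The second step is to transfer this deterministic gap to the sample objective using (iv). With probability approaching one, uniform convergence gives \(\sup_{\bstheta \in \Theta} |Q_n(\bstheta) - Q_0(\bstheta)| < \delta/3\). On this event I would run the standard chain of inequalities. Because \(\hat{\bstheta}_M\) maximizes \(Q_n\), we have \(Q_n(\hat{\bstheta}_M) \ge Q_n(\bstheta_0)\); combining this with the uniform bound at the two points \(\bstheta_0\) and \(\hat{\bstheta}_M\) yields
\[
Q_0(\hat{\bstheta}_M) > Q_n(\hat{\bstheta}_M) - \tfrac{\delta}{3} \ge Q_n(\bstheta_0) - \tfrac{\delta}{3} > Q_0(\bstheta_0) - \tfrac{2\delta}{3} .
\]
If \(\hat{\bstheta}_M\) were outside \(\mathcal{N}\), the gap from the first step would force \(Q_0(\hat{\bstheta}_M) \le Q_0(\bstheta_0) - \delta\), contradicting the display. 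Hence on this high-probability event \(\hat{\bstheta}_M \in \mathcal{N}\), and since \(\mathcal{N}\) was arbitrary this is exactly \(\plimn \hat{\bstheta}_M = \bstheta_0\).

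The step I expect to be the crux is the use of assumption (iv) in its \emph{uniform} form. The subtlety is that \(\hat{\bstheta}_M\) is itself random, so one cannot simply invoke pointwise convergence \(Q_n(\hat{\bstheta}_M) \to Q_0(\hat{\bstheta}_M)\); the uniform bound is precisely what lets me control \(Q_n\) at the random argument \(\hat{\bstheta}_M\) simultaneously with the fixed point \(\bstheta_0\). A secondary technical point I would address is that the argmax defining \(\hat{\bstheta}_M\) must exist and be measurable; existence follows once \(Q_n\) is at least upper semicontinuous on the compact set \(\Theta\), and measurability can be handled by a standard measurable-selection argument, so I would either impose this mild regularity or cite it, since it is not the substance of the result.
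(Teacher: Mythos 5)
Your proof is correct and is essentially the classical argument of Newey and McFadden (their Theorem 2.1), which is exactly what the paper invokes by citation for this lemma rather than re-proving it. The gap-plus-uniform-convergence chain you give --- the strict separation \(\sup_{\bstheta \in \Theta \setminus \mathcal{N}} Q_0(\bstheta) \le Q_0(\bstheta_0) - \delta\) from compactness, continuity and unique maximization, followed by the three-inequality argument that controls \(Q_n\) simultaneously at the random point \(\hat{\bstheta}_M\) and at \(\bstheta_0\) --- is the same standard proof.
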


We abstract from any misspecifications and other problems that could violate the conditions of Lemma \ref{extremum_estimator-consist} in order to focus on the inaccuracies introduced by the approximation of the objective function. Indeed, We can apply the same arguments as in Lemma \ref{extremum_estimator-consist} to the approximated M-estimator \(\hat\bstheta_{AM}\) if we can ensure that 
assumption (iv) also holds for the approximated objective function. The following Theorem \ref{thm_consist_L} states that this is the case as long as \(\tilde{Q}_{n}(\bstheta)\) converges uniformly in probability to the unavailable exact objective function \(Q_{n}(\bstheta)\).

\begin{thm}\label{thm_consist_L}
Assume that
 (i) the assumptions of Lemma \ref{extremum_estimator-consist} hold
 (ii) \(\tilde Q_n(\bstheta)\) converges uniformly in probability to \(Q_n(\bstheta)\), i.e. \(\plimn \sup_{\bstheta\in\Theta} \left|\tilde{Q}_{n}(\bstheta)-Q_{n}(\bstheta)\right| = 0.\)
Then, \(\hat{\bstheta}_{AM}\) is a consistent estimator of \(\bstheta_0\), i.e.
\begin{equation}
\plimn \hat{\bstheta}_{M} = \bstheta_{0}.
\end{equation}
\end{thm}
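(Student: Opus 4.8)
The plan is to reduce the statement directly to Lemma \ref{extremum_estimator-consist} by checking that its four conditions hold with the approximated objective \(\tilde{Q}_n\) in place of \(Q_n\), while keeping the same limit function \(Q_0\). Conditions (i)--(iii) of that lemma concern only \(Q_0\) and \(\bstheta_0\), neither of which is touched by the approximation; they therefore hold immediately by assumption (i) of the theorem. The only substantive step is to establish condition (iv) for \(\tilde{Q}_n\), namely that \(\tilde{Q}_n\) converges uniformly in probability to \(Q_0\). Once this is in hand, the lemma applies verbatim to \(\hat{\bstheta}_{AM}\) and delivers the claim.

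First I would split the deviation of \(\tilde{Q}_n\) from \(Q_0\) by the triangle inequality, taken uniformly over \(\Theta\):
\begin{equation}
\sup_{\bstheta\in\Theta} \left|\tilde{Q}_n(\bstheta) - Q_0(\bstheta)\right|
\;\le\;
\sup_{\bstheta\in\Theta} \left|\tilde{Q}_n(\bstheta) - Q_n(\bstheta)\right|
+ \sup_{\bstheta\in\Theta} \left|Q_n(\bstheta) - Q_0(\bstheta)\right|.
\end{equation}
The first term on the right-hand side tends to zero in probability by assumption (ii) of the theorem, and the second does so by assumption (iv) of Lemma \ref{extremum_estimator-consist}, which is available through assumption (i). To combine them, I would use the elementary stability of convergence in probability under addition: for any \(\varepsilon>0\), the event that the left-hand side exceeds \(\varepsilon\) is contained in the union of the events that each summand exceeds \(\varepsilon/2\), and both of these probabilities vanish as \(n\to\infty\). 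Hence condition (iv) holds for \(\tilde{Q}_n\), and Lemma \ref{extremum_estimator-consist} applied to \(\hat{\bstheta}_{AM}\) yields \(\plimn \hat{\bstheta}_{AM} = \bstheta_0\).

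Frankly, there is no serious obstacle here: the whole argument is a triangle inequality plus the closure of convergence in probability under sums, and all the genuine content is already packaged in Lemma \ref{extremum_estimator-consist}. The only point that warrants mild care is the ordering in the inequality above — one must bound the quantities uniformly in \(\bstheta\) \emph{before} passing to probabilities, so that the supremum of a sum is dominated by the sum of the suprema rather than conversely. This is exactly the step that lets the two separate uniform-convergence hypotheses be invoked independently, and it is the reason assumption (ii) is stated as uniform convergence of \(\tilde{Q}_n\) to \(Q_n\) rather than mere pointwise convergence.
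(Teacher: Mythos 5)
Your proposal is correct and follows essentially the same route as the paper: verify condition (iv) of Lemma \ref{extremum_estimator-consist} for \(\tilde{Q}_n\) via the triangle inequality \(\sup_{\bstheta}|\tilde{Q}_n-Q_0|\le\sup_{\bstheta}|\tilde{Q}_n-Q_n|+\sup_{\bstheta}|Q_n-Q_0|\), with the two terms handled by assumptions (ii) and (i) respectively. Your explicit union-bound argument for summing the two convergences in probability is a minor elaboration the paper leaves implicit, but the proof is the same.
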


\begin{proof}
We assume in \ref{thm_consist_L}(i) that the M-estimator with the exact objective function is consistent. We can use the same arguments as NM to show Lemma \ref{extremum_estimator-consist} if we can establish that \(\tilde Q_n(\bstheta)\) converges uniformly in probability to \(Q_0(\bstheta)\) (Assumption \ref{extremum_estimator-consist}(iv)). 
To see why this holds true, note that, by the triangle inequality for norms there holds

\begin{align}\label{ErrorComposition}
\sup_{\bstheta\in\Theta}\left|\tilde{Q}_{n}(\bstheta)-Q_{0}(\bstheta)\right|
& =
\sup_{\bstheta\in\Theta}\left|\big(\tilde{Q}_{n}(\bstheta)-Q_{n}(\bstheta)\big)+
\big(Q_{n}(\bstheta)-Q_{0}(\bstheta)\big)\right| \\
& \leq
\sup_{\bstheta\in\Theta}\left|\tilde{Q}_{n}(\bstheta)-Q_{n}(\bstheta)\right|+
\sup_{\bstheta\in\Theta}\left|Q_{n}(\bstheta)-Q_{0}(\bstheta)\right|.
\nonumber
\end{align}
Both terms converge to zero in probability: the first by assumption \ref{thm_consist_L}(ii), the second by assumption \ref{thm_consist_L}(i).
\end{proof}

Note that assumption \ref{thm_consist_L}(ii) requires the approximation accuracy to increase with \(n\). In the well-known example of approximation by Monte Carlo simulation, we can increase the number of simulation draws as \(n\rightarrow\infty\). We will come back to this more explicitly when we discuss specific approximations approaches.

accuracy parameter \(r\) to the number of observations \(n\) we introduce a function \(R: \N \to \N\). We will assume that \(R(n)\) is monotonically increasing, i.e. \(R(n+1) \geq R(n)\).

In order to derive the asymptotic distribution of \(\hat\bstheta_{AM}\), we again recall the relevant Theorem 3.1 of NM for extremum estimators \(\hat{\bstheta} = \argmax_\bstheta {Q}_n\).

\begin{lemma} \label{extremum_estimator-normal} 
  Suppose that the assumptions of Lemma \ref{extremum_estimator-consist} hold and that
  (i) \(\bstheta_{0} \in \text{interior}(\Theta)\);
  (ii) \(Q_n(\bstheta)\) is twice continuously differentiable in a neighborhood \(\mathcal{N}\) of \(\bstheta_{0}\);
  (iii) \(\sqrt{n}\nabla_{\bstheta}Q_n(\bstheta_{0}) \overset{d}{\rightarrow} N(0,\Sigma)\);
  (iv) there is \(H(\bstheta)\) that is continuous at \(\bstheta_0\) and \(\sup_{\bstheta\in\mathcal{N}}\left\Vert \nabla_{\bstheta\bstheta} Q_n(\bstheta)-H(\bstheta)\right\Vert \overset{p}{\rightarrow}0\);
  (v) \(H=H(\bstheta_0)\) is nonsingular. Then the M-estimator is asymptotically normal distributed, i.e
\(
  \sqrt{n}(\hat{\bstheta}-\bstheta_0) \overset{d}{\rightarrow} N(0,H^{-1}\Sigma H^{-1})
\).
\end{lemma}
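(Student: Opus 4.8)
The plan is to run the classical argument that linearizes the first-order conditions of an interior maximizer. First I would invoke the consistency already granted by Lemma~\ref{extremum_estimator-consist} (whose hypotheses are assumed here) together with assumption (i), $\bstheta_0 \in \text{interior}(\Theta)$: since $\hat\bstheta \convprob \bstheta_0$ and $\bstheta_0$ is interior, with probability approaching one $\hat\bstheta$ lies inside the neighborhood $\mathcal{N}$ on which $Q_n$ is twice continuously differentiable by (ii). On that event $\hat\bstheta$ is an interior maximizer, so the gradient vanishes, $\nabla_\bstheta Q_n(\hat\bstheta) = \boldzero$, and the whole limiting argument is carried out conditionally on these probability-one events.

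Next I would expand the score around $\bstheta_0$. Applying the mean value theorem row by row to $\nabla_\bstheta Q_n$ yields
\begin{equation}
\boldzero = \nabla_\bstheta Q_n(\hat\bstheta) = \nabla_\bstheta Q_n(\bstheta_0) + \nabla_{\bstheta\bstheta}Q_n(\bar\bstheta)\,(\hat\bstheta - \bstheta_0),
\end{equation}
where $\bar\bstheta$ lies on the segment joining $\hat\bstheta$ and $\bstheta_0$ (strictly, a possibly different intermediate point for each coordinate of the score). Granting invertibility of the Hessian, which I justify below, I would rearrange and rescale by $\sqrt{n}$ to get
\begin{equation}
\sqrt{n}(\hat\bstheta - \bstheta_0) = -\left[\nabla_{\bstheta\bstheta}Q_n(\bar\bstheta)\right]^{-1}\sqrt{n}\,\nabla_\bstheta Q_n(\bstheta_0).
\end{equation}

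The key step is to show $\nabla_{\bstheta\bstheta}Q_n(\bar\bstheta) \convprob H$. Since $\bar\bstheta$ is squeezed between $\hat\bstheta$ and $\bstheta_0$, consistency forces $\bar\bstheta \convprob \bstheta_0$, and by the triangle inequality
\begin{equation}
\left\|\nabla_{\bstheta\bstheta}Q_n(\bar\bstheta) - H\right\| \leq \sup_{\bstheta\in\mathcal{N}}\left\|\nabla_{\bstheta\bstheta}Q_n(\bstheta) - H(\bstheta)\right\| + \left\|H(\bar\bstheta) - H(\bstheta_0)\right\|,
\end{equation}
where the first term vanishes in probability by assumption (iv) and the second by continuity of $H$ at $\bstheta_0$. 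Because $H$ is nonsingular by (v) and matrix inversion is continuous at nonsingular matrices, with probability approaching one the Hessian is invertible and its inverse converges in probability to $H^{-1}$. Combining this with assumption (iii), $\sqrt{n}\,\nabla_\bstheta Q_n(\bstheta_0) \overset{d}{\rightarrow} N(\boldzero,\Sigma)$, via Slutsky's theorem gives
\begin{equation}
\sqrt{n}(\hat\bstheta - \bstheta_0) \overset{d}{\rightarrow} -H^{-1} N(\boldzero,\Sigma) = N(\boldzero,\, H^{-1}\Sigma H^{-1}),
\end{equation}
the sandwich form using the symmetry of $H$, hence of $H^{-1}$.

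I expect the main obstacle to be the careful handling of the random intermediate point $\bar\bstheta$. The mean value theorem fails for vector-valued maps at a single common point, so the expansion must be taken coordinate by coordinate, producing a Hessian matrix whose rows are evaluated at possibly distinct (but all consistent) intermediate points. The reason this causes no harm is precisely the \emph{uniform}-in-$\bstheta$ convergence in assumption (iv): it controls $\nabla_{\bstheta\bstheta}Q_n$ simultaneously over all of $\mathcal{N}$, and therefore at every such point at once, so each row still converges to the corresponding row of $H$. The remaining care is purely bookkeeping of the ``with probability approaching one'' qualifiers — interiority, twice-differentiability, and invertibility — so that each limiting statement is applied only on events whose probability tends to one.
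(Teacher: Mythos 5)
Your proposal is correct, but note that the paper itself offers no proof of this lemma: it is recalled verbatim as Theorem~3.1 of Newey--McFadden and used as a black box. Your argument --- probability-approaching-one reduction to an interior maximizer, row-by-row mean value expansion of the first-order condition, convergence of the intermediate-point Hessian via the uniform convergence in (iv) plus continuity of \(H\), and Slutsky with the symmetry of \(H\) giving the sandwich \(H^{-1}\Sigma H^{-1}\) --- is exactly the canonical proof of that cited theorem, including the correct handling of the distinct intermediate points that the vector-valued mean value theorem forces, so it faithfully reconstructs the proof the paper defers to.
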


To analyze our approximate M-estimator, we again assume that the conditions of this lemma hold and then give additional assumptions such that the same results apply to the approximate M-estimator \(\hat{\bstheta}_{AM}\).

\begin{thm}\label{thm_asynorm_L}
Assume that (i) the assumptions of Lemma \ref{extremum_estimator-normal} hold;
(ii) With probability one, \(\tilde{Q}_n(\bstheta)\) is twice continuously differentiable in a neighborhood \(\mathcal{N}\) of \(\bstheta_{0}\);
(iii) \(\plimn \sup_{\bstheta\in\Theta} \left|\tilde{Q}_{n}(\bstheta)-Q_{n}(\bstheta)\right| = 0\);
(iv) \(\plimn \sqrt{n}\,\sup_{\bstheta\in\Theta} \left\Vert\nabla_{\bstheta}\tilde{Q}_{n}(\bstheta) - \nabla_{\bstheta}Q_{n}(\bstheta)\right\Vert = 0\);
(v) \(\plimn \sup_{\bstheta\in\Theta} \left\Vert\nabla_{\bstheta\bstheta}\tilde{Q}_{n}(\bstheta) - \nabla_{\bstheta\bstheta}Q_{n}(\bstheta)\right\Vert = 0\).
Then
\[
\sqrt{n}\left(\hat{\bstheta}_{AM}-\bstheta_{0}\right)\overset{d}{\rightarrow}N(\boldzero,H^{-1}\Sigma H^{-1}), \quad (\text{as } n \to \infty)
\]
so \(\hat{\bstheta}_{AM}\) has the same limiting distribution as the infeasible \(\hat{\bstheta}_{M}\).
\end{thm}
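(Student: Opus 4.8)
The plan is to show that $\hat{\bstheta}_{AM}$ admits the same first-order asymptotic expansion as the infeasible estimator by transferring, term by term, each ingredient of Lemma \ref{extremum_estimator-normal} from the exact objective $Q_n$ to its approximation $\tilde{Q}_n$, with assumptions \ref{thm_asynorm_L}(iv) and \ref{thm_asynorm_L}(v) controlling the gradient and Hessian errors at the appropriate rates. First I would establish consistency: hypotheses \ref{thm_asynorm_L}(i) and \ref{thm_asynorm_L}(iii) are precisely the conditions of Theorem \ref{thm_consist_L}, so $\plimn \hat{\bstheta}_{AM} = \bstheta_0$. Because $\bstheta_0$ is interior (Lemma \ref{extremum_estimator-normal}(i)) and $\tilde{Q}_n$ is twice continuously differentiable near $\bstheta_0$ with probability one (assumption \ref{thm_asynorm_L}(ii)), with probability approaching one the maximizer satisfies the first-order condition $\nabla_{\bstheta}\tilde{Q}_n(\hat{\bstheta}_{AM}) = \boldzero$.

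Next I would expand each component of $\nabla_{\bstheta}\tilde{Q}_n$ about $\bstheta_0$ by the mean value theorem, obtaining
\[
\boldzero = \nabla_{\bstheta}\tilde{Q}_n(\bstheta_0) + \bar{H}_n\,(\hat{\bstheta}_{AM} - \bstheta_0),
\]
where the $k$-th row of $\bar{H}_n$ is the corresponding row of $\nabla_{\bstheta\bstheta}\tilde{Q}_n$ evaluated at an intermediate point $\bar{\bstheta}_k$ on the segment joining $\hat{\bstheta}_{AM}$ and $\bstheta_0$. Once $\bar{H}_n$ is shown to be invertible with probability approaching one, this rearranges to
\[
\sqrt{n}\,(\hat{\bstheta}_{AM} - \bstheta_0) = -\,\bar{H}_n^{-1}\,\sqrt{n}\,\nabla_{\bstheta}\tilde{Q}_n(\bstheta_0),
\]
so the limiting distribution follows from the joint behaviour of the two factors on the right.

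For the score factor I would write $\sqrt{n}\,\nabla_{\bstheta}\tilde{Q}_n(\bstheta_0) = \sqrt{n}\,\nabla_{\bstheta}Q_n(\bstheta_0) + \sqrt{n}\big(\nabla_{\bstheta}\tilde{Q}_n(\bstheta_0) - \nabla_{\bstheta}Q_n(\bstheta_0)\big)$: the first term is asymptotically $N(\boldzero,\Sigma)$ by Lemma \ref{extremum_estimator-normal}(iii), while the second is bounded in norm by $\sqrt{n}\,\sup_{\bstheta\in\Theta}\|\nabla_{\bstheta}\tilde{Q}_n - \nabla_{\bstheta}Q_n\| \overset{p}{\rightarrow} 0$ by assumption \ref{thm_asynorm_L}(iv), so Slutsky's theorem gives $\sqrt{n}\,\nabla_{\bstheta}\tilde{Q}_n(\bstheta_0) \overset{d}{\rightarrow} N(\boldzero,\Sigma)$. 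For the Hessian factor, consistency forces each $\bar{\bstheta}_k \overset{p}{\rightarrow} \bstheta_0$, hence eventually into $\mathcal{N}$; decomposing $\nabla_{\bstheta\bstheta}\tilde{Q}_n(\bar{\bstheta}_k)$ as $(\nabla_{\bstheta\bstheta}\tilde{Q}_n - \nabla_{\bstheta\bstheta}Q_n)(\bar{\bstheta}_k) + (\nabla_{\bstheta\bstheta}Q_n(\bar{\bstheta}_k) - H(\bar{\bstheta}_k)) + H(\bar{\bstheta}_k)$, the three pieces vanish or converge by assumption \ref{thm_asynorm_L}(v), by Lemma \ref{extremum_estimator-normal}(iv), and by continuity of $H$ at $\bstheta_0$, respectively, so $\bar{H}_n \overset{p}{\rightarrow} H$. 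Since $H$ is nonsingular (Lemma \ref{extremum_estimator-normal}(v)), the continuous-mapping theorem yields $\bar{H}_n^{-1} \overset{p}{\rightarrow} H^{-1}$, in particular invertibility with probability approaching one.

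A concluding application of Slutsky's theorem then combines the factors into $\sqrt{n}(\hat{\bstheta}_{AM} - \bstheta_0) \overset{d}{\rightarrow} -H^{-1}N(\boldzero,\Sigma) = N(\boldzero, H^{-1}\Sigma H^{-1})$, the last equality using symmetry of $H$. The delicate point is the vector-valued mean value step: the gradient has no single intermediate point, only one per coordinate, so one must carry the matrix $\bar{H}_n$ of stacked rows and verify that every $\bar{\bstheta}_k$ converges to $\bstheta_0$ before invoking the Hessian limit from Lemma \ref{extremum_estimator-normal}(iv). The remaining work is routine triangle-inequality and Slutsky bookkeeping that parallels the NM proof of Lemma \ref{extremum_estimator-normal}, the only genuinely new input being that assumption \ref{thm_asynorm_L}(iv) supplies gradient-error control at the $\sqrt{n}$ rate; a weaker, merely uniform-in-probability bound would be destroyed by the $\sqrt{n}$ scaling.
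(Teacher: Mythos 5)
Your proof is correct, and its genuinely load-bearing steps coincide exactly with the paper's: the score is handled by the same decomposition \(\sqrt{n}\nabla_{\bstheta}\tilde{Q}_n(\bstheta_0)=\sqrt{n}\nabla_{\bstheta}Q_n(\bstheta_0)+\sqrt{n}\bigl(\nabla_{\bstheta}\tilde{Q}_n(\bstheta_0)-\nabla_{\bstheta}Q_n(\bstheta_0)\bigr)\) with assumption (iv) killing the second term at the \(\sqrt{n}\) rate, and the Hessian by the same triangle-inequality splitting through \(\nabla_{\bstheta\bstheta}Q_n\) and \(H\) using assumption (v) and Lemma \ref{extremum_estimator-normal}(iv). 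The difference is structural: the paper stops there, verifying the five hypotheses of Lemma \ref{extremum_estimator-normal} for \(\tilde{Q}_n\) and then invoking that lemma as a black box, whereas you unroll its proof --- first-order condition, coordinate-wise mean-value expansion with the stacked intermediate-point matrix \(\bar{H}_n\), convergence \(\bar{H}_n\convprob H\), and the final Slutsky sandwich. Your route buys a self-contained argument that makes explicit the genuinely delicate point (one intermediate point per coordinate, each of which must be shown to converge to \(\bstheta_0\) before the Hessian limit applies), at the cost of reproving standard NM machinery; the paper's route is shorter and cleanly isolates what is actually new, namely that conditions (iii)--(v) transfer the lemma's hypotheses from \(Q_n\) to \(\tilde{Q}_n\). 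One shared loose end, not a gap on your side: assumption (v) takes the supremum over \(\Theta\) while the Hessian must be controlled on \(\mathcal{N}\) (your \(\bar{\bstheta}_k\) need only lie in \(\mathcal{N}\) eventually), so both arguments implicitly use \(\mathcal{N}\subseteq\Theta\) or an obvious restatement of (v) on \(\mathcal{N}\).
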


\begin{proof}
We can apply the same arguments used by NM to show Lemma \ref{extremum_estimator-normal}, applied to our approximated objective function \(\tilde Q_n(\bstheta)\). By our assumptions \ref{thm_asynorm_L}(i) and \ref{thm_asynorm_L}(ii), assumptions \ref{extremum_estimator-normal}(i), \ref{extremum_estimator-normal}(ii), and \ref{extremum_estimator-normal}(v) are directly implied. To check \ref{extremum_estimator-normal}(iii), we write
\begin{align}
\sqrt{n}\nabla_{\bstheta}\tilde Q_n(\bstheta_{0}) &=
\sqrt{n}\nabla_{\bstheta}Q_n(\bstheta_{0}) +
\sqrt{n}\left(\nabla_{\bstheta}\tilde Q_n(\bstheta_{0}) -
\nabla_{\bstheta}Q_n(\bstheta_{0})\right) .
\end{align}
The first term converges in distribution to \(N(0,\Sigma)\) by assumption \ref{thm_asynorm_L}(i) and \ref{extremum_estimator-normal}(iii). The second term converges in probability to zero by \ref{thm_asynorm_L}(iv). Finally, we confirm \ref{extremum_estimator-normal}(iv) by noting that
\begin{align}
\sup_{\bstheta\in\mathcal{N}}\big\Vert \nabla_{\bstheta\bstheta} \tilde Q_n(\bstheta)-H(\bstheta)\big\Vert
&\leq
\sup_{\bstheta\in\mathcal{N}}\big\Vert \nabla_{\bstheta\bstheta} \tilde Q_n(\bstheta)-\nabla_{\bstheta\bstheta}Q_n(\bstheta)\big\Vert +
\sup_{\bstheta\in\mathcal{N}}\big\Vert \nabla_{\bstheta\bstheta} Q_n(\bstheta)-H(\bstheta)\big\Vert
\end{align}
is implied by the triangle inequality. Both terms converge to zero in probability: the first by assumption \ref{thm_asynorm_L}(v), the second by assumption \ref{thm_asynorm_L}(i).
\end{proof}

Arguably the most important condition for deriving the asymptotic distribution is assumption \ref{thm_asynorm_L}(iv). It not only requires the approximated gradient to converge to the exact value, but the rate of convergence also needs to be faster than \(1/\sqrt n\).

\section{Maximum approximated likelihood} \label{sec_tildef}

In the last section we discussed general extremum estimators with approximated objective functions. These results are similar to those of \cite{Ackerberg2009} who study general approximation algorithms for maximum likelihood estimation. For the remainder of this paper, we focus on the case of maximum likelihood. This covers a large share of the applications of approximate M-estimation and allows us to be more specific. We will derive general conditions for the likelihood contributions and the approximation algorithm to ensure favorable properties of the resulting MAL estimator.

\subsection{Asymptotic analysis with respect to likelihood contributions}

We consider an i.i.d. random sample \([\bsz_i\in \samplespace;\ i=1,\dots,n]\) from a population distribution characterized by the family of probability mass or density functions \(f(\bsz; \bstheta_0)\) and the sample space \(\samplespace \subset \R^d\). Here, \(\bsz\) includes all variables. In most econometric models, \(\bsz\) includes some ``endogenous'' variables \(\bsy\) and some ``exogenous'' variables \(\bsx\). In these cases, \(f(\bsz; \bstheta_0)\) is actually conditional on \(\bsx\). For notational convenience and consistency with the literature, we will not explicitly make this distinction. The exact log likelihood function is
\begin{equation}\label{eq:llf}
Q_n(\bstheta)=L_n (\bstheta) = \frac 1 n \sum_{i=1}^n \log f(\bsz_i; \bstheta),
\end{equation}
where the individual likelihood contributions of sample \(\bsz_i\) are given by \(f(\bsz_i; \bstheta)\).
The maximum likelihood estimator \(\hat{\bstheta}_{ML}\) maximizes \(L_n (\bstheta)\), i.e. \(\hat{\bstheta}_{ML} = \argmax_\bstheta L_n (\bstheta)\).
Moreover, in the context of the preceding section, we have \(\plim_{n \to \infty} Q_n(\bstheta) = Q_0(\bstheta)\), where \(Q_0\) was introduced in Lemma \ref{extremum_estimator-consist} and is maximized by the true parameter \(\bstheta_0\).

\citet[Theorems 2.5, 3.3]{Newey1994} provide conditions for the ML estimator to have favorable properties like consistency and asymptotic normality. We recall them in the following, treating the consistency of the ML estimator first.
\begin{lemma}\label{thm_ML_consist}
Assume that (i) For all \(\bstheta\neq\bstheta_{0}\), we have \(f(\bsy;\bstheta)\neq f(\bsy;\bstheta_{0})\);
(ii) \(\bstheta_{0}\in\Theta\) and \(\Theta\) is compact;
(iii) \(\log f(\bsy;\bstheta)\) is continuous at each \(\bstheta\in\Theta\) with probability one;
(iv) \(\mathbb{E}_\bsy[\sup_{\bstheta\in\Theta}|\log f(\bsy;\bstheta)|]<\infty\).
Then, \(\plim_{n\rightarrow \infty} \hat{\bstheta}_{ML}=\bstheta_{0}\) (consistency).
\end{lemma}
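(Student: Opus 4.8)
The plan is to verify the four hypotheses of Lemma \ref{extremum_estimator-consist} for the choice $Q_n(\bstheta) = L_n(\bstheta)$ together with the population objective $Q_0(\bstheta) := \mathbb{E}_\bsy[\log f(\bsy;\bstheta)]$, which is well defined and finite for every $\bstheta \in \Theta$ by the dominance assumption (iv). Hypothesis (ii) of the general lemma (compactness) is immediate from assumption (ii) here. The remaining three hypotheses of Lemma \ref{extremum_estimator-consist}---continuity of $Q_0$, unique maximization at $\bstheta_0$, and uniform convergence of $Q_n$ to $Q_0$---split naturally into an analytic part and an identification part.

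First I would establish hypothesis (iv) of Lemma \ref{extremum_estimator-consist}, namely the uniform law of large numbers $\plim_{n\to\infty} \sup_{\bstheta\in\Theta} |L_n(\bstheta) - Q_0(\bstheta)| = 0$. Assumptions (ii), (iii) and (iv) are exactly the hypotheses of the standard ULLN for i.i.d. data (compact parameter space, almost-sure continuity of the summand in $\bstheta$, and an integrable envelope $\sup_{\bstheta\in\Theta} |\log f(\bsy;\bstheta)|$); this is a uniform law of large numbers of the type in NM (their Lemma 2.4). The same argument simultaneously yields continuity of $Q_0$ on $\Theta$ by dominated convergence applied to $\mathbb{E}_\bsy[\log f(\bsy;\bstheta)]$, which is hypothesis (iii) of the general lemma. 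This is the technically heaviest step, since it requires controlling the supremum over the whole parameter space rather than pointwise behaviour; the envelope condition (iv) is precisely what makes the passage from pointwise to uniform convergence go through.

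The conceptual core is hypothesis (i), that $Q_0$ is uniquely maximized at $\bstheta_0$. Here I would invoke the information inequality. For any $\bstheta \neq \bstheta_0$, write $Q_0(\bstheta) - Q_0(\bstheta_0) = \mathbb{E}_{\bstheta_0}\!\left[\log\frac{f(\bsy;\bstheta)}{f(\bsy;\bstheta_0)}\right]$, where the expectation is taken under the true density $f(\cdot;\bstheta_0)$. Since $\log$ is strictly concave, Jensen's inequality gives $\mathbb{E}_{\bstheta_0}[\log(f(\bsy;\bstheta)/f(\bsy;\bstheta_0))] \leq \log \mathbb{E}_{\bstheta_0}[f(\bsy;\bstheta)/f(\bsy;\bstheta_0)] = \log \int f(\bsy;\bstheta)\,\rd\bsy = 0$, so $Q_0(\bstheta) \leq Q_0(\bstheta_0)$. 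To upgrade this to a strict inequality I would use the identification assumption (i): if equality held, strict concavity forces the ratio $f(\bsy;\bstheta)/f(\bsy;\bstheta_0)$ to be constant almost surely under $\bstheta_0$, and integrating shows that this constant must equal one, i.e. $f(\bsy;\bstheta) = f(\bsy;\bstheta_0)$ almost everywhere, contradicting (i). Hence $Q_0(\bstheta) < Q_0(\bstheta_0)$ for every $\bstheta \neq \bstheta_0$.

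With all four hypotheses of Lemma \ref{extremum_estimator-consist} verified, consistency $\plim_{n\to\infty}\hat{\bstheta}_{ML} = \bstheta_0$ follows at once. The main obstacle I anticipate is the strictness of the identification step: Jensen's inequality only delivers a weak inequality, and ruling out the degenerate case (ratio constant almost surely) cleanly requires combining strict concavity of the logarithm with the normalization $\int f(\bsy;\bstheta)\,\rd\bsy = 1$. The uniform convergence step, while technical, is then a direct appeal to a standard ULLN once the envelope condition is in hand.
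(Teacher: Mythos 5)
Your proposal is correct and follows exactly the route the paper intends: the paper does not prove this lemma itself but quotes it as Theorem 2.5 of Newey and McFadden, whose proof is precisely your argument---verify the hypotheses of the general consistency result (Lemma \ref{extremum_estimator-consist}) by using the information inequality (Jensen's inequality plus the identification condition, their Lemma 2.2) for unique maximization of $Q_0(\bstheta)=\mathbb{E}_\bsy[\log f(\bsy;\bstheta)]$, and the uniform law of large numbers under compactness, almost-sure continuity, and the integrable envelope (their Lemma 2.4) for continuity of $Q_0$ and uniform convergence. The only cosmetic refinement is that $\mathbb{E}_{\bstheta_0}[f(\bsy;\bstheta)/f(\bsy;\bstheta_0)]$ equals the integral of $f(\bsy;\bstheta)$ over the support of $f(\cdot;\bstheta_0)$, hence is $\leq 1$ rather than $=1$, which only strengthens the inequality and leaves your strictness argument intact.
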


Next, we recall the following result on the asymptotic normality of the ML estimator.

\begin{lemma}\label{thm_ML_distr}
Assume that (i) the assumptions of Lemma \ref{thm_ML_consist} hold and that
(ii) \(\bstheta_{0}\in\text{interior}(\Theta)\);
(iii) \(f(\bsy;\bstheta)\) is twice continuously differentiable and \(f(\bsy;\bstheta)>0\) in a neighborhood \(\mathcal{N}\) of \(\bstheta_{0}\);
(iv) \(\int\sup_{\boldsymbol{\theta\in\mathcal{N}}}\left\Vert
\nabla_{\bstheta}f(\bsy;\bstheta)\right\Vert
\,d\bsy<\infty\)
and
\(\int\sup_{\boldsymbol{\theta\in\mathcal{N}}}\left\Vert
\nabla_{\bstheta\bstheta}f(\bsy;\bstheta
)\right\Vert \,d\bsy<\infty\);
(v)
\(\mathcal{I}=\mathbb{E}_\bsy \left[\nabla_{\bstheta}f(\bsy;\bstheta
_{0})\left(\nabla_{\bstheta}f(\bsy;\bstheta_{0}
)\right)^{T}\right]\)
exists and is nonsingular,
(vi) \(\mathbb{E}_\bsy \left[\sup_{\boldsymbol{\theta\in\mathcal{N}}}\left\Vert
\nabla_{\bstheta}\log
f(\bsy;\bstheta_{0})\right\Vert \right]<\infty\).
Then
\(\sqrt{n}\left(\hat{\bstheta}_{ML}-\bstheta_{0
}\right)\overset{d}{\rightarrow}N\left(0,\mathcal{I}^{-1}\right)\) (asymptotic normality).
\end{lemma}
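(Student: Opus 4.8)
The plan is to obtain Lemma \ref{thm_ML_distr} as the maximum-likelihood specialization of the general extremum-estimator result, Lemma \ref{extremum_estimator-normal}, taking $Q_n = L_n$ as in \eqref{eq:llf}. Consistency is already delivered by Lemma \ref{thm_ML_consist}, so the task reduces to verifying conditions (i)--(v) of Lemma \ref{extremum_estimator-normal} and then identifying the resulting sandwich matrix $H^{-1}\Sigma H^{-1}$ with $\mathcal{I}^{-1}$. Conditions (i) and (ii) of Lemma \ref{extremum_estimator-normal} are immediate: interiority \ref{thm_ML_distr}(ii) gives the former, and twice continuous differentiability of $L_n$ on $\mathcal{N}$ follows from \ref{thm_ML_distr}(iii), since $\log f$ is twice continuously differentiable wherever $f>0$.

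The core of the argument is the score CLT, condition \ref{extremum_estimator-normal}(iii). Writing the individual score as $s(\bsz;\bstheta) = \nabla_\bstheta\log f(\bsz;\bstheta) = \nabla_\bstheta f(\bsz;\bstheta)/f(\bsz;\bstheta)$, I would first show the scores are centered at $\bstheta_0$: differentiating the identity $\int f(\bsy;\bstheta)\,d\bsy = 1$ under the integral sign -- an interchange justified by the domination assumption \ref{thm_ML_distr}(iv) -- gives $\int \nabla_\bstheta f(\bsy;\bstheta_0)\,d\bsy = 0$, i.e. $\mathbb{E}_\bsy[s(\bsy;\bstheta_0)] = 0$. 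Since the sample is i.i.d. and, by \ref{thm_ML_distr}(v), the scores have finite covariance $\mathcal{I}$, the Lindeberg--L\'evy CLT yields $\sqrt{n}\,\nabla_\bstheta L_n(\bstheta_0) = n^{-1/2}\sum_{i=1}^n s(\bsz_i;\bstheta_0) \overset{d}{\rightarrow} N(0,\mathcal{I})$, so $\Sigma = \mathcal{I}$.

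For conditions \ref{extremum_estimator-normal}(iv) and (v) I would set $H(\bstheta) = \mathbb{E}_\bsy[\nabla_{\bstheta\bstheta}\log f(\bsy;\bstheta)]$ and invoke a uniform law of large numbers (of the NM type) on $\mathcal{N}$, whose hypotheses -- continuity of the Hessian from \ref{thm_ML_distr}(iii) and an integrable envelope -- are underwritten by the domination in \ref{thm_ML_distr}(iv) together with \ref{thm_ML_distr}(vi); continuity of $H$ at $\bstheta_0$ follows likewise. To evaluate $H(\bstheta_0)$ I would use the information-matrix equality: differentiating $\int \nabla_\bstheta f\,d\bsy = 0$ once more gives $\int \nabla_{\bstheta\bstheta} f(\bsy;\bstheta_0)\,d\bsy = 0$, while the quotient rule gives $\nabla_{\bstheta\bstheta}\log f = \nabla_{\bstheta\bstheta}f/f - (\nabla_\bstheta\log f)(\nabla_\bstheta\log f)^{T}$; taking expectations at $\bstheta_0$ then yields $H(\bstheta_0) = -\mathcal{I}$, nonsingular by \ref{thm_ML_distr}(v), establishing condition \ref{extremum_estimator-normal}(v). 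Substituting $\Sigma = \mathcal{I}$ and $H = -\mathcal{I}$ into the sandwich formula collapses it to $(-\mathcal{I})^{-1}\mathcal{I}(-\mathcal{I})^{-1} = \mathcal{I}^{-1}$, the claimed limiting variance.

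The main obstacle is the repeated interchange of differentiation and integration: both the centering of the score and the information-matrix equality rest on passing $\nabla_\bstheta$ and $\nabla_{\bstheta\bstheta}$ through $\int \cdot\,d\bsy$, and this dominated-convergence (Leibniz) justification -- precisely the role of the integrable-envelope conditions in \ref{thm_ML_distr}(iv) -- is the delicate point, whereas the CLT and the uniform LLN are then routine. I would also flag an apparent typo in \ref{thm_ML_distr}(v): for $\mathcal{I}$ to be the score covariance that enters the asymptotic variance, the outer product should be formed from the \emph{log}-derivative $\nabla_\bstheta\log f$ rather than from $\nabla_\bstheta f$.
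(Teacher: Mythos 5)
The paper never proves this lemma itself: it is recalled, with citation, from \cite{Newey1994} (their Theorems 2.5 and 3.3), so the only ``proof'' in the paper is that reference. Your argument is precisely the one Newey and McFadden use for their Theorem 3.3: specialize the general normality result (Lemma \ref{extremum_estimator-normal}) to \(Q_n = L_n\), center the score by differentiating \(\int f(\bsy;\bstheta)\,d\bsy = 1\) under the integral (justified by the envelope in (iv)), get \(\Sigma = \mathcal{I}\) from the Lindeberg--L\'evy CLT, obtain \(H(\bstheta_0) = -\mathcal{I}\) from a uniform LLN plus the information-matrix equality, and collapse the sandwich to \(\mathcal{I}^{-1}\); so your proposal is correct and is, in effect, the proof the paper outsources. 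Your flag on condition (v) is also right: in Newey--McFadden the outer product is formed from \(\nabla_\bstheta \log f\), not \(\nabla_\bstheta f\). One thing you did not flag: condition (vi) is garbled in the same way --- the original hypothesis is \(\mathbb{E}_\bsy\bigl[\sup_{\bstheta\in\mathcal{N}}\bigl\Vert \nabla_{\bstheta\bstheta}\log f(\bsy;\bstheta)\bigr\Vert\bigr] < \infty\), a \emph{Hessian} envelope with \(\bstheta\), not \(\bstheta_0\), inside the norm. As literally stated, (iv) and (vi) do not supply the integrable envelope that your uniform LLN for \(\nabla_{\bstheta\bstheta}\log f\) requires: the Hessian of \(\log f\) involves \(\nabla_{\bstheta\bstheta}f/f\) and \((\nabla_\bstheta f/f)(\nabla_\bstheta f/f)^{T}\), and (iv) only bounds integrals \(d\bsy\) of the numerators, not expectations of these ratios. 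So that step of your proof silently relies on the corrected condition --- which is the right reading of the lemma, but worth stating explicitly.
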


If the likelihood contributions \(f(\bsz_i; \bstheta)\) cannot be computed analytically, we need to approximate them using some algorithm. The approximated likelihood contributions are denoted by \(\tilde f_r(\bsz_i; \bstheta), i=1,\dots ,n\) and depend on an accuracy parameter \(r\), but not on the sample $\bsz_i$, i.e. the same approximation approach to $f$ is used for all likelihood contributions.
For maximum simulated likelihood, \(r\) might be the number of simulation draws. In general, it determines the amount of approximation error as well as the computational costs.

Theorems \ref{thm_consist_L} and \ref{thm_asynorm_L} require the approximated objective function \(\tilde Q_n(\bstheta)\) to converge to the exact function \(Q_n(\bstheta)\) as \(n\rightarrow\infty\). In general, this requires \(r\) to increase with \(n\). Therefore, it is not sufficient to keep \(r\) fixed for all \(n\), but to consider a sequence of approximation functions \((\tilde{f}_{r}(\bsz_{i};\bstheta))_{r=1}^\infty\) which, in a sense that will be clarified later, converge to the exact \(f(\bsz_{i};\bstheta)\) as \(r\rightarrow\infty\).
Then, the approximated log-likelihood
\begin{equation}
 \tilde{L}_{n,r}(\bstheta) := \frac{1}{n} \sum_{i=1}^n \log \tilde{f}_r(\bsz_i; \bstheta)
\end{equation}
fulfills \(\lim_{n,r \to \infty} \tilde{L}_{n,r}(\bstheta) = Q_0(\bstheta)\). In practice for any finite \(n \in \N\), we need to choose a finite accuracy level \(r\). Choosing \(r\) too large will result in unneccesary cost, while using a too small \(n\) will result in an additional error contribution. This trade-off will be the subject of the remainder of this section.

In order to link the accuracy parameter \(r\) to the number of observations \(n\) we introduce a function \(R: \N \to \N\). We will assume that \(R(n)\) is monotonically increasing, i.e. \(R(n+1) \geq R(n)\) and \(\lim_{n \to \infty} R(n) = \infty\). The speed with which \(R(n)\) needs to increase with \(n\) will depend on the specific approximation algorithms as will be discussed below.

Then, the maximum approximated likelihood estimator \(\hat{\bstheta}_{MAL}\) maximizes
\begin{equation} 
\tilde Q_n(\bstheta)=\tilde{L}_n (\bstheta) = \frac 1 n \sum_{i=1}^n \log \tilde{f}_{R(n)}(\bsz_i; \bstheta). \label{eq_Ltilde}
\end{equation}

We now discuss general properties of \(\tilde{f}_{R(n)}(\bsz_i; \bstheta)\) that imply the consistency and the asymptotic distribution of
\(\hat\bstheta_{MAL}\). In order to focus on the effects of approximation, we assume that the infeasible ML estimator would have all desired properties if it were available, i.e. that the assumptions of Lemma \ref{thm_ML_consist} and \ref{thm_ML_distr} hold.

In addition to the consistency of the ML estimator, the only additional assumption we need for consistency of \(\hat{\bstheta}_{MAL}\) according to Theorem \ref{thm_consist_L} is the uniform convergence in probability of the approximation \(\tilde{L}_{n}(\bstheta)\) to the true, but intractable \(L_{n}(\bstheta)\).

\begin{thm}\label{thm_consist_f}
Assume that
(i) the conditions of Lemma \ref{thm_ML_consist} hold and that
(ii) there exists \(\bar{\delta} > 0\) such that for all \(\bsz \in \samplespace \) and all \(\bstheta \in \Theta\) it holds that \(f(\bsz, \bstheta) \geq \bar{\delta}\);
(iii) \( \lim_{r \to \infty} \sup_{\bsz\in\samplespace,\bstheta\in\Theta}\left|\tilde{f}_{r}(\bsz, \bstheta) - f(\bsz, \bstheta)\right| = 0\);
(iv) \(R(n)\) is monotonically increasing in \(n\) and \(\lim_{n \to \infty} R(n) = \infty\).
Then, \(\hat{\bstheta}_{MAL}\) is a consistent estimator of \(\bstheta_0\), i.e.
\begin{equation}
\hat{\bstheta}_{MAL}\convprob \bstheta_{0}.
\end{equation}
\end{thm}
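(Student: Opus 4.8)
The plan is to reduce the claim to Theorem \ref{thm_consist_L}, instantiated with the exact objective $Q_n = L_n$ and its approximation $\tilde Q_n = \tilde L_n$. Condition (i) of Theorem \ref{thm_consist_L} — the assumptions of Lemma \ref{extremum_estimator-consist} — is handed to us directly by hypothesis (i): the conditions of Lemma \ref{thm_ML_consist} are precisely those that make $L_n$ converge uniformly in probability to a continuous limit $Q_0$ that is uniquely maximized at $\bstheta_0$ over the compact set $\Theta$. Hence the only thing left to verify is condition (ii) of Theorem \ref{thm_consist_L}, namely the uniform-in-probability convergence $\plimn \sup_{\bstheta\in\Theta} |\tilde L_n(\bstheta) - L_n(\bstheta)| = 0$.

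First I would push the supremum inside the average and collapse to a single likelihood contribution, bounding
\[
\sup_{\bstheta\in\Theta}\left|\tilde L_n(\bstheta) - L_n(\bstheta)\right|
\;\leq\;
\frac1n \sum_{i=1}^n \sup_{\bstheta\in\Theta}\left|\log\tilde f_{R(n)}(\bsz_i;\bstheta) - \log f(\bsz_i;\bstheta)\right|
\;\leq\;
\sup_{\bsz\in\samplespace,\,\bstheta\in\Theta}\left|\log\tilde f_{R(n)}(\bsz;\bstheta) - \log f(\bsz;\bstheta)\right|.
\]
The crucial step is then to convert the uniform convergence of $\tilde f_r$ to $f$ supplied by hypothesis (iii) into uniform convergence of the \emph{logarithms}, using the lower bound $f \geq \bar\delta$ from hypothesis (ii). By (iii) and (iv), for all $n$ large enough $\sup_{\bsz,\bstheta}|\tilde f_{R(n)} - f| < \bar\delta/2$, which forces $\tilde f_{R(n)} \geq \bar\delta/2$ everywhere; since $\log$ is Lipschitz with constant $2/\bar\delta$ on $[\bar\delta/2,\infty)$, the mean value theorem yields the deterministic estimate
\[
\sup_{\bsz,\bstheta}\left|\log\tilde f_{R(n)} - \log f\right|
\;\leq\;
\frac{2}{\bar\delta}\,\sup_{\bsz,\bstheta}\left|\tilde f_{R(n)} - f\right|.
\]

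To finish, I would invoke $R(n)\to\infty$ from hypothesis (iv): as $n\to\infty$ the right-hand side tends to $0$ by (iii). The key point is that this entire bound is \emph{deterministic} — it is uniform over every $\bsz\in\samplespace$ and uses no randomness — so the convergence holds surely and hence a fortiori in probability. Thus condition (ii) of Theorem \ref{thm_consist_L} is confirmed, and consistency of $\hat\bstheta_{MAL}$ follows immediately.

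I expect the middle step to be the main obstacle, namely controlling $|\log\tilde f - \log f|$ where $\tilde f$ could a priori be close to zero and the logarithm is unbounded. This is exactly what hypothesis (ii) rules out, and coupling it with the \emph{uniform} (in both $\bsz$ and $\bstheta$) convergence of (iii) is what lets the argument bypass any probabilistic machinery — a cleaner situation than the corresponding simulation-based consistency proofs, where the approximation error is itself random and one must argue convergence of an average of random errors rather than of a single sup-norm.
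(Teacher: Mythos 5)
Your proposal is correct and follows essentially the same route as the paper: reduce to Theorem \ref{thm_consist_L}, collapse the sup over $\Theta$ of $|\tilde L_n - L_n|$ to a single worst-case contribution over $(\bsz,\bstheta)$, use hypotheses (ii)--(iv) to force $\tilde f_{R(n)}$ eventually above a positive constant, and then apply a mean-value/Lipschitz bound for $\log$ (the paper packages this step as Corollary \ref{cor_concat}(i), proved by the same mean value theorem argument you use, with an arbitrary $\delta\in(0,\bar\delta)$ in place of your $\bar\delta/2$). Your closing observation that the resulting bound is deterministic, so convergence holds surely and not merely in probability, is a correct refinement that the paper leaves implicit.
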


\begin{proof}
We use our general results from Theorem \ref{thm_consist_L}.
Given the (infeasible) MLE is consistent by assumption (i), we only have to verify Assumption \ref{thm_consist_L}(ii).

Note at this point that there exists \(r_0 \in \N\) and a real number \(\delta \in (0, \bar{\delta}) \) such that for all \(r \geq r_0\), all \(\bsz \in \samplespace \) and all \(\bstheta \in \Theta\) it holds that both, \(\tilde{f}_r(\bsz, \bstheta) \geq \delta\) and \(f(\bsz, \bstheta) \geq \bar{\delta} \geq \delta\). To see why this is true, we use that there exists \(\bar{\delta} > 0\) such that for all \(\bstheta \in \Theta\) and \(\bsz \in \samplespace\) it holds \(f(\bstheta, \bsz) > \bar{\delta}\). Then, we choose some \(\delta\) with \(0 < \delta < \bar{\delta}\). 
Because of \ref{thm_consist_f}(iii), we have \(\lim_{r \to \infty} \tilde{f}_r(\bsz;\bstheta) = f(\bsz; \bstheta)\) for all \(\bsz \in \samplespace \) and \(\bstheta \in \Theta\). Hence, there exists \(r_0(\delta) \in \N\) such that for all \(r \geq r_0(\delta)\) it holds for all \(\bsz \in \samplespace \) and all \(\bstheta \in \Theta\) that
\begin{equation} \label{eqn_bardelta}
| \tilde{f}_r(\bsz;\bstheta) - f(\bsz; \bstheta)| < \bar{\delta} - \delta .
\end{equation}
This implies \(\tilde{f}_r(\bsz;\bstheta) \geq \delta\) for all \(r \geq r_0(\delta)\).

Now we are in the position to show
\begin{equation}
\sup_{\bstheta\in\Theta}
\left|\tilde{L}_{n}(\bstheta)-L_{n}(\bstheta)\right|
\convprob 0.
\end{equation}
To this end, we write
\begin{align*}
 \sup_{\bstheta \in \Theta} \left| \tilde{L}_n(\bstheta) - L_n(\bstheta) \right| & = \sup_{\bstheta \in \Theta} \left| \frac{1}{n} \sum_{i=1}^n \log( \tilde{f}_{R(n)}(\bsz_i, \bstheta) ) - \log( f(\bsz_i, \bstheta) ) \right| \\
 & \leq  \sup_{\bstheta \in \Theta, \bsz \in \samplespace} \left| \log( \tilde{f}_{R(n)}(\bsz, \bstheta) ) - \log( f(\bsz, \bstheta) ) \right| .
 \end{align*}

For sufficiently large \(n\) (i.e. \(R(n) \geq r_0\)), we can use Corollary \ref{cor_concat} (i) in Appendix \ref{app_thm}. By assumptions (i) and (ii) with \(h(\bstheta) = f(\bsz_i, \bstheta)\) and \(g(\bstheta) = \tilde{f}_{R(n)}(\bsz_i, \bstheta)\) it follows that
 \begin{align*}
 \sup_{\bstheta \in \Theta} \left| \tilde{L}_n(\bstheta) - L_n(\bstheta) \right|& \leq 
 \frac{1}{\delta} \sup_{\bstheta \in \Theta, \bsz \in \samplespace} \left| \tilde{f}_{R(n)}(\bsz, \bstheta)-f(\bsz, \bstheta) \right| .
\end{align*}
Moreover, by assumption (iii) and (iv), this expression converges to zero in probability.
\end{proof}

In order to study the asymptotic distribution of the MAL estimator,
we assume that the (intractable) ML estimator \(\hat\bstheta_{ML}\) is asymptotically normally distributed and efficient and that the MAL estimator \(\hat\bstheta_{MAL}\) is consistent. Then we make additional assumptions to derive the asymptotic distribution of \(\hat\bstheta_{MAL}\). To this end, we define the quantity
\begin{equation}\label{eq_error}
 \mathcal{E}(r) :=\sup_{\bsz\in\samplespace,\bstheta\in\Theta} \left| \tilde{f}_{r}(\bsz; \bstheta) - f(\bsz; \bstheta) \right| + \sup_{\bsz\in\samplespace,\bstheta\in\Theta} \left\Vert\nabla_{\bstheta}\tilde{f}_{r}(\bsz; \bstheta) - \nabla_{\bstheta}{f}(\bsz; \bstheta)\right\Vert ,
\end{equation}
which measures the worst-case approximation error of both, the function \(f\) and its gradient \(\nabla f\) by \(\tilde{f}_r\) and \(\nabla \tilde{f}_r\), respectively.

\begin{thm}\label{thm_asynorm_f}
Assume that the assumptions of Lemma \ref{thm_ML_distr} and Theorem \ref{thm_consist_f} hold.
Also assume that
(i) \(\tilde{f}_r(\bsz; \bstheta)\) is twice continuously differentiable in \(\bstheta\in\Theta\) and bounded for all \(\bsz\in\samplespace\) and \(r\in\mathbb N\);
(ii) \(\plim_{n \to \infty } \sqrt{n}\,\mathcal{E}(R(n)) = 0\);
(iii) \(\sup_{\bsz\in\samplespace,\bstheta\in\Theta} \left\Vert\nabla_{\bstheta\bstheta}\tilde{f}_{R(n)}(\bsz; \bstheta)-\nabla_{\bstheta\bstheta}{f}(\bsz; \bstheta)\right\Vert \convprob 0\).
Then
\[
\sqrt{n}\left(\hat{\bstheta}_{MAL} - \bstheta_{0}\right)\overset{d}{\rightarrow}N(\boldzero,\mathcal{I}^{-1}).
\]
\end{thm}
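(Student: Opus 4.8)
The plan is to reduce the claim to the general asymptotic-normality result of Theorem~\ref{thm_asynorm_L} by specializing $Q_n = L_n$ and $\tilde Q_n = \tilde L_n$, and then verifying its five hypotheses. Since Lemma~\ref{thm_ML_distr} is precisely the maximum-likelihood specialization of Lemma~\ref{extremum_estimator-normal} and already delivers the limiting variance $\mathcal I^{-1}$ for the infeasible MLE, the identification $H^{-1}\Sigma H^{-1}=\mathcal I^{-1}$ is automatic, so verifying the hypotheses of Theorem~\ref{thm_asynorm_L} directly yields the asserted $N(\boldzero,\mathcal I^{-1})$. Of the five conditions, hypothesis~(i) is exactly the content of Lemma~\ref{thm_ML_distr}, and hypothesis~(iii) — uniform convergence in probability of $\tilde L_n$ to $L_n$ — has already been proved inside Theorem~\ref{thm_consist_f}. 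For hypothesis~(ii) I would carry over the uniform lower bound from that same proof: for $R(n)\geq r_0$ one has $\tilde f_{R(n)}(\bsz,\bstheta)\geq\delta>0$ for all $\bsz\in\samplespace$, $\bstheta\in\Theta$, so $\log\tilde f_{R(n)}$ is well defined, and together with the smoothness in assumption~(i) this makes $\tilde L_n$ twice continuously differentiable near $\bstheta_0$ with probability one for large $n$. The genuine work therefore concentrates on hypotheses~(iv) and~(v).

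Hypothesis~(iv), the $\sqrt n$-rate for the gradient, is the heart of the argument. I would compute, for $R(n)\geq r_0$,
\[
\nabla_{\bstheta}\log\tilde f_{R(n)}-\nabla_{\bstheta}\log f
=\frac{\nabla_{\bstheta}\tilde f_{R(n)}}{\tilde f_{R(n)}}-\frac{\nabla_{\bstheta} f}{f}
=\frac{f\,(\nabla_{\bstheta}\tilde f_{R(n)}-\nabla_{\bstheta}f)+(f-\tilde f_{R(n)})\,\nabla_{\bstheta}f}{\tilde f_{R(n)}\,f},
\]
and bound its norm, using $\tilde f_{R(n)},f\geq\delta$, by $\tfrac1\delta\|\nabla_{\bstheta}\tilde f_{R(n)}-\nabla_{\bstheta}f\|+\tfrac{C_1}{\delta^2}|\tilde f_{R(n)}-f|$, where $C_1=\sup_{\bsz,\bstheta}\|\nabla_{\bstheta}f\|$. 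Taking the supremum over $\bsz\in\samplespace$ and $\bstheta\in\Theta$ and recognizing the two pieces of $\mathcal E(R(n))$ from \eqref{eq_error} gives $\sup_{\bstheta\in\Theta}\|\nabla_{\bstheta}\tilde L_n-\nabla_{\bstheta}L_n\|\leq \max(\tfrac1\delta,\tfrac{C_1}{\delta^2})\,\mathcal E(R(n))$, whence $\sqrt n\,\sup_{\bstheta}\|\nabla_{\bstheta}\tilde L_n-\nabla_{\bstheta}L_n\|\convprob0$ by assumption~(ii). This is exactly the step that forces $\mathcal E$ to include the gradient term and that demands the rate $\sqrt n\,\mathcal E(R(n))\to0$ rather than mere convergence.

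Hypothesis~(v) is analogous but at second order, and needs only convergence in probability. Differentiating once more gives $\nabla_{\bstheta\bstheta}\log f=\nabla_{\bstheta\bstheta}f/f-(\nabla_{\bstheta}f)(\nabla_{\bstheta}f)^T/f^2$, so the difference $\nabla_{\bstheta\bstheta}\log\tilde f_{R(n)}-\nabla_{\bstheta\bstheta}\log f$ expands into differences of ratios controlled uniformly by $\sup|\tilde f_{R(n)}-f|$, $\sup\|\nabla_{\bstheta}\tilde f_{R(n)}-\nabla_{\bstheta}f\|$ and $\sup\|\nabla_{\bstheta\bstheta}\tilde f_{R(n)}-\nabla_{\bstheta\bstheta}f\|$. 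The first two vanish because $\mathcal E(R(n))\to0$ (a consequence of assumption~(ii)), and the third vanishes by assumption~(iii), so $\sup_{\bstheta}\|\nabla_{\bstheta\bstheta}\tilde L_n-\nabla_{\bstheta\bstheta}L_n\|\convprob0$, which is hypothesis~(v). With all five conditions in hand, Theorem~\ref{thm_asynorm_L} closes the argument.

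The main obstacle I anticipate is not the algebra but securing the \emph{uniform} (in $\bsz\in\samplespace$, $\bstheta\in\Theta$ and $r\geq r_0$) upper bounds on $f,\tilde f_r$ and their first and second derivatives — in particular the finiteness of $C_1$ — needed to turn the estimates on $f$, $\nabla_{\bstheta}f$ and $\nabla_{\bstheta\bstheta}f$ into estimates on the derivatives of $\log$-likelihoods. Lemma~\ref{thm_ML_distr} only supplies integrable dominating functions, not pointwise-uniform bounds, so I would lean on the boundedness in assumption~(i) together with the lower bound $\delta$, and package the required Lipschitz behaviour of $g\mapsto\log g$ and its derivatives on $[\delta,\infty)$ through the composition estimate Corollary~\ref{cor_concat}, exactly as the consistency proof did at zeroth order.
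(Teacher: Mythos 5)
Your proposal is correct and follows essentially the same route as the paper: reduce to Theorem~\ref{thm_asynorm_L}, note that hypotheses (i)--(iii) come from Lemma~\ref{thm_ML_distr}, assumption~(i), and the proof of Theorem~\ref{thm_consist_f}, and then verify the gradient condition (iv) at rate $\sqrt{n}$ and the Hessian condition (v) by bounding the log-derivative differences via the uniform lower bound $\delta$ and the error $\mathcal{E}(R(n))$. The only cosmetic difference is that you carry out the quotient-rule estimates inline (with constants like $\max(\tfrac1\delta,\tfrac{C_1}{\delta^2})$), whereas the paper packages exactly this algebra into Corollary~\ref{cor_concat} with constants $C_1(f)$, $C_2(f)$ — and the uniform-boundedness caveat you flag for $\sup\|\nabla_{\bstheta}f\|$ is present in the paper's argument as well.
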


\begin{proof}
We proceed by verifying the assumptions of Theorem \ref{thm_asynorm_L}. Assumption \ref{thm_asynorm_L}(i) directly follows from Assumption \ref{thm_asynorm_f}(i).

The critical conditions we have to check are \ref{thm_asynorm_L}(iv) and  \ref{thm_asynorm_L}(v). The arguments are similar to that in the proof of Theorem \ref{thm_consist_f}: Again, there exists a \(\delta \in (0, \bar{\delta})\) such that for sufficiently large \(r\) it holds \(\tilde{f}_r > \delta\).
In order to show \ref{thm_asynorm_L}(iv), we define
\[
 C_1(f) := \frac{1 + \sup_{\bsz, \bstheta} \|\nabla_\bstheta f(\bsz, \bstheta)\|}{\delta^2}
\]
and 
\[
    C_2(f) := 2\, \frac{1 + \sup_{\bsz, \bstheta} \|\nabla_\bstheta f(\bsz, \bstheta)\|^2 + \sup_{\bsz, \bstheta} \|\nabla_{\bstheta,\bstheta} f(\bsz, \bstheta)\| }{\delta^3} .
\]

 Then, we use Corollary \ref{cor_concat} (ii) and the definition of \(\mathcal{E}_r\) to derive
\begin{align*}
 \sqrt{n}\,\sup_{\bstheta\in\Theta}\left\Vert\nabla_{\bstheta}\tilde{L}_{n}(\bstheta) - \nabla_{\bstheta}L_{n}(\bstheta)\right\Vert &\leq \sqrt{n} \sup_{\bstheta \in \Theta, \bsz \in \samplespace} \left\| \nabla_{\bstheta} \left[\log( \tilde{f}_{R(n)}(\bsz, \bstheta) ) - \log( f(\bsz, \bstheta) )\right] \right\| \\
   & \leq C_1(f) \sqrt{n} \mathcal{E}( R(n) ) ,
\end{align*}
which converges to zero in probability by Assumption \ref{thm_asynorm_f}(iv).

Using similar arguments, we employ Corollary \ref{cor_concat}(iii) to verify \ref{thm_asynorm_L}(v). We compute
\begin{align*}
\sup_{\bstheta\in\Theta} \left\Vert \nabla_{\bstheta\bstheta}\tilde{L}_{n}(\bstheta) - \nabla_{\bstheta\bstheta}L_{n}(\bstheta)\right\Vert & \leq \sup_{\bstheta \in \Theta, \bsz \in \samplespace} \left\| \nabla_{\bstheta\bstheta} \left[\log( \tilde{f}_{R(n)}(\bsz, \bstheta) ) - \log( f(\bsz, \bstheta) )\right] \right\|\\
&\leq C_2(f) \Bigg( \mathcal{E}( R(n) ) + \sup_{\bstheta \in \Theta, \bsz \in \samplespace}\left\Vert \nabla_{\bstheta}\tilde{f}_{R(n)}(\bsz, \bstheta)-\nabla_{\bstheta}f(\bsz, \bstheta) \right\Vert^2\\
& \quad \quad + \sup_{\bstheta \in \Theta, \bsz \in \samplespace} \left\Vert \nabla_{\bstheta\bstheta}\tilde{f}_{R(n)}(\bsz, \bstheta)-\nabla_{\bstheta\bstheta}f(\bsz, \bstheta) \right\Vert \Bigg) .
\end{align*}

Clearly, as \(n \to \infty\), by the given assumptions, all three summands within the bracket tend to zero in probability. This implies that the whole expression tends to zero, because $C_2(f)$ is independent of $n$.

\end{proof}

\subsection{The level of approximation accuracy} \label{sec_requiredAcc}
In Theorem \ref{thm_asynorm_f} it is required by condition (ii) that \(\plim_{n \to \infty } \sqrt{n}\,\mathcal{E}(R(n)) = 0\), i.e. the approximation error \eqref{eq_error} of the gradient \(\mathcal{E}(R(n))\) needs to decay faster than \(n^{-1/2}\).
Different approximation algorithms behave differently in terms of how the error bounds \(\mathcal{E}(r)\) change with \(r\). We discuss the two most common forms of convergence and their implication on the choice of \(R: \N \to \N\).

\begin{itemize}
\item[(a)] Algebraic convergence: For constants \(c > 0\) and \(s > 0\) it holds
\begin{equation}\label{eq_rate_a}
\mathcal{E}(r) \leq c r^{-s} .
\end{equation}
\item[(b)] Exponential convergence: For constants \(c > 0\) and \(\alpha, \beta > 0\) it holds
\begin{equation}\label{eq_rate_b}
\mathcal{E}(r) \leq c \exp \left( -\alpha r^\beta \right) .
\end{equation}
\end{itemize}
Now, we give general results on how to choose \(R(n)\) depending on the convergence rates of the approximation algorithm for \(\tilde{f}_r\).

\begin{thm} \label{thm_decay_assumption}
 \begin{enumerate}
 \item[(a)] Assume that an algebraic convergence rate \(\mathcal{E}(r) \leq c r^{-s}\) holds for sufficiently large \(r \in \N\) and \(s > 0\). Then, for all \(\gamma > \frac 1 2\) it holds that \(\sqrt{n}\mathcal{E}(R(n)) \convprob 0\) if
 \begin{equation} \label{eqn_decay_a}
  R(n) \geq \left\lceil c^{\frac{1}{s}} n^{\frac{\gamma}{s}} \right\rceil .
 \end{equation}
 \item[(b)] Assume that an exponential convergence rate \(\mathcal{E}(r) \leq c \exp \big( -\alpha r^\beta \big)\) holds for sufficiently large \(r \in \N\) and \(\alpha, \beta > 0\). Then, for all \(\gamma > \frac 1 2\) it holds that \(\sqrt{n}\mathcal{E}(R(n)) \convprob 0\) if
 \begin{equation} \label{eqn_decay_b}
  R(n) \geq \left\lceil \left( \frac{\log c}{\alpha} +  \frac{\gamma}{\alpha} \log n \right)^{\frac{1}{\beta}} \right\rceil .
 \end{equation}
 \end{enumerate}

 \begin{proof}
 For (a), we insert \eqref{eqn_decay_a} into \eqref{eq_rate_a}. This yields
 \begin{align*}
  \sqrt{n} \mathcal{E}(R(n)) \leq c \sqrt{n} \left\lceil c^{\frac{1}{s}} n^{\frac{\gamma}{s}} \right\rceil^{-s} \leq  \sqrt{n} n^{-\gamma} \leq  n^{\frac{1}{2} - \gamma} ,
 \end{align*}
 which, for \(\gamma > 1/2\) tends to zero as \(n \to \infty\).

 For (b), we insert \eqref{eqn_decay_b} into \eqref{eq_rate_b} and obtain
  \begin{align*}
  \sqrt{n} \mathcal{E}(R(n)) \leq c \sqrt{n} \exp \bigg( -\alpha \Big\lceil \Big( \frac{\log c}{\alpha} + \frac{\gamma}{\alpha} \log n \Big)^{\frac{1}{\beta}} \Big\rceil^\beta \bigg) \leq \sqrt{n} \exp \left( - \gamma \log n  \right) = \sqrt{n} \, n^{-\gamma} ,
 \end{align*}
 which also tends to zero as \(n \to \infty\), if \(\gamma > 1/2\).
 \end{proof}
\end{thm}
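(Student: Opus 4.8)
The two parts share one strategy: I would substitute the prescribed lower bound on $R(n)$ directly into the assumed decay bound for $\mathcal{E}$ and show that the resulting deterministic upper bound on $\sqrt{n}\,\mathcal{E}(R(n))$ is dominated by $n^{1/2-\gamma}$, which tends to zero exactly when $\gamma > 1/2$. Since $\mathcal{E}(r)$ is a worst-case quantity (a supremum over all $\bsz\in\samplespace$ and $\bstheta\in\Theta$), once the assumed rate bound is in force the bound on $\sqrt n\,\mathcal{E}(R(n))$ is a deterministic sequence, so the claimed convergence in probability reduces to an ordinary limit. The rate bounds in \eqref{eq_rate_a} and \eqref{eq_rate_b} are assumed only for sufficiently large $r$; because the prescribed $R(n)$ diverges (both $n^{\gamma/s}$ and $(\tfrac{\log c}{\alpha}+\tfrac{\gamma}{\alpha}\log n)^{1/\beta}$ grow without bound), there is an $n_0$ beyond which $R(n)$ exceeds the threshold, and since only the tail matters for the $\plim$ statement, restricting to $n\ge n_0$ is harmless.

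For part (a), I would plug $r=R(n)$ into $\mathcal{E}(r)\le c r^{-s}$ to get $\mathcal{E}(R(n))\le c\,R(n)^{-s}$, then invoke $R(n)\ge \lceil c^{1/s} n^{\gamma/s}\rceil \ge c^{1/s} n^{\gamma/s}$. As $x\mapsto x^{-s}$ is decreasing for $s>0$, this gives $R(n)^{-s}\le c^{-1} n^{-\gamma}$, hence $\mathcal{E}(R(n))\le n^{-\gamma}$, the ceiling only sharpening the estimate. Multiplying by $\sqrt n$ yields $\sqrt n\,\mathcal{E}(R(n))\le n^{1/2-\gamma}\to 0$.

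Part (b) is the same template after taking logarithms: from $R(n)\ge(\tfrac{\log c}{\alpha}+\tfrac{\gamma}{\alpha}\log n)^{1/\beta}$ I would raise both sides to the power $\beta$ and multiply by $\alpha$ to obtain $\alpha R(n)^\beta \ge \log c+\gamma\log n$, so that $c\exp(-\alpha R(n)^\beta)\le \exp(-\gamma\log n)=n^{-\gamma}$ and again $\mathcal{E}(R(n))\le n^{-\gamma}$, giving the same conclusion. There is no deep obstacle; the argument is pure bookkeeping, and the only points needing care are keeping the inequality directions correct when raising to negative powers or exponentiating, using $\lceil x\rceil\ge x$ so that the ceiling helps rather than hurts, and disposing of the ``sufficiently large $r$'' qualification via the tail argument above.
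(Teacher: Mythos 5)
Your proof is correct and follows essentially the same route as the paper's: substitute the prescribed lower bound for \(R(n)\) into the assumed decay estimate, use \(\lceil x\rceil \ge x\) and the monotonicity of \(x\mapsto x^{-s}\) (resp.\ of exponentiation) to obtain \(\sqrt{n}\,\mathcal{E}(R(n))\le n^{1/2-\gamma}\to 0\). Your explicit handling of the ``sufficiently large \(r\)'' qualification and the observation that the bound is deterministic (so that convergence in probability reduces to an ordinary limit) are careful touches that the paper leaves implicit, but they do not change the argument.
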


\section{Specific approximation algorithms} \label{s_rules}
Up to this point, we did not specify any particular approximation. In the following, we examine a very common setting, i.e. a likelihood contribution of the form
\begin{equation} \label{eqn_integral1}
 f(\bsz_{i};\bstheta) = \int_\Omega \varphi(\bsv, \bsz_i, \bstheta) \, \omega(\bsv) \, \rd\bsv ,
\end{equation}
which is a (possibly multivariate) integral of a function \(\varphi: \Omega \times \samplespace \times \Theta \to \R\) over the domain of integration \(\Omega \subset \R^d\) with respect to a given weight function \(\omega: \Omega \to \R_+\).
The integral typically represents the expectation over a nonlinear function \(\varphi\) of a set of random variables with density function \(\omega\).
This class of models includes nonlinear random effects models like \cite{Butler1982} and random coefficients models like the mixed logit model, see for example \cite{McFaddenTrain2000}.

The need for approximation arises because the integral in \eqref{eqn_integral1} often cannot be computed in closed form. To this end, several estimators have been proposed including the method of simulated moments (MSM; \cite{McFadden1989}) and the method of maximum simulated scores (MSS; \cite{HajivassiliouMcFadden1998}). Possibly the most widely used approach is the maximum simulated likelihood (MSL) estimator. It approximates \(f(\bsz_i; \bstheta)\) with a Monte Carlo estimate, i.e. \(\tilde f_r(\bsz_i; \bstheta) = \frac 1 r \sum_{j=1}^r \varphi(\bsv_j, \bsz_i, \bstheta)\), where \((\bsv_j)_{j=1}^r\) is a set of random draws distributed according to the weight function \(\omega\).

Many authors have found that integration rules other than pure Monte-Carlo simulation often perform much better in practice. Examples include Quasi-Monte Carlo rules in \citet{Bhat2001}, Gaussian quadrature in \citet{Butler1982}, or quadrature on sparse grids in \citet{Heiss2008}.

To cover all the mentioned approaches, we consider a general approximation\footnote{Note that both, the weights $w_{j,r}$ and the points $\bsv_{j,r}$ do not depend on the sample $\bsz_i$. This implies that for each likelihood contribution the same approximation algorithm is employed.}
\begin{equation}
 \tilde{f}_r(\bsz_{i};\bstheta) = U_r(\varphi(\cdot, \bsz_i, \bstheta)) := \sum_{j=1}^r w_{j,r} \varphi(\bsv_{j,r}, \bsz_i, \bstheta) 
\end{equation}
to the true integral of \eqref{eqn_integral1}.
This formulation includes Monte-Carlo, where all weights (\(\omega_{j,r}\)) are equal to \(r^{-1}\) and the nodes (\(\bsv_{j,r}\)) are random draws, but also Quasi-Monte Carlo (QMC; like Halton or Sobol sequences), cf. \cite{Halton:1964} or \cite{Sobol:1967}, where the weights are also equal to \(r^{-1}\) but the nodes are deterministic. Moreover, it includes classical (Gaussian) quadrature rules, cf. \cite{rabinowitz}, as well as sparse grids \citep{GerstnerGriebel:1998}. We will come back to specific algorithms below.

Remember that for continuous \(\varphi(\cdot, \bsz, \bstheta)\) our likelihood contributions \(f(\bsz, \bstheta)\) and their derivatives now have the following general form 
\begin{equation} \label{eqn_integral1rep}
 \begin{aligned}
 f(\bsz;\bstheta) &= \int_\Omega \varphi(\bsv, \bsz, \bstheta) \, \omega(\bsv) \, \rd\bsv \\
 \nabla_{\bstheta}f(\bsz;\bstheta) &= \int_\Omega \nabla_{\bstheta}\varphi(\bsv, \bsz, \bstheta) \, \omega(\bsv) \, \rd\bsv \\
 \nabla_{\bstheta\bstheta}f(\bsz;\bstheta) &= \int_\Omega \nabla_{\bstheta\bstheta}\varphi(\bsv, \bsz, \bstheta) \, \omega(\bsv) \, \rd\bsv .
 \end{aligned} 
\end{equation}

Their associated approximations \(\tilde{f}_r(\bsz, \bstheta)\) stem from some cubature rule \(U_r\) of the form
\begin{equation} \label{eqn_cub1rep}
 \begin{aligned}
  \tilde{f}_r(\bsz;\bstheta) &= U_r(\varphi(\cdot, \bsz, \bstheta)) &&= \sum_{j=1}^r w_{j,r} \varphi(\bsv_{j,r}, \bsz, \bstheta)\\
 \nabla_{\bstheta}\tilde{f}_r(\bsz;\bstheta) &= U_r(\nabla_{\bstheta}\varphi(\cdot, \bsz, \bstheta)) &&= \sum_{j=1}^r w_{j,r} \nabla_{\bstheta}\varphi(\bsv_{j,r}, \bsz, \bstheta)\\
 \nabla_{\bstheta\bstheta}\tilde{f}_r(\bsz;\bstheta) &= U_r(\nabla_{\bstheta\bstheta}\varphi(\cdot, \bsz, \bstheta)) &&= \sum_{j=1}^r w_{j,r} \nabla_{\bstheta\bstheta}\varphi(\bsv_{j,r}, \bsz, \bstheta) .
 \end{aligned} 
\end{equation}

The goal in this section will be to impose conditions on \(\varphi\) as well as on the integration rule \(U_r\) such that the assumptions of Theorems \ref{thm_consist_f} and \ref{thm_asynorm_f} hold. To this end, the following notation for the partial derivative of a sufficiently smooth function $g$ will be helpful.
\begin{equation} \label{eqn_partial_deriv}
 D_{\bsalpha}^{(\bstheta)} g(\bstheta) = \frac{\partial^{|\bsalpha|}}{\prod_{j=1}^p \theta_j^{\alpha_j}} g(\bstheta) ,
\end{equation}
where \(\bsalpha = (\alpha_1,\ldots, \alpha_p)\) is a multi-index that contains the order of the partial derivatives for different coordinate directions \(\theta_1,\ldots,\theta_p\).

In order to bound the magnitude of \(\sup_{\bsz\in\samplespace,\bstheta\in\Theta}\left|\tilde{f}_r(\bsz, \bstheta) - f(\bsz, \bstheta)\right|\), as well as \(\sup_{\bsz\in\samplespace,\bstheta\in\Theta} \|\nabla_{\bstheta}\tilde{f}_r(\bsz;\bstheta) - \nabla_{\bstheta}f(\bsz;\bstheta) \|\) and \(\sup_{\bsz\in\samplespace,\bstheta\in\Theta} \|\nabla_{\bstheta\bstheta}\tilde{f}_r(\bsz;\bstheta) - \nabla_{\bstheta\bstheta}f(\bsz;\bstheta) \|\) defined by \eqref{eqn_integral1rep} and \eqref{eqn_cub1rep}, we resort to
\begin{equation} \label{eqn_barE}
 \begin{aligned}
 \bar{\mathcal{E}}_k(r) & := \max_{|\bsalpha|_1 \leq k} \sup_{\bsz\in\samplespace,\bstheta\in\Theta}\left| D_{\bsalpha}^{(\bstheta)} f(\bsz, \bstheta) - D_{\bsalpha}^{(\bstheta)} \tilde{f}_r(\bsz, \bstheta) \right| \\
 & = \max_{|\bsalpha|_1 \leq k} \sup_{\bsz\in\samplespace,\bstheta\in\Theta}\left| \int_\Omega D_{\bsalpha}^{(\bstheta)} \varphi(\bsv, \bsz, \bstheta) \, \omega(\bsv) \, \rd\bsv - \sum_{j=1}^r w_{j,r} D_{\bsalpha}^{(\bstheta)} \varphi(\bsv_{j,r}, \bsz, \bstheta) \right|
 \end{aligned}
\end{equation}
for \(k=0,1,2\). Since all matrix- and vector-norms can be bounded by the \(\ell_\infty\) norm it holds \(\mathcal{E}(r) \leq 2 p \bar{\mathcal{E}}_1(r)\) and hence it is sufficient (and convenient) to work with the quantity defined in \eqref{eqn_barE}.

Now, \(\lim_{r \to \infty} \bar{\mathcal{E}}_k(r) = 0\) basically implies that the quadrature rules \((U_r)_{r \in \N}\) in \eqref{eqn_cub1rep} converge for \(\varphi(\cdot, \bsz, \bstheta)\) and all partial derivatives with respect to \(\bstheta\) up to order \(k\).

\subsection{Consistency and asymptotic normality}
First, we deal with consistency, which follows as a simple application of Theorem \ref{thm_asynorm_L} to the specific setting where \(f\) has the form \eqref{eqn_integral1rep} and \(\tilde{f}_r\) has the form \eqref{eqn_cub1rep}.

\begin{corollary}\label{thm_consist_phi}
Assume that the conditions of Lemma \ref{extremum_estimator-consist} hold and that
(i) there exists \(\bar{\delta} > 0\) such that for all values of \(\bsz \in \samplespace \) and \(\bstheta \in \Theta\) it holds \(f(\bsz;\bstheta) > \bar{\delta} > 0\);
(ii) the function \(\varphi(\bsv, \bsz, \bstheta)\) is continuous in \(\bstheta\in\Theta\);
(iii) \(\plim_{r \to \infty} \bar{\mathcal{E}}_0(r) = 0\);
(iv) \(R(n)\) is monotonically increasing in \(n\).
Then, \(\hat{\bstheta}_{MAL}\) is a consistent estimator of \(\bstheta_0\), i.e.
\begin{equation}
\hat{\bstheta}_{MAL}\convprob \bstheta_{0}.
\end{equation}
\end{corollary}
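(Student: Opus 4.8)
The plan is to read this corollary as the specialization of Theorem~\ref{thm_consist_f} to the integral representation \eqref{eqn_integral1rep} together with the cubature approximation \eqref{eqn_cub1rep}, so that the proof reduces to matching the present hypotheses against those of Theorem~\ref{thm_consist_f} and supplying the continuity that the likelihood setting requires. I would verify the hypotheses of Theorem~\ref{thm_consist_f} in turn. The strict lower bound $f(\bsz;\bstheta) > \bar{\delta} > 0$ demanded by Theorem~\ref{thm_consist_f}(ii) is precisely assumption~(i), and the monotonicity of $R$ in Theorem~\ref{thm_consist_f}(iv) is precisely assumption~(iv), where $\lim_{n\to\infty}R(n)=\infty$ is the standing convention on the link function.

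The only hypothesis that needs translation is the uniform convergence Theorem~\ref{thm_consist_f}(iii). Here I would simply evaluate the definition \eqref{eqn_barE} at $k=0$: the sole admissible multi-index is $\bsalpha=\boldzero$, for which $D_{\boldzero}^{(\bstheta)}$ is the identity, so that
\[
\bar{\mathcal{E}}_0(r)=\sup_{\bsz\in\samplespace,\bstheta\in\Theta}\left|\tilde{f}_r(\bsz,\bstheta)-f(\bsz,\bstheta)\right|.
\]
Thus assumption~(iii), $\plim_{r\to\infty}\bar{\mathcal{E}}_0(r)=0$, is literally assumption~(iii) of Theorem~\ref{thm_consist_f}, and nothing has to be estimated.

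It remains to use the continuity hypothesis~(ii), whose purpose is to guarantee the continuity in $\bstheta$ that the consistency argument rests on. Since $\varphi(\bsv_{j,r},\bsz,\cdot)$ is continuous by~(ii), each approximation $\tilde{f}_r(\bsz;\cdot)=\sum_{j=1}^r w_{j,r}\,\varphi(\bsv_{j,r},\bsz,\cdot)$ is continuous in $\bstheta$ as a finite weighted sum of continuous functions. By the uniform convergence just recorded, $f(\bsz;\cdot)$ is a uniform limit of the continuous maps $\tilde{f}_r(\bsz;\cdot)$ and is therefore itself continuous in $\bstheta$; together with the lower bound from~(i) this makes $\log f(\bsz;\cdot)$ continuous. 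This is the one conceptually useful step, since it delivers continuity of $f$ without assuming an integrable dominating envelope for $\varphi$. With the consistency of the infeasible estimator carried over from the cited lemma and these pieces in hand, all assumptions of Theorem~\ref{thm_consist_f} are met and its conclusion $\hat{\bstheta}_{MAL}\convprob\bstheta_0$ follows.

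I do not expect a genuine obstacle; the argument is bookkeeping. The two points to handle with care are reading $\bar{\mathcal{E}}_0(r)=\sup_{\bsz,\bstheta}|\tilde{f}_r-f|$ directly off \eqref{eqn_barE}, and, when the rule $U_r$ is randomized (as for Monte Carlo), justifying continuity of the fixed function $f$ via a deterministic companion rule or a pathwise version of the uniform-limit argument rather than from the in-probability statement alone.
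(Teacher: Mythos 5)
Your overall strategy --- reading the corollary as bookkeeping on top of the earlier MAL consistency result, with $\bar{\mathcal{E}}_0(r)=\sup_{\bsz\in\samplespace,\bstheta\in\Theta}|\tilde f_r(\bsz,\bstheta)-f(\bsz,\bstheta)|$ read off from \eqref{eqn_barE} at $k=0$ and the standing convention $\lim_{n\to\infty}R(n)=\infty$ invoked for the link function --- is exactly the paper's intent; the paper itself offers no proof beyond the remark that the corollary is a ``simple application'' of the earlier theory (and its pointer to Theorem \ref{thm_asynorm_L} there is evidently a typo for the consistency results). Your continuity observation (each $\tilde f_r(\bsz;\cdot)$ is a finite weighted sum of functions continuous in $\bstheta$, and $f(\bsz;\cdot)$ is its uniform limit, hence continuous, so $\log f$ is continuous given the lower bound) is a supplement the paper does not spell out, and you are right to flag the in-probability versus deterministic reading of hypothesis (iii) for randomized rules.

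However, the concluding step as written --- ``all assumptions of Theorem \ref{thm_consist_f} are met'' --- does not hold literally, and this is the one genuine gap. Theorem \ref{thm_consist_f}(i) requires the conditions of Lemma \ref{thm_ML_consist}, i.e.\ the ML-specific package including the dominance condition $\mathbb{E}_{\bsy}[\sup_{\bstheta\in\Theta}|\log f(\bsy;\bstheta)|]<\infty$ and the density identification condition, whereas the corollary assumes instead the abstract M-estimator conditions of Lemma \ref{extremum_estimator-consist} on $Q_0$ and $Q_n=L_n$. Your uniform-limit argument recovers continuity, but the dominance condition in particular is neither assumed nor derivable from the corollary's hypotheses (no upper bound on $f$ is available), so Theorem \ref{thm_consist_f} cannot be cited as a black box. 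The repair is short and stays within your outline: route through Theorem \ref{thm_consist_L}, whose hypothesis (i) is exactly the corollary's blanket assumption, and verify its hypothesis (ii) by re-running the body of the proof of Theorem \ref{thm_consist_f}, which uses only facts available here --- for $n$ large enough that $R(n)\geq r_0(\delta)$, one has $\tilde f_{R(n)}\geq\delta$ for some $\delta\in(0,\bar\delta)$ by (i) and (iii), whence Corollary \ref{cor_concat}(i) gives
\[
\sup_{\bstheta\in\Theta}\bigl|\tilde L_n(\bstheta)-L_n(\bstheta)\bigr|\leq \frac{1}{\delta}\,\bar{\mathcal{E}}_0\bigl(R(n)\bigr)\convprob 0 .
\]
With that substitution your argument is complete; the slip is one of citation granularity (black-box versus white-box use of Theorem \ref{thm_consist_f}), and it is partly induced by the paper's own mixing of hypothesis sets between the corollary and the theorem it specializes.
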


Condition \ref{thm_consist_phi}(iii) states that the cubature rule \(U_r\) converges for \(\varphi\), which is a rather mild assumption since no requirements are made regarding the speed of convergence.

Next, we consider the asymptotic distribution of \(\hat{\bstheta}_{MAL}\). Again, \(f\) has the form \eqref{eqn_integral1rep} and \(\tilde{f}_r\) has the form \eqref{eqn_cub1rep}.
The critical part of the properties of \(\hat\bstheta_{MAL}\) is the worst-case approximation error, i.e. \eqref{eqn_barE} for \(k=1\), which not only has to go to zero, but also has to vanish at a certain speed.

\begin{corollary}\label{thm_asynorm_phi}
Assume that the assumptions of Lemma \ref{thm_ML_distr} and Corollary \ref{thm_consist_phi} hold.
Also assume that
(i) for all \(\bsv \in \Omega\) and all \(\bsz \in \samplespace \) it holds that \(\varphi(\bsv, \bsz, \cdot)\) is twice continuously differentiable;
(ii) for all \(\bsz \in \samplespace\) and all \(\bstheta \in \Theta\) all terms in \eqref{eqn_integral1rep} remain bounded.
(iv) \(\plim_{r \to \infty} \bar{\mathcal{E}}_2(r) = 0\);
(v) \(R(n)\) rises fast enough with \(n\) to ensure \(\sqrt{n}\bar{\mathcal{E}}_1 \big( R(n) \big) \convprob 0\).
Then
\[
\sqrt{n}\left(\hat{\bstheta}_{MAL}-\bstheta_{0}\right)\overset{d}{\rightarrow}N(\boldzero,\mathcal{I}^{-1}).
\]
\end{corollary}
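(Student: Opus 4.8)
The plan is to deduce the statement from Theorem~\ref{thm_asynorm_f} by checking that every hypothesis of that theorem is implied by the present integral/cubature assumptions. The single computational device that makes this translation work is the elementary comparison $\mathcal{E}(r)\le 2p\,\bar{\mathcal{E}}_1(r)$ noted after~\eqref{eqn_barE}, supplemented by the analogous bound of the worst-case Hessian error by a constant multiple of $\bar{\mathcal{E}}_2(r)$; both follow from expressing all vector- and matrix-norms through the $\ell_\infty$-norm and from the fact that $\tilde f_r$, $\nabla_\bstheta\tilde f_r$, and $\nabla_{\bstheta\bstheta}\tilde f_r$ are obtained by applying the \emph{same} linear rule $U_r$ to $\varphi$ and its $\bstheta$-derivatives, as displayed in~\eqref{eqn_cub1rep}.

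First I would discharge the consistency prerequisites of Theorem~\ref{thm_asynorm_f}, namely the hypotheses of Theorem~\ref{thm_consist_f}. Lemma~\ref{thm_ML_consist} is contained in Lemma~\ref{thm_ML_distr}(i); the uniform lower bound $f\ge\bar\delta$ is assumption~(i) of Corollary~\ref{thm_consist_phi}; the uniform convergence $\sup_{\bsz,\bstheta}|\tilde f_r-f|=\bar{\mathcal{E}}_0(r)\to 0$ is assumption~(iii) there; and $R$ is monotone with $\lim_{n\to\infty}R(n)=\infty$ by the standing convention on the link function. Thus Theorem~\ref{thm_asynorm_f} becomes applicable once its three new conditions are verified.

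For condition~(i) of Theorem~\ref{thm_asynorm_f}, twice continuous differentiability of $\tilde f_r(\bsz;\cdot)$ is inherited from the twice continuous differentiability of $\varphi(\bsv,\bsz,\cdot)$ assumed in~(i), because $\tilde f_r=U_r(\varphi(\cdot,\bsz,\bstheta))$ is a finite weighted sum of evaluations of $\varphi$; boundedness of $\tilde f_r$ for each fixed $r$ follows from writing $\tilde f_r=f-(f-\tilde f_r)$ and using that $f$ is bounded by~(ii) while $|f-\tilde f_r|\le\bar{\mathcal{E}}_0(r)<\infty$. For condition~(ii), the comparison $\sqrt n\,\mathcal{E}(R(n))\le 2p\,\sqrt n\,\bar{\mathcal{E}}_1(R(n))$ together with assumption~(v) yields $\plim_{n\to\infty}\sqrt n\,\mathcal{E}(R(n))=0$. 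For condition~(iii), the worst-case Hessian error $\sup_{\bsz,\bstheta}\|\nabla_{\bstheta\bstheta}\tilde f_{R(n)}-\nabla_{\bstheta\bstheta}f\|$ is dominated by a constant multiple of $\bar{\mathcal{E}}_2(R(n))$, which converges to zero in probability because $\plim_{r\to\infty}\bar{\mathcal{E}}_2(r)=0$ by~(iv) and $R(n)\to\infty$. With all hypotheses of Theorem~\ref{thm_asynorm_f} in force, its conclusion delivers the desired asymptotic normality.

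The only genuinely delicate points are bookkeeping rather than deep obstacles: one must ensure the per-$r$ boundedness of $\tilde f_r$ (handled above via finiteness of $\bar{\mathcal{E}}_0(r)$ for each $r$), and one must justify passing from a plim in $r$ to a plim in $n$. The latter is where the possibly random nature of the rule $U_r$ (as in Monte Carlo) matters: since $\bar{\mathcal{E}}_1$ and $\bar{\mathcal{E}}_2$ are in general random, assumptions~(iv) and~(v) are stated as convergence in probability, and the monotone divergence $R(n)\to\infty$ is exactly what lets $\plim_{r\to\infty}\bar{\mathcal{E}}_2(r)=0$ transfer into $\bar{\mathcal{E}}_2(R(n))\convprob 0$.
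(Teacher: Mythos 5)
Your proposal is correct and follows essentially the same route as the paper: the paper presents this corollary as a direct application of Theorem~\ref{thm_asynorm_f} to the cubature setting of \eqref{eqn_integral1rep}--\eqref{eqn_cub1rep}, with the key link being exactly the comparison \(\mathcal{E}(r)\le 2p\,\bar{\mathcal{E}}_1(r)\) (and its Hessian analogue via \(\bar{\mathcal{E}}_2\)) noted after \eqref{eqn_barE}. Your write-up in fact spells out the hypothesis-by-hypothesis verification (including the transfer of \(\plim_{r\to\infty}\) statements to \(\plim_{n\to\infty}\) via the monotone divergence of \(R(n)\)) that the paper leaves implicit.
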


\begin{remark}
Note that at no point the positivity of the integration weights is assumed. Only (sufficiently fast) convergence of \(U_r\) for all partial derivatives of \(\varphi\) up to order \(2\) is required.
\end{remark}

\begin{remark}
 Note that the results obtained in the last three Sections are of asymptotical nature, i.e. they hold for \(n \to \infty\). In practice, \(n\) and also \(r = R(n)\) need to be finite numbers, where \(R(n)\) has to be chosen such that \(R(n) \geq r_0(\delta)\) because otherwise \(\tilde{f}_{R(n)} > \delta\) might
 not hold true. Especially for \(\tilde{f}_{R(n)} \leq 0\) the whole approach does not work anymore because of the logarithm in the log-likelihood  contributions.
\end{remark}

\subsection{Examples for integration rules} \label{s_int_rules}
In this section we will turn from the previous abstract framework to more explicit approximation algorithms and likelihood constructions. We will discuss several choices for the approximation of the integrals \eqref{eqn_integral1rep} and their resulting complexities. To this end, remember that the likelihood function consists of \(n\) terms where each one requires the numerical approximation of an integral with \(r = R(n)\) evaluations of \(\varphi\). Therefore, the total complexity for one evaluation of the approximated likelihood function is \(n \cdot R(n)\), with the requirements on \(R(n)\)  given in Theorem \ref{thm_decay_assumption}. Several concrete examples are provided in Table \ref{table_cost}.

\begin{table}[b]
\centering

 \begin{tabular}{|c|c|c|}
 \hline
 Setting & Error bound & Total cost \(n \cdot R(n)\)\\
 \hline
 Monte Carlo & \(\cO(r^{-1/2})\) & \(\cO(n^2)\) \\
 Quasi -- Monte Carlo & \(\cO(r^{-1+ \varepsilon})\) & \(\cO(n^{3/2 + \varepsilon})\) \\
 Gaussian quadrature, smoothness \(k=2\) & \(\cO(r^{-2})\) & \(\cO(n^{5/4})\) \\
 Gaussian quadrature, analytic function & \(\cO(e^{-r})\) & \(\cO(n \log(n))\) \\
 Sparse grid, bounded domain & \(\cO(r^{-k + \varepsilon})\) & \(\cO(n^{1 + \frac{1}{2k} + \varepsilon})\) \\
 Sparse grid, unbounded domain & \(\cO(r^{-k/2 + \varepsilon})\) & \(\cO(n^{1 + \frac{1}{k} + \varepsilon})\) \\
 \hline
 \end{tabular}

 \caption{Total number of integrand evaluations for MALE with different integration schemes.} \label{table_cost}
\end{table}

\subsubsection{Monte Carlo simulation}
In order verify the consistency of the Monte Carlo simulation (MC) approach it only is required that \(\varphi(\cdot, \bsz, \bstheta)\) is uniformly bounded in  \(L^2(\Omega, \omega)\) for all data \(\bsz \in \samplespace\) and all parameters \(\bstheta \in \Theta\). This means that \(\sup_{\bsz, \bstheta} \int_\Omega \varphi(\bsv, \bsz, \bstheta)^2 \, \omega(\bsv) \rd \bsv < \infty\).
Then, the choice \(w_{j,r} = 1/r\) and independent and identical draws \(\bsv_{j,r}\) from the probability distribution induced by \(\omega\) yield a quadrature rule that converges both, in expectation and with high probability at a rate of \(\mathcal{O}(r^{-1/2})\).
By Corollary \ref{thm_consist_phi} the resulting estimator is consistent.

For asymptotical normality, also the partial derivatives of \(\varphi\) with respect to \(\bstheta\) must be square integrable, i.e.
\[
	\sup_{\bsz, \bstheta} \int_\Omega \left( D^{(\bstheta)}_\bsalpha \varphi(\bsv, \bsz, \bstheta) \right)^2 \, \omega(\bsv) \rd \bsv < \infty \quad \text{ for all } |\bsalpha|_1 \leq 2.
\]
Then, with high probability and also in expectation, \(\bar{\mathcal{E}}_1 \leq c r^{-1/2}\) and \(\bar{\mathcal{E}}_1\) converges to zero at a rate of \(\mathcal{O}(r^{-1/2})\). It follows by Corollary \ref{thm_asynorm_phi} and Theorem \ref{thm_decay_assumption} that the Monte Carlo based MAL estimator is asymptotically normal if the number of integration points \(r\) increases at least linearly in the number of data samples \(n\), i.e. \(r = R(n) = n^\beta\) with \(\beta > 1\).

\subsubsection{Quasi -- Monte Carlo integration}
Another approach to multivariate integration on the \(d\)-dimensional unit cube \(\Omega = (0,1)^d\) is the quasi--Monte Carlo method. Similar to Monte Carlo it has all weights \(w_j = r^{-1}\), but the points are not drawn randomly, but determined deterministically by certain number-theoretic considerations. Quasi--Monte Carlo methods exist in differenct variants, depending on the specific selection of points. Examples are Halton points, cf. \cite{Halton:1964}, lattice rules, cf. \cite{SloanJoe:1994} or Sobol points, cf. \cite{Sobol:1967}. A rather recent development
are higher-order QMC points, cf. \cite{DiPi10,Hinrichs.Markhasin.Oettershagen.ea:2016}.

For integrands that possess bounded variation in the sense of Hardy and Krause, cf. \cite{Niederreiter:1992}, most QMC constructions achieve a convergence rate of order \(\cO(r^{-1} \log(r)^{d-1})\). Asymptotically, this bound behaves like \(\cO(r^{-1 + \varepsilon})\), where the \(\varepsilon > 0\) asymptotically suppresses the \(d\)-dependent power of \(\log r\).

Details on the definition of the Hardy-Krause variation can be found in \cite{DiPi10,Niederreiter:1992,owen2005multidimensional}. We just note at this point that bounded variation is a much stronger assumption than bounded variance (as it is required for Monte Carlo integration) since for bounded Hardy-Krause variation also the mixed derivatives need to exist and to be bounded, i.e.
\begin{equation} \label{eqn_vitali_variation}
 \int_{[0,1]^d} \frac{\partial^d}{\prod_{j=1}^d \partial v_j} g(\bsv) \, \rd \bsv < \infty \quad \text{ for all } (\bsz, \bstheta) \in \samplespace \times \Theta .
\end{equation}

If this property is fulfilled for \(g(\bsv) = \varphi(\bsv, \bsz, \bstheta)\), the resulting estimator will be consistent. If in addition also \eqref{eqn_vitali_variation} is fulfilled for all partial derivatives of \(\varphi\) up to order \(2\) with respect to \(\bstheta\), i.e. \(g(\bsv) = D^{(\bstheta)}_\bsalpha \varphi(\bsv, \bsz, \bstheta), |\bsalpha|_1 \leq 2\) then both, \(\bar{\mathcal{E}}_1(r)\) and \(\bar{\mathcal{E}}_2(r)\) decay at a rate of \(\cO(r^{-1 + \varepsilon})\) and Corollary \ref{thm_asynorm_phi} and Theorem \ref{thm_decay_assumption} yield that the number of QMC integration points needs to increase as \(r = R(n) = n^{\beta}\) with any \(\beta > 1/2\).

We remark that in some cases the condition of bounded Hardy-Krause variation can be relaxed to deal with integrands that possess mild singularities, cf. \cite{Owen}.

\subsubsection{Gaussian quadrature and related methods}
The classical approach to numerical integration aims at the so-called degree of exactness. This means that the quadrature rule is constructed such that it integrates polynomials up to a certain degree \(D\) exactly, i.e. the points \(v_{1,r},\ldots, v_{r,r}\) and weights \(w_{1,r},\ldots, w_{r,r}\) are determined such that
\begin{equation}
 \sum_{j=1}^r w_{j,r} v_{j,r}^k = \int_\Omega v^k \, \omega(v) \rd v \quad \text{ for all } k =0, 1, \ldots, D .
\end{equation}
The quadrature rule with the best possible degree of exactness \(2r-1\) is Gaussian quadrature. Depending on the smoothness of the integrand, Gaussian quadrature yields algebraic convergence rates (for integrands with finite smoothness) or (sub-)exponential convergence rates for infinitely differentiable or analytic integrands, cf. \cite{rabinowitz}. 
We will discuss two examples for Gaussian quadrature in more detail below.
Note that it is also possible to extend the Gaussian approach to non-polynomial basis functions, e.g. to deal with certain boundary singularities that occur with the GHK sampling approach, cf. \cite{Griebel.Oettershagen:2014}.
Moreover, there are further approaches that yield quadrature rules with a polynomial degree of exactness that involve nested point sets. Examples are Gauss-Patterson quadrature rules, cf. \cite{Patterson1967}, Clenshaw-Curtis quadrature, cf. \cite{Imhof1963} or Leja points, cf. \cite{jantsch2016lebesgue,Griebel.Oettershagen:2016}.

Besides favorable convergence rates, Gaussian quadrature rules have the property that the Stone-Weierstrass Theorem and its variations ensure their convergence for any continuous integrand. Therefore, by Theorem \ref{thm_consist_phi} maximum approximated likelihood estimators based on Gaussian quadrature rules and other stable polynomial-based quadrature rules usually are consistent if \(\varphi(\cdot, \bsz, \bstheta)\) is continuous for all \((\bsz, \bstheta) \in \samplespace \times \Theta\).

\paragraph{Gauss-Legendre quadrature} If the weight function is constant, i.e. \(\omega(x) =1\) and \(\Omega\) is bounded, e.g. \(\Omega = [0,1]\), Gauss-Legendre quadrature achieves the best possible degree of polynomial exactness. As a consequence, it achieves exponential convergence rates of type \(\cO(e^{-\alpha r}), \alpha > 0\) for integrands that are analytic in certain ellipses or circles that enclose the domain of integration \(\Omega\), cf. \cite{rabinowitz}. However, if the integrand is only \(k < \infty\) times differentiable, the convergence rate deteriorates to the order \(\cO(r^{-k})\).

\paragraph{Gauss-Hermite quadrature} If the weight function corresponds to a standard normal density, e.g. \(\omega(x) = \frac{1}{\sqrt{2 \pi}} e^{-x^2/2}\) and \(\Omega = \R\), then Gauss-Hermite quadrature is optimal with respect to polynomial exactness. The analysis of Gaussian quadrature on unbounded domains \(\Omega\) is more complicated than in the case of bounded \(\Omega\). Yet, there are a number of useful results available: For integrands with \(k\) continuous and integrable derivatives, the error of Gauss-Hermite quadrature can be bounded by \(\cO(r^{-k/2})\), cf. \cite{Smith2018} or \cite{Mastroianni94}.
To be more precise, if there exists a constant \(c > 0\) such that
\begin{equation} \label{eqn_hermite_condition_1d}
 g^{(k)}(v) \leq c \frac{e^{\frac{v^2}{2}}}{\sqrt{1+v^2}} \quad \text{ for all } v \in \R
\end{equation}
then it holds
\begin{equation}
 \left| \int_\R g(v) \frac{e^{-\frac{v^2}{2}}}{\sqrt{2\pi}} \, \rd v - \sum_{j=1}^r w_{j,r} g(v_{j,r}) \right| \leq C(k,g) r^{-k/2} ,
\end{equation}
where the constant \( C(k,g) > 0\) may depend on \(k\) and the integrand \(g\), but not on \(r\). Basically, condition \eqref{eqn_hermite_condition_1d} bounds the growth of the $k$-th derivative along the real axis.
Moreover, for integrands that are analytic in an infinite complex strip that contains the real axis \(\R\), sub-exponential convergence rates of type \(\cO(e^{-\alpha n^\beta})\) are shown in \cite{boyd1984}.
Finally, for certain classes of entire functions, even exponential convergence \(\cO(e^{-\alpha r})\) is possible, cf. \cite{Kuo2011} or \cite{Irrgeher2015}.

\subsubsection{Sparse grid cubature}
For multivariate integration problems it is possible to use sparse grids, cf. \cite{GerstnerGriebel:1998,GerstnerGriebel:2003,Griebel.Oettershagen:2016,Novak1996} to turn a univariate quadrature rule to a multivariate integration method.

For integrands with bounded mixed derivatives up to order \(k\) on bounded domains a classical result is the convergence rate of order \(\cO(r^{-k} (\log r)^{(d-1)(r+1)})\), cf. \cite{GerstnerGriebel:1998} and \cite{Novak1996}. In the case of sparse grid Gauss-Hermite quadrature on the unbounded domain \(\R\) it holds, cf. \cite{Zhang2013},
\begin{equation}
 \left| \int_{\R^d} g(\bsv) \frac{e^{-\frac{\bsv^t \bsv}{2}}}{(2\pi)^{d/2}} \, \rd \bsv - \sum_{j=1}^r w_{j,r} g(\bsv_{j,r}) \right| \leq C(k,g) r^{-k/2} (\log r)^{(d-1)(k/2+1)} ,
\end{equation}
which asymptotically behaves like \(\cO(r^{-k/2 + \varepsilon})\).
Here, as before \(C(k,g)\) depends on \(k\) and \(g\) and the integrand \(g\) must have continuous partial derivatives up to order \(k\) which fulfill
\begin{equation} \label{eqn_hermite_condition_mv}
 \frac{\partial^{k d}}{\prod_{l=1}^d \partial v_l^k} g(\bsv) \leq c \frac{ e^{\frac{\bsv^t \bsv}{2}}}{\prod_{l=1}^d \sqrt{1+v_l^2}} \quad \text{ for all } \bsv \in \R^d ,
\end{equation}
i.e. the mixed derivative of order \(k\) is continuous and does not grow too fast as \(\bsv\) approaches infinity.
For integrands that are infinitely often differentiable, \(k \in \N\) can be chosen as large as desired, but then also the constant \(C(k,g)\) might become arbitrary large.

\section{Examples} \label{s_examples}

\paragraph{Example Ia}
In the logit model, one needs to compute integrals of the form
\begin{equation} \label{eqn_logit_integral1}
 \int_\R \frac{1}{1 + e^{- z_i x}} \, \frac{e^{-\frac{(x-\mu)^2}{2 \sigma^2}}}{\sqrt{2 \pi \sigma^2}} \rd x .
\end{equation}
Since the parameter vector \(\bstheta = (\theta_1,\theta_2) = (\mu, \sigma)\) shall only contribute to \(\varphi\) in \eqref{eqn_integral1} we reformulate the integral \eqref{eqn_logit_integral1} using the change of variable \(v = (x-\mu)/\sigma\) to obtain
\begin{align*}
 \int_\R \frac{1}{1 + e^{- z_i x}} \, \frac{e^{-\frac{(x-\mu)^2}{2 \sigma^2}}}{\sqrt{2 \pi \sigma^2}} \rd x & = \int_\R \frac{1}{1 + e^{ - z_i ( {\sigma} v + \mu )}} \frac{e^{-\frac{v^2}{2}}}{\sqrt{2 \pi}} \rd v .
\end{align*}
Hence, in the notation of \eqref{eqn_integral1} we have with \(\bstheta = (\theta_1,\theta_2) := (\mu, \sigma)\)
\begin{equation}
 \Omega = \R, \quad \omega(v) = \frac{1}{\sqrt{2 \pi}} e^{-\frac{v^2}{2}}, \quad \varphi(v, z_i, \bstheta) = \frac{1}{1 + e^{- z_i ( {\theta_2} {v} + \theta_1 )}} .
\end{equation}
In order to determine the convergence rate of Gauss-Hermite quadrature, we have to analyze the function \(v \mapsto \varphi(v, z_i, \bstheta)\), as well as its derivatives with respect to \(\bstheta\), i.e. the functions \(v \mapsto D_\bsalpha^{(\bstheta)} \varphi(v, z_i, \bstheta)\) for \(|\bsalpha|_1 \leq 2\).
To this end, we use Lemma \ref{cor_logit_derivative} from the Appendix, which ensures that \eqref{eqn_hermite_condition_1d} is satisfied for all \(k \in \N\), i.e.
\begin{equation}
 \frac{\rd^k}{\rd v^k} D_\bsalpha^{(\bstheta)} \varphi(v, z_i, \bstheta) \leq c(k, \bsalpha) \frac{e^{\frac{v^2}{2}}}{\sqrt{1+v^2}} \quad \text{ for all } v \in \R \text{ and all } \bsalpha \in \N_0^2 .
\end{equation}
Therefore, we can expect arbitrary large algebraic convergence of order \(\cO(r^{-k})\).

\paragraph{Example Ib}
In the multivariate random coefficients logit model one needs to compute integrals of the form
\begin{equation} \label{eqn_logit_integral2}
 \int_{\R^d} \frac{1}{1 + e^{-\bsz_i \cdot \bsv}} \, \frac{ \exp\big(-\frac{1}{2} (\bsv-\bsmu)^t \Sigma^{-1} (\bsv-\bsmu) \big) }{\sqrt{(2 \pi)^d \det \Sigma}} \rd \bsv .
\end{equation}

We use the variable transformation
\[
 \bsu = \bsC^{-1}(\bsv - \bsmu),
\]
where \(\bsC\) denotes the Cholesky factorization of \(\Sigma\), i.e. \(\Sigma = \bsC \bsC^t\). Then, for any \(f: \R^d \to \R\) we obtain
\begin{align*}
 & \quad \int_{\R^d} f(\bsv) \, \frac{ \exp\big(-\frac{1}{2} (\bsv-\bsmu)^t \Sigma^{-1} (\bsv-\bsmu) \big) }{\sqrt{(2 \pi)^d \det \Sigma}} \rd \bsv  \\
 & = \int_{\R^d} f(\bsC (\bsu + \bsmu) ) \, \frac{ \exp\big(-\frac{1}{2} \bsu^t \bsC^t \Sigma^{-1} \bsC^t \bsu \big) }{\sqrt{(2 \pi)^d \det \Sigma}} \det \bsC \rd \bsu \\
 & = \int_{\R^d} f(\bsC (\bsu + \bsmu) ) \, \frac{ \exp\big(-\frac{1}{2} \bsu^t \bsu \big) }{\sqrt{(2 \pi)^d }} \rd \bsu .
\end{align*}

Hence, in the notation of \eqref{eqn_integral1rep} we have with \(\bstheta = (\theta_1,\theta_2, \ldots, \theta_p) := (\bsmu, \bsC)\), where \(p = d + d(d+1)/2\), that
\begin{equation}
 \Omega = \R^d, \quad \omega(\bsv) = \frac{e^{-\frac{\bsv^t \bsv}{2}}}{\sqrt{(2 \pi)^d}}, \quad \varphi(\bsv, \bsz_i, \bstheta) = \frac{1}{1 + e^{- \bsz_i \cdot \bsC (\bsv + \bsmu)}} .
\end{equation}

Again, we can use Lemma \ref{cor_logit_derivative} from the Appendix to ensure that \eqref{eqn_hermite_condition_mv} is fulfilled for all \(k \in \N\), i.e.
\begin{equation}
 \frac{\partial^{k d}}{\prod_{l=1}^d \partial v_l^k} D_\bsalpha^{(\bstheta)} \varphi(\bsv, \bsz_i, \bstheta) \leq c(k, \bsalpha) \frac{ e^{\frac{\bsv^t \bsv}{2}}}{\prod_{l=1}^d \sqrt{1+v_l^2}} \quad \text{ for all } \bsv \in \R^d  \text{ and all } \bsalpha \in \N_0^p .
\end{equation}
Therefore, for any \(k \in \N\) we can (asymptotically) expect algebraic convergence of order \(\cO(r^{-k/2} (\log n)^{(d-1)(k/2+1)})\), which asymptotically behaves like \(\cO(r^{-k/2 + \varepsilon})\).

\paragraph{Example 2}
In the Butler-Moffitt model one has to solve univariate integrals of the form
\[
	f(\bsz_i, \bstheta) = \int_\R \prod_{t=1}^T \Phi( z_{i,t} \beta + \sigma v) \frac{e^{-\frac{v^2}{2}}}{\sqrt{2 \pi}} \, \rd v ,
\]
where \(\Phi: \R \to (0,1)\) denotes the cumulative distribution functions of the standard normal distribution.

Hence, in the notation of \eqref{eqn_integral1} we have with \(\bstheta = (\theta_1,\theta_2) := (\sigma, \beta)\)
\begin{equation}
 \Omega = \R, \quad \omega(v) = \frac{1}{\sqrt{2 \pi}} e^{-\frac{v^2}{2}}, \quad \varphi(v, \bsz_i, \bstheta) = \prod_{t=1}^T \Phi( z_{i,t} \beta + \sigma v) .
\end{equation}
In order to determine the convergence rate of Gauss-Hermite quadrature, we have to analyze the function \(v \mapsto \varphi(v, \bsz_i, \bstheta)\), as well as its derivatives with respect to \(\bstheta\), i.e. the functions \(v \mapsto D_\bsalpha^{(\bstheta)} \varphi(v, \bsz_i, \bstheta)\) for \(|\bsalpha|_1 \leq 2\).
To this end, we note that \(\Phi\) is bounded and all of its derivatives are also bounded. Therefore, by the general Leibnitz / product rule, \eqref{eqn_hermite_condition_1d} is fulfilled for all \(k \in \N\), i.e.
\begin{equation}
 \frac{\rd^k}{\rd v^k} D_\bsalpha^{(\bstheta)} \varphi(v, z_i, \bstheta) \leq c(k, \bsalpha) \frac{e^{\frac{v^2}{2}}}{\sqrt{1+v^2}} \quad \text{ for all } v \in \R \text{ and all } \bsalpha \in \N_0^2 .
\end{equation}
Thus, we can expect arbitrary large algebraic convergence of order \(\cO(r^{-k})\).

\begin{remark}
Note that an \emph{arbitrary large algebraic rate of convergence} usually amounts to an (sub-)exponential convergence of type $\cO(\exp(-\alpha r^\beta))$. However, to determine the exact rate of decay, i.e. $\alpha$ and $\beta$ or even to check the required assumptions, usually requires advanced techniques from complex analysis, cf. e.g. \cite{rabinowitz} or \cite{boyd1984}, which are beyond the scope of this article. Therefore, at this point, we will stay with algebraic convergence rates for any $k \in \mathbb{N}$ and their much simpler to check prerequisites.
\end{remark}

\section{Pre-Asymptotics} \label{sec_pre_asymp}
So far we have discussed the asymptotic behavior of the maximum approximated likelihood estimator where $n$ and $r$ approach infinity. In the following simulation study we focus on the illustration of the theoretical findings as well as practical relevant insights into the pre-asymptotic behavior. First we demonstrate the relative performance of common approximation methods and link the result to our theory.
Second we consider the consequences caused by the approximation error for the estimation accuracy.

\subsection*{Relative Performance of Approximation Methods}
To ensure consistency in the context of Corollary 10 an approximation method is required for which the approximation error disappears if $r \rightarrow \infty$. For all reasonable approaches, including Monte-Carlo methods, quasi Monte-Carlo and numerical quadrature this assumption is fulfilled. However, for a finite $r$ the approximation accuracy can differ dramatically.

For our illustration we analyze a simplified random coefficient regression model of the form

\[
y_{i}=x_{i}\beta_{i}+\varepsilon_{i}\quad\text{with \ensuremath{\varepsilon_{i}\sim N(0,1)\quad \beta_{i}\sim N(\bar{\beta},1)}}.
\]

The likelihood contribution is
\begin{equation}\label{model}
    f\left(z_i=(y,x), \theta = \bar\beta\right)=\int_{\mathbb{R}}g(y-x\beta)\times g(\beta-\bar{\beta})\,\text{d}\beta,
\end{equation}

where $g(\cdot)$ denotes the standard normal density function with $g(x)=\frac{1}{\sqrt{2\pi}}e^{-x^{2}/2}$. The likelihood contribution is evaluated at $\theta = 0$ using

\[
\tilde{f}\left(z_i, \theta \right) =\sum_{j=1}^{r}w_{j,r}g(y-xv_{j,r}),
\]
where $v_{j,r}$ denotes the draws or nodes regarding the standard
normal distribution and $w_{j,r}$ the corresponding weights.
For every run in the experiment we randomly generate $x_i\sim N(0,1)$, $\beta_i\sim N(0,1)$ and $\varepsilon_i \sim N(0,1)$. Based on the generated data we compute the approximated likelihood contribution.

We apply three approximation methods, i.e. (1) ordinary Monte-Carlo sampling using pseudo-random draws, (2) quasi Monte-Carlo sampling proposed by Halton (1964) and (3) Gauss-Hermite quadrature\footnote{We also tested quasi Monte-Carlo draws following \cite{Sobol:1967} and, representative for variance reduction techniques, we use Modified Latin Hypercube Sampling (MLHS) \citep*[see][]{Hess2006} as well as Gauss-Legendre quadrature and the midpoint-rule. The general result is that Monte-Carlo performs worse, quasi Monte-Carlo and MLHS perform roughly the same and Gaussian-Quadrature performs best.}. As reference $f_{\text{ref}} \simeq f$ we compute the approximation $\tilde{f}(z_i, \theta = 0)$ with a very fine Gauss-Hermite quadrature using 100 grid-points.

For an increasing sequence of $r$ we repeat the experiment $m=5.000$ times and compute for each repetition the approximation error ($\tilde{f}-f_{\text{ref}} $). We aggregate the results in Figure \ref{Conv_fRC} using the "maximum absolute error" (left panel) and the RMSE$=\sqrt{\frac{1}{m}\sum_{i=1}^{m}(\tilde{f}-f_{\text{ref}})^2}$ (right panel). The maximum absolute error complies with the worst case error in assumption (iii) of Theorem \ref{thm_consist_f}. The RMSE is  a more common measure of convergence used here for the comparison of the approximation methods. 

\begin{figure}[H]
\begin{centering}
\includegraphics[width=0.9\textwidth]{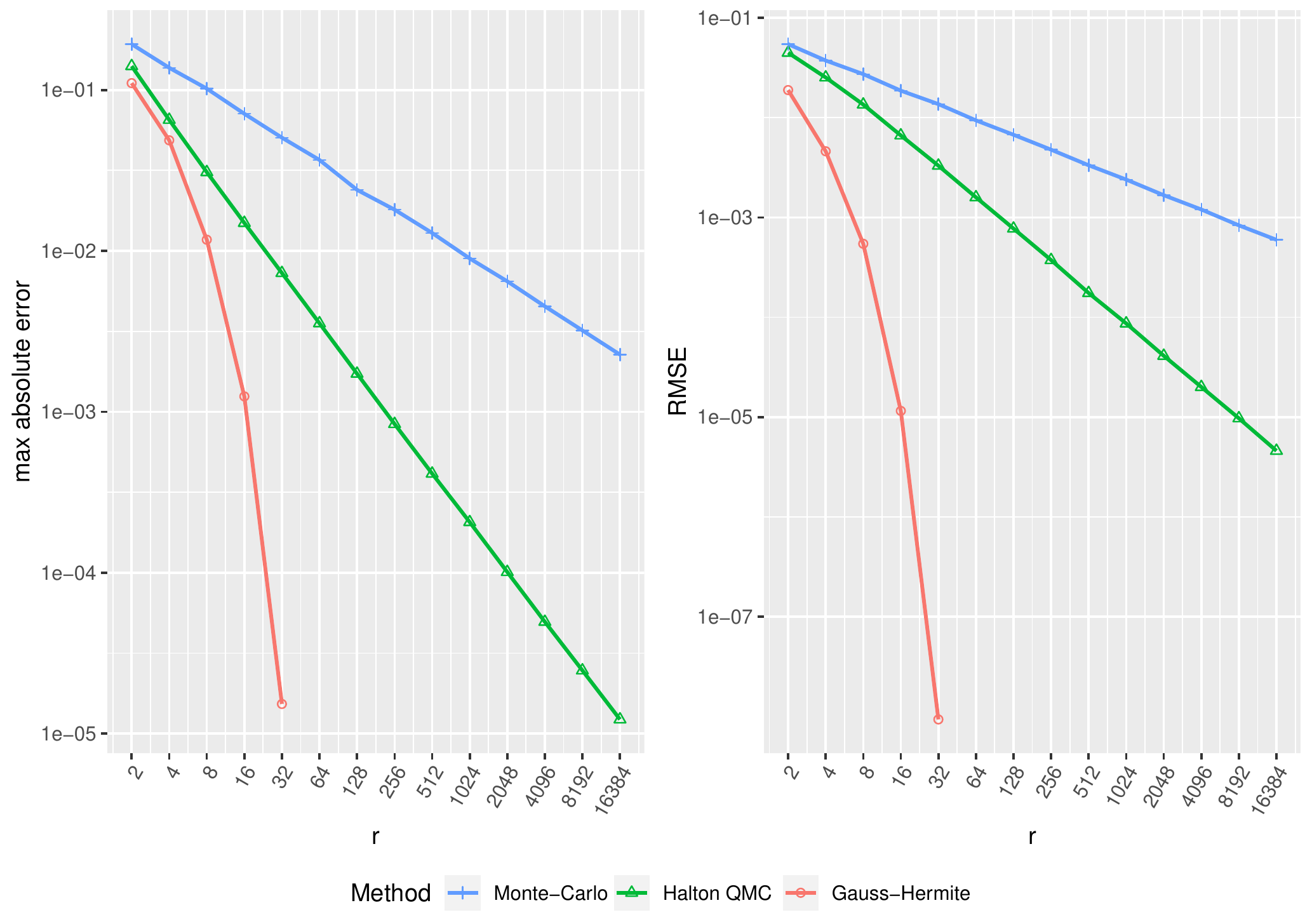}
\par\end{centering}
\caption{\label{Conv_fRC}Convergence behavior using different approximation methods for a smooth function $f\left(z_i=(y,x), \theta = \bar\beta\right)=\int_{\mathbb{R}}g(y-x\beta)\times g(\beta-\bar{\beta})\,\text{d}\beta$}
\end{figure}

As shown in Figure \ref{Conv_fRC} the three methods are converging in the worst case error and in the average error, respectively. But the methods differ in their convergence rates. The function $f(\cdot)$ used here is smooth and therefore we observe the corresponding theoretical error bounds of the different employed methods (see Table \ref{table_cost}). As expected, Gauss-quadrature outperforms the other methods and obtains even an exponential convergence rate. Thus, to reach at least the same accuracy as Monte-Carlo with 16.384 draws, just 128 Halton-draws and even only 16 nodes for the Gauss-Hermite quadrature are needed.

As stated in section \ref{s_int_rules} the convergence behavior and the relative performance is related to the smoothness of the approximated function. In general there is no absolute dominant approximation method but, based on the properties of the likelihood function, one method could perform better than another, provided that more smoothness is present, compare Table 1.

To illustrate this result we also analyze the Accept-Reject-Sampling (ARS) solution for the standard-normal-cdf (see equation (\ref{model_ARS})). This model has a discontinuity at $z=x$ and is therefore not smooth in terms of section \ref{s_int_rules}. We consider

\begin{equation}\label{model_ARS}
    f\left(z\right)=\int_{\mathbb{R}} 1(x \leq z) \cdot g(x) \,\text{d}x,
\end{equation}

where $1(\cdot)$ denotes the indicator function, which is one if the expression in brackets is true and zero otherwise. This function is approximated using

\[
\tilde{f}\left(z\right) =\sum_{j=1}^{r}w_{j,r}1(v_{j,r} \leq z) ).
\]

For our simulation study we randomly draw 5.000 z-values from a standard-normal-distribution for each accuracy level $r$. We apply Monte-Carlo, Halton draws and Gauss-Legendre quadrature. As reference solution we take the error-function representation $f_{\text{ref}} = \frac {1}{2} \left[1+\operatorname {erf} \left(\frac {z }{\sqrt {2}}\right)\right]$. The results are summarized in Figure \ref{Conv_fARS}.
 
\begin{figure}[H]
\begin{centering}
\includegraphics[width=0.9\textwidth]{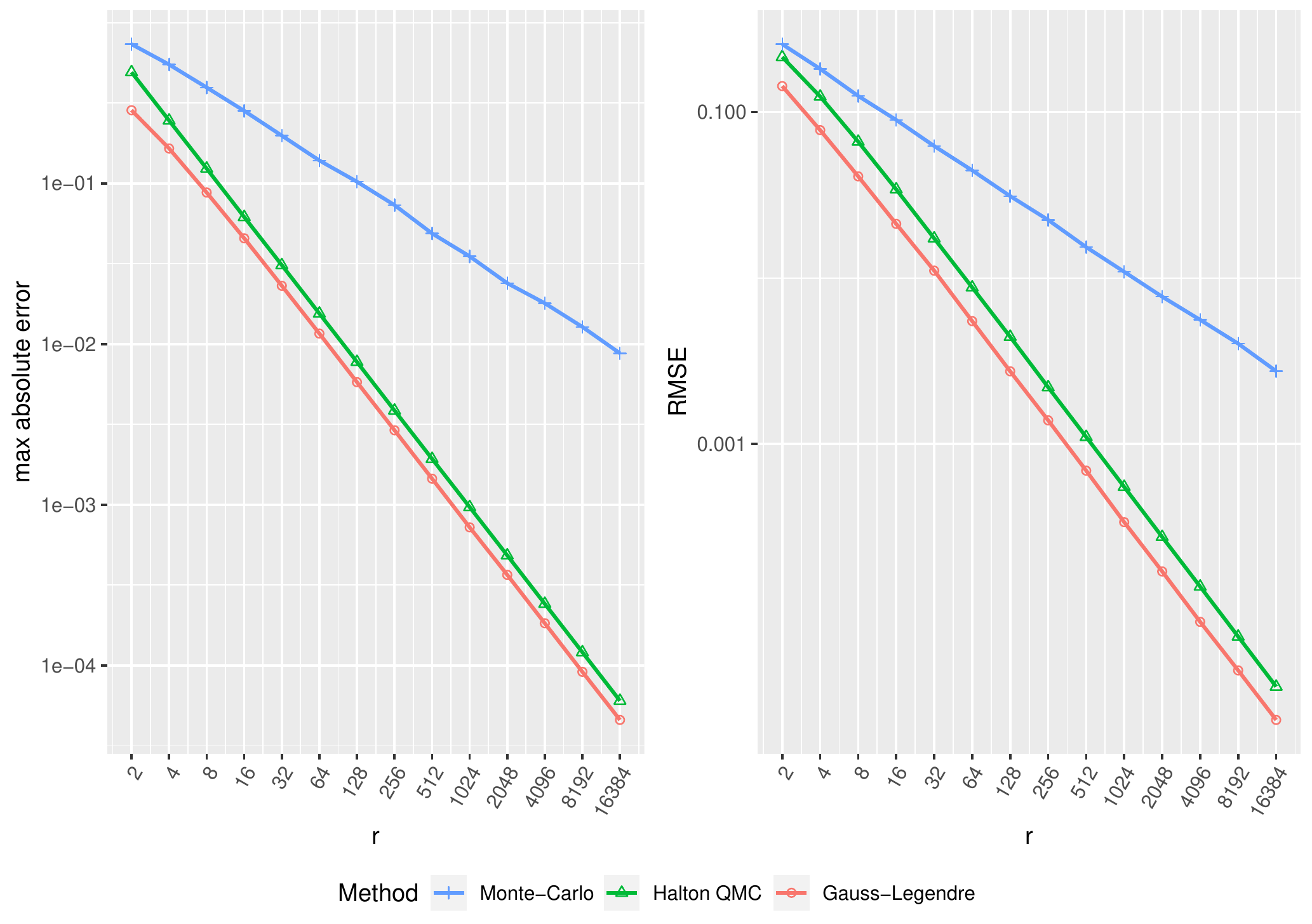}
\par\end{centering}
\caption{\label{Conv_fARS}Convergence behavior using different approximation methods for a non-smooth function $f\left(z\right)=\int_{\mathbb{R}} 1(x \leq z) \cdot g(x) \,\text{d}x$}
\end{figure}

The result for the non-smooth ARS confirms that the relative performance depends strongly on the smoothness of the approximated function. Here Monte-Carlo achieves the same convergence rate as before. The Halton draws and Gauss-Legendre achieve a higher rate, but Gauss-quadrature does not outperform the other two methods any more. 

As stated in section  \ref{sec_requiredAcc} the convergence rate of an approximation method translates into the required link-function $R(n)$ to achieve asymptotic normality. Figure \ref{Conv_RC} shows the results for the previous model $f(\cdot)$ from (\ref{model}) were the scaled approximation error $\mathcal{E}(r)$ (see equation (\ref{eq_error})) is plotted against the sample size in terms of Theorem 8 (ii). For Theorem 8 we require a link function which ensures $\plimn \sqrt{n}\mathcal{E}(R(n)) = 0$. For the constant link-function (upper left panel) this condition is not met by any approximation method. For the logarithmic link-function (upper right panel) and for the $\sqrt{n}$ link-function (lower left panel) only the more efficient Gauss-Hermite or the Gauss-Hermite and Halton draws, respectively, meet the condition. And for the computationally most costly link function $n$ (lower right panel) all methods meet the condition, even the Monte-Carlo method (blue line), albeit with a very flat slope.

\begin{figure}
\begin{centering}
\includegraphics[width=0.9\textwidth]{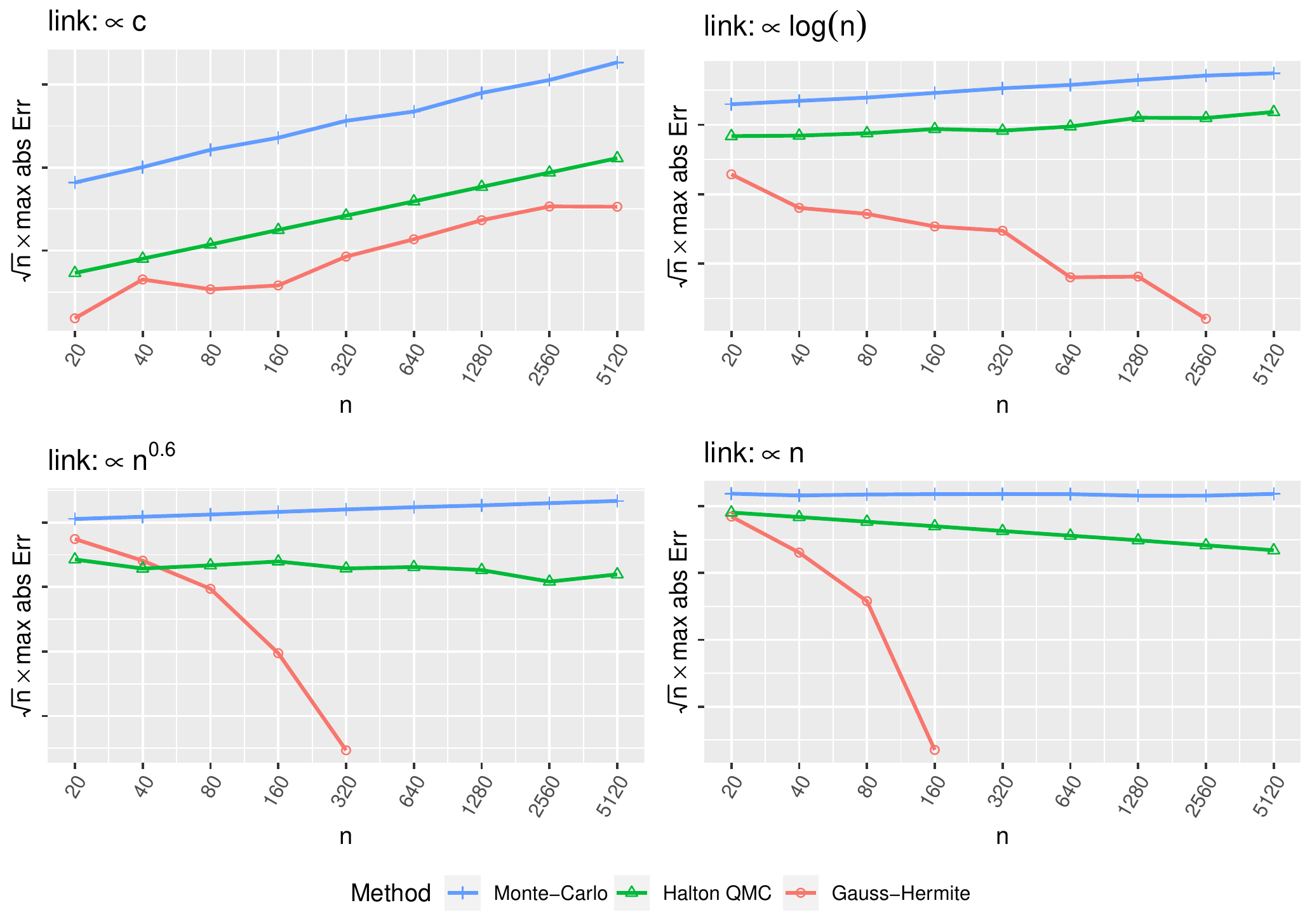}
\par\end{centering}
\caption{Convergence of $\sqrt{n}\mathcal{E}(R(n))$ for different link functions\label{Conv_RC}}
\end{figure}

\subsection*{Consequences for the practical application of MALE}
The overall estimation error ($\hat{\theta}_{MAL} - \theta_0$) in MAL has two components (compare also equation (\ref{ErrorComposition})). First, the sampling error and, second, the approximation error. In most practical  applications, the sample size is typically fixed and there is only a small variation allowed in the approximation accuracy. Thus, we now run a simulation were we generate $m=2000$ data sets for our random coefficient model with sample size fixed at $n=50$ and $n=5000$, respectively. We estimate the parameter $\bar\beta$ using the likelihood function (\ref{model}) from above. To illustrate the convergence behavior we compute the $RMSE = \sqrt{\frac{1}{m} \sum_{i=1}^{m} (\hat\theta_{MALE}-\theta_0)^2}$ based on the estimated parameter $\theta = \bar{\beta}$. Figure \ref{Conv_RC_fixedN} shows the result.

\begin{figure}
\begin{centering}
\includegraphics[width=0.9\textwidth]{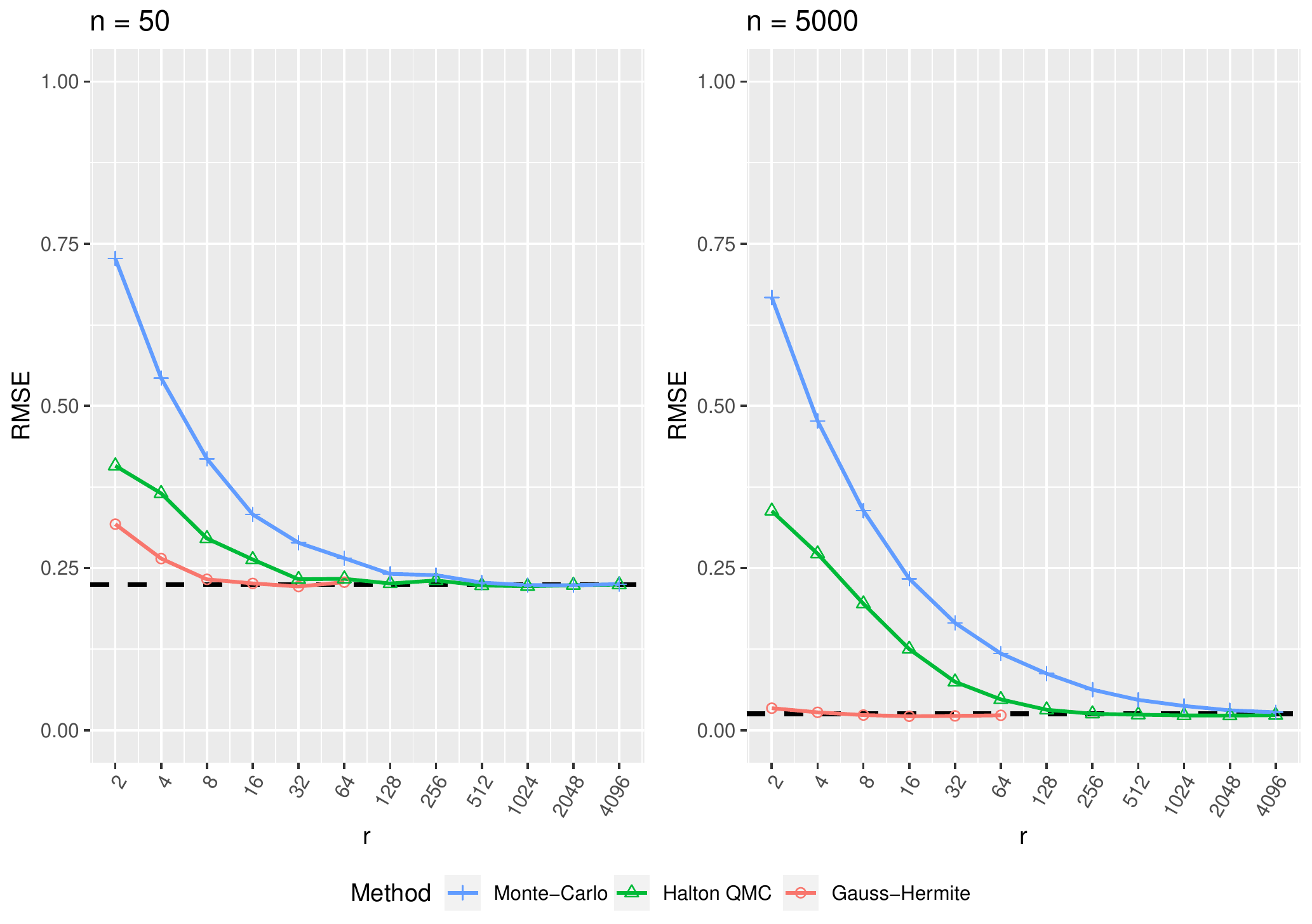}
\par\end{centering}
\caption{Convergence of the estimator with sample size $n$ fixed \label{Conv_RC_fixedN}}
\end{figure}
 
If there would be no approximation error only the sample error would cause variation in the result and would not be affected by $r$. This sample-error expressed in the $RMSE$ is given by the dashed horizontal line in Figure \ref{Conv_RC_fixedN}. For increasing sample size this error decreases (compare left and right panel). Indeed, the estimator's variance needed for standard hypothesis testing is just of this type.

For finite $r$ there is also an approximation error whose size depends on the approximation method and $r$. This error component increases the variance for the estimator and therefore potentially affects the interpretation of hypothesis tests. To correct for the approximation error in the inference one could apply corrections as proposed in \cite{kristensen2017}. So, also for practical applications, it is recommended to choose the most efficient method to minimize this error-component for a given computational cost. Moreover, Figure \ref{Conv_RC_fixedN} shows an increasing relevance of the approximation error due a rising number of observations. This leads to the conclusion that, for a large dataset, an appropriate approximation is of higher interest than for a small dataset.

\section{Summary and conclusions}

This paper discusses maximum approximated likelihood (MAL) estimators that generalize maximum simulated likelihood (MSL) estimators. A major advantage of MSL is that the underlying Monte Carlo simulation techniques provide favorable asymptotic properties under very general conditions. This not only makes it a versatile tool for the practitioner, it also simplifies the theoretical analysis. However, it has been frequently found that the computational costs required to achieve a sufficient approximation quality can be burdensome or infeasible. 
So it has become common practice to use more accurate numerical approximation algorithms such as quasi--Monte Carlo simulation, Gaussian quadrature, and integration on sparse grids.
This paper contributes to the theoretical underpinning of these approaches. It establishes sets of conditions on the model, the approximation approach and their interactions that ensure the consistency and the asymptotic efficiency of general MAL estimators. We also provide discussions of specific algorithms and models and show their behavior both asymptotically and in a simulation analysis. 

Overall, numerical approximation methods have stronger requirements on properties of the models such as the smoothness of the likelihood contributions than MSL. Given that these conditions are met, not only their finite sample properties, but also their asymptotic behavior can be superior. This is manifested mainly in the fact that the speed with which the computational burden has to be asymptotically increased  (as $n\rightarrow\infty$) can be dramatically decreased relative to MSL.

The verification of the general conditions provided in this paper  for specific models and approximation methods can be a nontrivial task. We provide examples to demonstrate the approach.
But we expect that more work discussing the specifics for different classes of models and algorithms would be useful for guiding the practitioner.

\bibliographystyle{cje}
\bibliography{library}

\appendix
\section{Appendix: Technical results} \label{app_thm}

In Section \ref{sec_tildef}, we have worked with distances of the log likelihood function from its quadrature-approximation as well as their derivatives. In order to deal with the logarithm in the log likelihood contribution, we now present a general result that relates the distance of a function or its derivatives up to order \(2\) to the respective distance between the logarithmized functions and its derivatives.

To this end, we will again use the notation \(D_{\bsalpha}^{(\bstheta)} g(\bstheta)\) to denote the partial derivatives with respect to \(\bstheta\), cf. \eqref{eqn_partial_deriv}. Moreover, we use \(\| \nabla_\bstheta g\|\) to denote the \(\ell_2\)-(vector-)norm of the gradient of \(g\) and \(\| \nabla_{\bstheta,\bstheta} g\|\) to denote the \(\ell_2\)-(matrix-)norm of the Hessian of \(g\).

Moreover, we will use the $L^\infty(X)$ space which contains all functions that are bounded on some domain $X$, i.e.
\[
	\|f\|_{L^\infty(X)} := \sup_{x \in X} |f(x)| < \infty .
\]

\begin{thm} \label{thm_concat}
 Consider real-valued functions \(g\) and \(h\) defined on a compact set \(\Theta \subset \R^p\). Assume that (i) the image of \(g\) and \(h\) is bounded away from zero, i.e. for all \(\bstheta \in \Theta\) it holds \(g(\bstheta), h(\bstheta) \in J = [\delta, D]\) with \(0 < \delta < D < \infty\); (ii) the partial derivatives of \(g\) and \(h\) up to order \(k \in \N_0\) exist and are bounded in \(L^\infty(\Theta)\).
Moreover, consider a function \(\psi: J \to \R\), whose first \(k+1\) derivatives exist and are bounded in \(L^\infty(J)\).
 Then, it holds for

 (i) \(k=0\), i.e. \(g,h \in L^\infty(\Theta)\) and \(\psi, \psi' \in L^\infty(J)\), that
 \begin{equation} \label{eqn_k0}
  \sup_{\bstheta \in \Theta} |\psi \circ g(\bstheta) - \psi \circ h(\bstheta) | \leq \|\psi'\|_{L^\infty(J)} \cdot \sup_{\bstheta \in \Theta} |g(\bstheta) - h(\bstheta)| .
 \end{equation}

 (ii) \(k=1\), i.e. \(g,h,\| \nabla_\bstheta g\|, \| \nabla_\bstheta h\| \in L^\infty(\Theta)\) and \(\psi, \psi', \psi'' \in L^\infty(J)\), that
 \begin{equation}\label{eqn_k1}
  \begin{aligned}
  \sup_{\bstheta \in \Theta} \| \nabla_\bstheta \psi \circ g (\bstheta) - \nabla_\bstheta \psi \circ h (\bstheta) \| & \leq \|\psi''\|_{L^\infty(J)} \cdot \sup_{\bstheta \in \Theta} |g(\bstheta) - h(\bstheta)| \cdot \sup_{\bstheta \in \Theta} \|\nabla_\bstheta h(\bstheta)\| \\
	& \quad \quad + \|\psi'\|_{L^\infty(J)}  \cdot \sup_{\bstheta \in \Theta} \|\nabla_\bstheta g(\bstheta) - \nabla_\bstheta h(\bstheta)\| .
  \end{aligned}
 \end{equation}

 (iii) \(k=2\), i.e. \(g,h,\| \nabla_\bstheta g\|, \| \nabla_\bstheta h\|, \| \nabla_{\bstheta,\bstheta} g\|, \| \nabla_{\bstheta,\bstheta} h\| \in L^\infty(\Theta)\) and \(\psi, \psi', \psi'', \psi''' \in L^\infty(J)\), that
 \begin{equation}\label{eqn_k2}
  \begin{aligned}
    \sup_{\bstheta \in \Theta} \| \nabla_{\bstheta \bstheta} \psi \circ g - \nabla_{\bstheta \bstheta} \psi \circ h \| & \leq \|\psi'''\|_{L_\infty(J)} \sup_{\bstheta \in \Theta} \| \nabla_\bstheta h(\bstheta) \|^2  \sup_{\bstheta \in \Theta} | g(\bstheta) - h(\bstheta) | \\
   & \quad \quad + \|\psi'\|_{L_\infty(J)} \sup_{\bstheta \in \Theta} \big\| \nabla_{\bstheta \bstheta} g(\bstheta) - \nabla_{\bstheta \bstheta} h(\bstheta) \big\|  \\
   & \quad \quad + \|\psi''\|_{L_\infty(J)} \sup_{\bstheta \in \Theta} \| \nabla_{\bstheta \bstheta} h(\bstheta) \|  \sup_{\bstheta \in \Theta} | g(\bstheta) - h(\bstheta) | \\
   & \quad\quad + 2 \|\psi''\|_{L_\infty(J)} \sup_{\bstheta \in \Theta}\| \nabla_\bstheta h(\bstheta) \|   \sup_{\bstheta \in \Theta} \| \nabla_\bstheta g(\bstheta) - \nabla_\bstheta h(\bstheta) \| \\
   & \quad \quad  + \|\psi''\|_{L_\infty(J)} \sup_{\bstheta \in \Theta} \| \nabla_\bstheta g(\bstheta) - \nabla_\bstheta h(\bstheta) \|^2 .
  \end{aligned}
 \end{equation}
\end{thm}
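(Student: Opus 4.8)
The plan is to establish all three bounds by one common mechanism. I would first express the relevant derivatives of the compositions via the chain and product rules, then bound the pointwise difference at each $\bstheta \in \Theta$ by repeatedly adding and subtracting intermediate terms (telescoping) and invoking the mean value theorem for $\psi$ and its derivatives, and finally pass to the supremum over $\Theta$, using that $\sup_\bstheta\big(a(\bstheta)b(\bstheta)\big) \leq \big(\sup_\bstheta a\big)\big(\sup_\bstheta b\big)$ whenever $a,b \geq 0$. A preliminary observation does the work of the mean value theorem throughout: since $J = [\delta, D]$ is an interval and both $g(\bstheta), h(\bstheta) \in J$, the segment joining them lies in $J$, so for $m = 0,1,2$ one has $|\psi^{(m)}(g(\bstheta)) - \psi^{(m)}(h(\bstheta))| \leq \|\psi^{(m+1)}\|_{L^\infty(J)}\,|g(\bstheta) - h(\bstheta)|$. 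Case (i) is then immediate with $m=0$. For case (ii) I would use $\nabla_\bstheta(\psi\circ g) = \psi'(g)\,\nabla_\bstheta g$ and telescope
\[
\psi'(g)\,\nabla_\bstheta g - \psi'(h)\,\nabla_\bstheta h = \psi'(g)\big(\nabla_\bstheta g - \nabla_\bstheta h\big) + \big(\psi'(g) - \psi'(h)\big)\,\nabla_\bstheta h,
\]
bounding the first summand by $\|\psi'\|_{L^\infty(J)}\|\nabla_\bstheta g - \nabla_\bstheta h\|$ and the second by $\|\psi''\|_{L^\infty(J)}|g-h|\,\|\nabla_\bstheta h\|$ via the observation with $m=1$.

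For case (iii), the product and chain rules give
\[
\nabla_{\bstheta\bstheta}(\psi\circ g) = \psi''(g)\,\nabla_\bstheta g\,(\nabla_\bstheta g)^{T} + \psi'(g)\,\nabla_{\bstheta\bstheta} g,
\]
so the difference splits into a rank-one part carrying $\psi''$ and a Hessian part carrying $\psi'$. The Hessian part telescopes just as in (ii),
\[
\psi'(g)\,\nabla_{\bstheta\bstheta} g - \psi'(h)\,\nabla_{\bstheta\bstheta} h = \psi'(g)\big(\nabla_{\bstheta\bstheta} g - \nabla_{\bstheta\bstheta} h\big) + \big(\psi'(g) - \psi'(h)\big)\,\nabla_{\bstheta\bstheta} h,
\]
producing the term with $\|\psi'\|_{L^\infty(J)}$ multiplying $\|\nabla_{\bstheta\bstheta} g - \nabla_{\bstheta\bstheta} h\|$ and, via the observation with $m=1$, the term $\|\psi''\|_{L^\infty(J)}\|\nabla_{\bstheta\bstheta} h\|\,|g-h|$. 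For the rank-one part I would telescope against $\nabla_\bstheta h\,(\nabla_\bstheta h)^{T}$ rather than $\nabla_\bstheta g\,(\nabla_\bstheta g)^{T}$,
\[
\psi''(g)\,\nabla_\bstheta g\,(\nabla_\bstheta g)^{T} - \psi''(h)\,\nabla_\bstheta h\,(\nabla_\bstheta h)^{T} = \big(\psi''(g)-\psi''(h)\big)\nabla_\bstheta h\,(\nabla_\bstheta h)^{T} + \psi''(g)\Big(\nabla_\bstheta g\,(\nabla_\bstheta g)^{T} - \nabla_\bstheta h\,(\nabla_\bstheta h)^{T}\Big),
\]
so that the observation with $m=2$ turns the first summand into the term $\|\psi'''\|_{L^\infty(J)}\|\nabla_\bstheta h\|^2\,|g-h|$ of \eqref{eqn_k2}.

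The main obstacle, and the only piece that is not a routine repetition of (ii), is controlling the outer-product difference $\nabla_\bstheta g\,(\nabla_\bstheta g)^{T} - \nabla_\bstheta h\,(\nabla_\bstheta h)^{T}$ so that precisely the last two terms of \eqref{eqn_k2} emerge. Here I would telescope once more,
\[
\nabla_\bstheta g\,(\nabla_\bstheta g)^{T} - \nabla_\bstheta h\,(\nabla_\bstheta h)^{T} = \nabla_\bstheta g\,(\nabla_\bstheta g - \nabla_\bstheta h)^{T} + (\nabla_\bstheta g - \nabla_\bstheta h)\,(\nabla_\bstheta h)^{T},
\]
use that a rank-one matrix $uv^{T}$ has norm $\|u\|\|v\|$, and replace $\|\nabla_\bstheta g\|$ by $\|\nabla_\bstheta h\| + \|\nabla_\bstheta g - \nabla_\bstheta h\|$ via the triangle inequality. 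This bounds the outer-product difference by $\big(2\|\nabla_\bstheta h\| + \|\nabla_\bstheta g - \nabla_\bstheta h\|\big)\|\nabla_\bstheta g - \nabla_\bstheta h\|$; multiplying by $\|\psi''\|_{L^\infty(J)}$ yields exactly the cross term $2\|\psi''\|_{L^\infty(J)}\|\nabla_\bstheta h\|\,\|\nabla_\bstheta g - \nabla_\bstheta h\|$ together with the quadratic remainder $\|\psi''\|_{L^\infty(J)}\|\nabla_\bstheta g - \nabla_\bstheta h\|^2$. Collecting the five contributions, applying the triangle inequality, and taking suprema (using $\sup(ab) \leq \sup a\,\sup b$ on the nonnegative product terms) gives \eqref{eqn_k2}. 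The care required is entirely in the choice of telescoping base points, so that the coefficient of $|g-h|$ is $\|\nabla_\bstheta h\|^2$ and the surviving gradient in the cross term is $\nabla_\bstheta h$, matching the stated bound.
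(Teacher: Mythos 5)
Your proposal is correct and follows essentially the same route as the paper's proof: the same mean-value-theorem observation (the paper's Lemma \ref{lemma_mvt}), the same chain-rule plus telescoping decompositions for the gradient and Hessian cases, and the identical treatment of the outer-product difference via $\|\bsv\bsv^t - \bsw\bsw^t\| \leq 2\|\bsw\|\|\bsv-\bsw\| + \|\bsv-\bsw\|^2$ (the paper's \eqref{eqn_vector_diff}), yielding exactly the five terms of \eqref{eqn_k2}.
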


The theorem is proven by the following three Lemmas.

\begin{lemma} \label{lemma_mvt}
For a differentiable function \(\psi: J \to \R\), with \([a,b] \subseteq J \subset \R\), it holds that
\[
	|\psi(b) - \psi(a)| \leq \|\psi'\|_{L_\infty} |b-a| .
\]	

\begin{proof}
	The claim follows immediately from the mean-value theorem, i.e. there exists \(\xi \in [a,b]\) such that it holds
	\[\psi'(\xi) = \frac{\psi(b)-\psi(a)}{b-a} .\]
\end{proof}
\end{lemma}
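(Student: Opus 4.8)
The plan is to deduce the inequality directly from the mean-value theorem, essentially as the author's one-line proof sketch already indicates, while being slightly more careful about the degenerate endpoint case and about where the bounding point lives. First I would dispose of the trivial case \(a = b\): here both sides of the claimed inequality equal zero, so there is nothing to prove. This lets me assume \(a < b\) without loss of generality, so that the interval \((a,b)\) is nonempty and the hypotheses of the classical mean-value theorem are genuinely available.

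Next I would invoke the mean-value theorem on the interval \([a,b] \subseteq J\). Since \(\psi\) is differentiable on \(J\), it is in particular continuous on \([a,b]\) and differentiable on \((a,b)\), so there exists a point \(\xi \in (a,b)\) with
\[
\psi'(\xi) = \frac{\psi(b) - \psi(a)}{b - a}.
\]
Rearranging gives \(\psi(b) - \psi(a) = \psi'(\xi)\,(b-a)\), and taking absolute values yields \(|\psi(b) - \psi(a)| = |\psi'(\xi)|\,|b-a|\).

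Finally I would bound \(|\psi'(\xi)|\) by the supremum norm. Because \(\xi \in (a,b) \subseteq J\), the value \(|\psi'(\xi)|\) is one of the quantities over which the supremum defining \(\|\psi'\|_{L_\infty}\) is taken, hence \(|\psi'(\xi)| \leq \|\psi'\|_{L_\infty}\). Substituting into the equality above gives exactly
\[
|\psi(b) - \psi(a)| \leq \|\psi'\|_{L_\infty}\,|b - a|,
\]
which is the claim.

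There is no substantive obstacle in this argument; the only points requiring a moment of care are the degenerate case \(a=b\) (handled separately so that the mean-value theorem is applied only on a nondegenerate interval) and the observation that the intermediate point \(\xi\) indeed lies in \(J\), which is what licenses replacing \(|\psi'(\xi)|\) by its supremum over \(J\). Both are immediate, so the proof is genuinely short.
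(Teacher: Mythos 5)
Your proposal is correct and follows exactly the same route as the paper's one-line proof, namely applying the mean-value theorem on \([a,b]\) and bounding \(|\psi'(\xi)|\) by \(\|\psi'\|_{L_\infty}\); the only additions are the harmless degenerate case \(a=b\) and the explicit remark that \(\xi \in J\), both of which the paper leaves implicit.
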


Lemma \ref{lemma_mvt} immediately implies
\[
 |\psi \circ g(\bstheta) - \psi \circ h(\bstheta) | \leq \|\psi'\|_{L_\infty(J)} | g(\bstheta) - h(\bstheta) |,
\]
which implies Theorem \ref{thm_concat} for the case \(k=0\), i.e. \eqref{eqn_k0}.

Next, we deal with the gradient.

\begin{lemma}
 For a twice differentiable function \(\psi: J \to \R\) with \(\psi, \psi', \psi'' \in L^\infty(J)\) and $J$-valued functions \(g,h: \Theta \to J\) with bounded derivatives of first order it holds that
 \begin{equation*} \label{eqn_concat_grad}
 \begin{aligned}
 	\sup_{\bstheta \in \Theta} \| \nabla_{\bstheta} \psi \circ g(\bstheta) - \nabla_{\bstheta} \psi \circ h(\bstheta) \| \leq &  \|\psi''\|_{L_\infty(J)} \sup_{\bstheta \in \Theta} | g(\bstheta) - h(\bstheta) | \|\nabla_\bstheta h(\bstheta)\| \\
 	& \quad + \|\psi'\|_{L_\infty(J)} \sup_{\bstheta \in \Theta} \| \nabla_{\bstheta} g(\bstheta) - \nabla_{\bstheta} h(\bstheta) \| .
 \end{aligned}
 \end{equation*}

 \begin{proof}
 Using the chain rule for the gradient \(\nabla_\bstheta \psi \circ g(\bstheta) = \psi'(g(\bstheta)) \cdot \nabla g(\bstheta)\) we compute
 \begin{align*}
  \big\| \nabla_\bstheta \psi \circ g(\bstheta) - \nabla_\bstheta \psi \circ h(\bstheta) \big\| & = \big\| \psi' \circ g(\bstheta) \nabla_\bstheta g(\bstheta) - \psi' \circ h(\bstheta) \nabla_\bstheta h(\bstheta) \big\| \\
	& = \big\| \psi' \circ g(\bstheta) \nabla_\bstheta g(\bstheta) - \psi' \circ g(\bstheta) \nabla_\bstheta h(\bstheta) \\
	& \quad \quad + \psi' \circ g(\bstheta) \nabla_\bstheta h(\bstheta) - \psi' \circ h(\bstheta) \nabla_\bstheta h(\bstheta) \big\| \\
	& \leq | \psi' \circ g(\bstheta) | \cdot \big\| \nabla_\bstheta g(\bstheta) - \nabla_\bstheta h(\bstheta) \big\| \\
	& \quad\quad + | \psi' \circ g(\bstheta) - \psi' \circ h(\bstheta)| \cdot \big\| \nabla_\bstheta h(\bstheta) \big\| .
 \end{align*}

 By \(| \psi' \circ g(\bstheta) | \leq \|\psi'\|_{L_\infty(J)}\) and by applying Lemma \ref{lemma_mvt} we obtain
 \[
  | \psi' \circ g(\bstheta) - \psi' \circ h(\bstheta)| \leq \|\psi''\| _{L_\infty(J)} |g(\bstheta) - h(\bstheta)| .
 \]
 Taking the \(\sup\) over all \(\bstheta \in \Theta\) we arrive at the desired result, which is \eqref{eqn_k1}, i.e. (ii) in Theorem \ref{thm_concat}.

 \end{proof}
\end{lemma}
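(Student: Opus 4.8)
The plan is to reduce this vector-valued gradient estimate to the scalar mean-value bound of Lemma~\ref{lemma_mvt}, which already settled the case $k=0$. First I would apply the chain rule to write $\nabla_\bstheta (\psi \circ g)(\bstheta) = \psi'(g(\bstheta))\,\nabla_\bstheta g(\bstheta)$, and analogously for $h$. The difference $\psi'(g)\nabla_\bstheta g - \psi'(h)\nabla_\bstheta h$ blends two distinct sources of error, namely the discrepancy between the gradients $\nabla_\bstheta g$ and $\nabla_\bstheta h$ and the discrepancy between the scalar multipliers $\psi'(g)$ and $\psi'(h)$. The central step is therefore to disentangle them by adding and subtracting the intermediate term $\psi'(g(\bstheta))\,\nabla_\bstheta h(\bstheta)$.

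After inserting this term I would regroup as
\[
\psi'(g)\nabla_\bstheta g - \psi'(h)\nabla_\bstheta h = \psi'(g)\big(\nabla_\bstheta g - \nabla_\bstheta h\big) + \big(\psi'(g) - \psi'(h)\big)\nabla_\bstheta h
\]
and apply the triangle inequality for the $\ell_2$ norm. The first summand has norm at most $|\psi'(g(\bstheta))|\cdot\|\nabla_\bstheta g(\bstheta) - \nabla_\bstheta h(\bstheta)\|$, and since $g$ takes values in $J$ we may bound $|\psi'(g(\bstheta))| \leq \|\psi'\|_{L^\infty(J)}$, which yields the second term of the claimed inequality. The second summand is a scalar times a vector, bounded by $|\psi'(g(\bstheta)) - \psi'(h(\bstheta))|\cdot\|\nabla_\bstheta h(\bstheta)\|$.

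To control the scalar factor $|\psi'(g) - \psi'(h)|$ I would invoke Lemma~\ref{lemma_mvt}, applied not to $\psi$ itself but to its derivative $\psi'$: because $\psi$ is twice differentiable with $\psi'' \in L^\infty(J)$, and because both $g(\bstheta)$ and $h(\bstheta)$ lie in $J$ so that the segment joining them stays inside $J$, the lemma gives $|\psi'(g(\bstheta)) - \psi'(h(\bstheta))| \leq \|\psi''\|_{L^\infty(J)}\,|g(\bstheta) - h(\bstheta)|$. Substituting this produces the first term of the claimed bound. Taking the supremum over $\bstheta \in \Theta$ of the resulting pointwise inequality and using the subadditivity of the supremum over the sum, the two summands separate and the constants $\|\psi''\|_{L^\infty(J)}$ and $\|\psi'\|_{L^\infty(J)}$ factor out, recovering exactly the right-hand side.

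I do not anticipate any real obstacle: the whole argument is one telescoping decomposition followed by two elementary estimates. The only point that needs mild care is the hypothesis bookkeeping — ensuring that $g$ and $h$ map into $J$ so that $\psi'$ can be composed with them and so that Lemma~\ref{lemma_mvt} applies to $\psi'$ on the correct interval — together with the observation that the twice-differentiability of $\psi$ is exactly what licenses the mean-value step applied to $\psi'$.
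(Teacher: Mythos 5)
Your proposal is correct and follows essentially the same route as the paper's own proof: the chain rule, the telescoping insertion of \(\psi'(g(\bstheta))\,\nabla_\bstheta h(\bstheta)\), the triangle inequality, the bound \(|\psi'\circ g|\leq\|\psi'\|_{L^\infty(J)}\), and the mean-value estimate of Lemma \ref{lemma_mvt} applied to \(\psi'\) (justified exactly as you note, since \(g\) and \(h\) take values in the interval \(J\)). No gaps; your closing remark about the hypothesis bookkeeping is the same observation implicit in the paper's appeal to Lemma \ref{lemma_mvt}.
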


Next, we deal with the Hessian matrix \(\nabla_{\bstheta \bstheta} f(\bstheta) = (\partial_{\theta_k} \partial_{\theta_j} f(\bstheta))_{k,j=1}^p\) .

\begin{lemma}
For a \(3\)-times differentiable function \(\psi: J \to \R\) with \(\psi, \psi', \psi'', \psi''' \in L^\infty(J)\) and \(J\)-valued functions \(g,h: \Theta \to J\) with bounded derivatives up to second order it holds that
 \begin{equation} \label{eqn_concat_hessian}
 \begin{aligned}
  \sup_{\bstheta \in \Theta} \| \nabla_{\bstheta \bstheta} \psi \circ g - \nabla_{\bstheta \bstheta} \psi \circ h \| & \leq \|\psi'''\|_{L_\infty(J)} \sup_{\bstheta \in \Theta} \| \nabla_\bstheta h(\bstheta) \|^2  | \sup_{\bstheta \in \Theta} g(\bstheta) - h(\bstheta) | \\
   & \quad \quad + \|\psi'\|_{L_\infty(J)} \big\| \sup_{\bstheta \in \Theta} \nabla_{\bstheta \bstheta} g(\bstheta) - \nabla_{\bstheta \bstheta} h(\bstheta) \big\|  \\
   & \quad \quad + \|\psi''\|_{L_\infty(J)} \sup_{\bstheta \in \Theta} \| \nabla_{\bstheta \bstheta} h(\bstheta) \| | g(\bstheta) - h(\bstheta) | \\
   & \quad\quad + 2 \|\psi''\|_{L_\infty(J)} \sup_{\bstheta \in \Theta}\| \nabla_\bstheta h(\bstheta) \|   \sup_{\bstheta \in \Theta} \| \nabla_\bstheta g(\bstheta) - \nabla_\bstheta h(\bstheta) \| \\
   & \quad \quad  + \|\psi''\|_{L_\infty(J)} \sup_{\bstheta \in \Theta} \| \nabla_\bstheta g(\bstheta) - \nabla_\bstheta h(\bstheta) \|^2 .
 \end{aligned}
 \end{equation}

 \begin{proof}
 First we note that it holds by the multivariate chain-rule that
 \begin{equation}
 \begin{aligned}
 \nabla_{\bstheta \bstheta} \psi \circ g(\bstheta) & = \psi'' \circ g(\bstheta) \, (\nabla_\bstheta g(\bstheta) ) \cdot (\nabla_\bstheta g(\bstheta))^t + \psi' \circ g(\bstheta) \nabla_{\bstheta \bstheta} g(\bstheta) \\
  & = \psi'' \circ g(\bstheta) \cdot p_\bstheta(g) + \psi' \circ g(\bstheta) \cdot \nabla_{\bstheta \bstheta} g(\bstheta) ,
 \end{aligned}
\end{equation}
where we abbreviated \(p_\bstheta(g) := (\nabla_\bstheta g(\bstheta) ) \cdot (\nabla_\bstheta g(\bstheta))^t\).
Here, \(p_\bstheta(g) \in \R^{p \times p}\) is a rank-1 matrix that consists of the outer product of the gradient of \(g\) at \(\bstheta\) with itself. For its norm it holds that \(\|p_\bstheta(g)\| = \|\nabla_\bstheta g(\bstheta)\|^2\). Moreover, \(\nabla_{\bstheta \bstheta} g(\bstheta)\) denotes the Hessian matrix of \(g\) at \(\bstheta\).

Now, we proceed analogously to the proof of the preceeding Lemma and compute

\begin{equation} \label{eqn_comppp}
 \begin{aligned}
  & \quad \| \nabla_{\bstheta \bstheta} \psi \circ g(\bstheta) - \nabla_{\bstheta \bstheta} \psi \circ h(\bstheta) \| \\
  & = \big\| \psi'' \circ g(\bstheta) \, p_\bstheta(g) + \psi' \circ g(\bstheta) \nabla_{\bstheta \bstheta} g(\bstheta) - \psi'' \circ h(\bstheta) \, p_\bstheta(h) - \psi' \circ h(\bstheta) \nabla_{\bstheta \bstheta} h(\bstheta) \big\| \\
  & = \big\| \psi'' \circ g(\bstheta) \, p_\bstheta(g) - \psi'' \circ h(\bstheta) \, p_\bstheta(h) + \psi'' \circ g(\bstheta) \, p_\bstheta(h) - \psi'' \circ g(\bstheta) \, p_\bstheta(h) \\
  & \quad \quad + \psi' \circ g(\bstheta) \nabla_{\bstheta \bstheta} g(\bstheta) - \psi' \circ h(\bstheta) \nabla_{\bstheta \bstheta} h(\bstheta) \big\| \\
  & \leq |\psi'' \circ g(\bstheta)| \big\| p_\bstheta(g) - p_\bstheta(h) \| + \| p_\bstheta(h) \| | \psi'' \circ g(\bstheta) - \psi'' \circ h(\bstheta) |   \\
  & \quad \quad + \big \| \psi' \circ g(\bstheta) \nabla_{\bstheta \bstheta} g(\bstheta) - \psi' \circ h(\bstheta) \nabla_{\bstheta \bstheta} h(\bstheta) \big\| \\
  & \leq \|\psi''\|_{L_\infty(J)} \big\| p_\bstheta(g) - p_\bstheta(h) \big\| + \| \nabla_\bstheta h(\bstheta) \|^2 \| \psi'''\|_{L_\infty(J)} | g(\bstheta) - h(\bstheta) |   \\
  & \quad \quad + \big \| \psi' \circ g(\bstheta) \nabla_{\bstheta \bstheta} g(\bstheta) - \psi' \circ h(\bstheta) \nabla_{\bstheta \bstheta} h(\bstheta) \big\| .
 \end{aligned}
\end{equation}

 Next, we will take care of the last summand in \eqref{eqn_comppp} and derive
 \begin{equation*}
  \begin{aligned}
  & \| \psi' \circ g(\bstheta) \nabla_{\bstheta \bstheta} g(\bstheta) - \psi' \circ h(\bstheta) \nabla_{\bstheta \bstheta} h(\bstheta) \big\| \\
  & = \big \| \psi' \circ g(\bstheta) \nabla_{\bstheta \bstheta} g(\bstheta) - \psi' \circ g(\bstheta) \nabla_{\bstheta \bstheta} h(\bstheta) + \psi' \circ g(\bstheta) \nabla_{\bstheta \bstheta} h(\bstheta) - \psi' \circ h(\bstheta) \nabla_{\bstheta \bstheta} h(\bstheta) \big\| \\
  & \leq |\psi' \circ g(\bstheta)| \big\| \nabla_{\bstheta \bstheta} g(\bstheta) - \nabla_{\bstheta \bstheta} h(\bstheta) \big\| + \| \nabla_{\bstheta \bstheta} h(\bstheta) \| \Big| \psi' \circ g(\bstheta) - \psi' \circ h(\bstheta) \Big| \\
  & \leq \|\psi'\|_{L_\infty(J)} \big\| \nabla_{\bstheta \bstheta} g(\bstheta) - \nabla_{\bstheta \bstheta} h(\bstheta) \big\| + \| \nabla_{\bstheta \bstheta} h(\bstheta) \| \|\psi''\|_{L_\infty(J)} | g(\bstheta) - h(\bstheta) | .
  \end{aligned}
 \end{equation*}

 Inserting this back into \eqref{eqn_comppp} we obtain
 \begin{equation} \label{eqn_comppp2} 
  \begin{aligned}
   & \quad \| \nabla_{\bstheta \bstheta} \psi \circ g(\bstheta) - \nabla_{\bstheta \bstheta} \psi \circ h(\bstheta) \| \\
   & \leq \|\psi''\|_{L_\infty(J)} \big\| p_\bstheta(g) - p_\bstheta(h) \big\| + \| \nabla_\bstheta h(\bstheta) \|^2 \| \psi'''\|_{L_\infty(J)} | g(\bstheta) - h(\bstheta) |   \\
   & \quad \quad + \|\psi'\|_{L_\infty(J)} \big\| \nabla_{\bstheta \bstheta} g(\bstheta) - \nabla_{\bstheta \bstheta} h(\bstheta) \big\| + \| \nabla_{\bstheta \bstheta} h(\bstheta) \| \|\psi''\|_{L_\infty(J)} | g(\bstheta) - h(\bstheta) | .
  \end{aligned}
 \end{equation}

It remains to bound the term \(\big\| p_\bstheta(g) - p_\bstheta(h) \big\|\) in \eqref{eqn_comppp}. To this end, we derive for vectors \(\bsv, \bsw \in \R^p\)
\begin{equation} \label{eqn_vector_diff}
 \begin{aligned}
 \| \bsv \cdot \bsv^t - \bsw \cdot \bsw^t \| &= \| \bsv \cdot \bsv^t - \bsv \cdot \bsw^t + \bsv \cdot \bsw^t - \bsw \cdot \bsw^t \| \\
	   & = \| \bsv \cdot (\bsv^t - \bsw^t) + (\bsv - \bsw) \cdot \bsw^t \| \\
	   & \leq \|\bsv - \bsw\| \ \left(\|\bsv\| + \|\bsw\| \right) \\
	   & = \|\bsv - \bsw\| \ \left(\|\bsw + (\bsv - \bsw)\| + \|\bsw\| \right) \\
	   & \leq \|\bsv - \bsw\| \ \left(\|\bsw\| + \|\bsv - \bsw\| + \|\bsw\| \right) \\
	   & = 2 \|\bsw\| \|\bsv - \bsw\| + \|\bsv - \bsw\|^2 .
 \end{aligned}
\end{equation}
Now, using \eqref{eqn_vector_diff} with \(\bsv = \nabla_\bstheta g(\bstheta) \) and \(\bsw = \nabla_\bstheta h(\bstheta)\) we obtain
\begin{equation}
 \begin{aligned} \label{eqn_grad_diff}
 \big\| p_\bstheta(g) - p_\bstheta(h) \big\| & = \big\| (\nabla_\bstheta g(\bstheta) ) \cdot (\nabla_\bstheta g(\bstheta))^t - (\nabla_\bstheta h(\bstheta) ) \cdot (\nabla_\bstheta h(\bstheta))^t \big\| \\
	  & \leq 2 \| \nabla_\bstheta h(\bstheta) \| \nabla_\bstheta g(\bstheta) - \nabla_\bstheta h(\bstheta) \| + \| \nabla_\bstheta g(\bstheta) - \nabla_\bstheta h(\bstheta) \|^2 .
 \end{aligned}
\end{equation}

Hence, inserting \eqref{eqn_grad_diff} into \eqref{eqn_comppp2} we arrive at
\begin{equation*}
 \begin{aligned}
 & \quad \| \nabla_{\bstheta \bstheta} \psi \circ g(\bstheta) - \nabla_{\bstheta \bstheta} \psi \circ h(\bstheta) \| \\
 & \leq \| \nabla_\bstheta h(\bstheta) \|^2 \|\psi'''\|_{L_\infty(J)} | g(\bstheta) - h(\bstheta) | + \|\psi'\|_{L_\infty(J)} \big\| \nabla_{\bstheta \bstheta} g(\bstheta) - \nabla_{\bstheta \bstheta} h(\bstheta) \big\|  \\
   & \quad \quad + \| \nabla_{\bstheta \bstheta} h(\bstheta) \| \|\psi''\|_{L_\infty(J)} | g(\bstheta) - h(\bstheta) | + \|\psi''\|_{L_\infty(J)} 2 \| \nabla_\bstheta h(\bstheta) \| \nabla_\bstheta g(\bstheta) - \nabla_\bstheta h(\bstheta) \| \\
   & \quad \quad  + \|\psi''\|_{L_\infty(J)} \| \nabla_\bstheta g(\bstheta) - \nabla_\bstheta h(\bstheta) \|^2 ,
 \end{aligned}
\end{equation*}

 which concludes the proof.

 \end{proof}

\end{lemma}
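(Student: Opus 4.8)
The plan is to reduce everything to the multivariate chain rule for the Hessian of a composition combined with a telescoping (add-and-subtract) argument, mirroring the gradient case of the preceding Lemma but with one extra scalar factor and one genuinely new matrix estimate. First I would record that the chain rule gives
\[
\nabla_{\bstheta\bstheta}\psi\circ g(\bstheta) = \psi''\circ g(\bstheta)\, p_\bstheta(g) + \psi'\circ g(\bstheta)\,\nabla_{\bstheta\bstheta}g(\bstheta),
\]
where \(p_\bstheta(g) := (\nabla_\bstheta g(\bstheta))(\nabla_\bstheta g(\bstheta))^t\) is the rank-one outer product of the gradient with itself, so that \(\|p_\bstheta(g)\| = \|\nabla_\bstheta g(\bstheta)\|^2\). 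Subtracting the analogous expression for \(h\) splits the difference into a ``\(\psi''\)-part'' \(\psi''\circ g\, p_\bstheta(g) - \psi''\circ h\, p_\bstheta(h)\) and a ``\(\psi'\)-part'' \(\psi'\circ g\,\nabla_{\bstheta\bstheta}g - \psi'\circ h\,\nabla_{\bstheta\bstheta}h\).

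Next I would treat each part with the same device used for the gradient: insert and subtract a mixed term so that each part decomposes into (a uniformly bounded factor) times (a difference of matrix arguments) plus (a difference of scalar factors) times (a bounded matrix). For the \(\psi'\)-part, adding and subtracting \(\psi'\circ g\,\nabla_{\bstheta\bstheta}h\) and using \(|\psi'\circ g|\le\|\psi'\|_{L^\infty(J)}\) together with \(|\psi'\circ g - \psi'\circ h|\le \|\psi''\|_{L^\infty(J)}|g-h|\) (Lemma \ref{lemma_mvt}) yields exactly the second and third summands of \eqref{eqn_concat_hessian}. For the \(\psi''\)-part, adding and subtracting \(\psi''\circ g\, p_\bstheta(h)\) and using \(|\psi''\circ g|\le\|\psi''\|_{L^\infty(J)}\) together with \(|\psi''\circ g-\psi''\circ h|\le\|\psi'''\|_{L^\infty(J)}|g-h|\) (again Lemma \ref{lemma_mvt}, now applied to \(\psi'\)) produces the leading \(\|\psi'''\|\) summand (via \(\|p_\bstheta(h)\|=\|\nabla_\bstheta h\|^2\)) and leaves only the quantity \(\|p_\bstheta(g)-p_\bstheta(h)\|\) to be estimated.

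The hard part, and the only step with no scalar analogue, will be bounding this difference of rank-one outer products. The key is the algebraic identity \(\bsv\bsv^t - \bsw\bsw^t = \bsv(\bsv-\bsw)^t + (\bsv-\bsw)\bsw^t\), which after the triangle inequality gives \(\|\bsv\bsv^t-\bsw\bsw^t\|\le\|\bsv-\bsw\|\,(\|\bsv\|+\|\bsw\|)\); rewriting \(\|\bsv\|\le\|\bsw\|+\|\bsv-\bsw\|\) then converts this into \(2\|\bsw\|\,\|\bsv-\bsw\| + \|\bsv-\bsw\|^2\). Applying it with \(\bsv=\nabla_\bstheta g(\bstheta)\) and \(\bsw=\nabla_\bstheta h(\bstheta)\) bounds \(\|p_\bstheta(g)-p_\bstheta(h)\|\) by \(2\|\nabla_\bstheta h\|\,\|\nabla_\bstheta g-\nabla_\bstheta h\| + \|\nabla_\bstheta g-\nabla_\bstheta h\|^2\), which after multiplication by \(\|\psi''\|_{L^\infty(J)}\) supplies precisely the fourth and fifth summands of \eqref{eqn_concat_hessian}. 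Finally I would collect all five contributions and take the supremum over \(\bstheta\in\Theta\); since every bounding constant (\(\|\psi'\|\), \(\|\psi''\|\), \(\|\psi'''\|\) and the sup-norms of the first- and second-order derivatives of \(h\)) is finite by hypothesis, the suprema distribute over the individual products and the stated inequality follows.
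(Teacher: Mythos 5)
Your proposal is correct and follows essentially the same route as the paper's own proof: the Hessian chain rule with the rank-one outer product \(p_\bstheta(\cdot)\), the add-and-subtract splitting into a \(\psi''\)-part and a \(\psi'\)-part handled via the mean-value bound, and the identity \(\bsv\bsv^t-\bsw\bsw^t=\bsv(\bsv-\bsw)^t+(\bsv-\bsw)\bsw^t\) leading to \(2\|\bsw\|\|\bsv-\bsw\|+\|\bsv-\bsw\|^2\). One trivial slip: in the \(\psi''\)-part the mean-value lemma is applied to \(\psi''\) (using the bound on \(\psi'''\)), not to \(\psi'\) as your parenthetical says, though the inequality you actually write down is the correct one.
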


For the special case \(\psi(x) = \log(x)\) we obtain the following result.

\begin{corollary} \label{cor_concat}
 Consider real-valued functions \(g,h\) on some compact domain \(\Theta \subset \R^p\). Assume that the image of \(g\) and \(h\) is bounded away from zero, i.e. for all \(\bstheta \in \Theta\) it holds \(g(\bstheta), h(\bstheta) \in J = [\delta, D]\) with \(0 < \delta < 1 \leq D < \infty\). Then

 (i) for \(g,h \in L^\infty(\Theta)\) it holds
 \begin{equation}
  \sup_{\bstheta \in \Theta} |\log g(\bstheta) - \log h(\bstheta)| \leq \frac{1}{\delta} \sup_{\bstheta \in \Theta} |g(\bstheta) - h(\bstheta)| .
 \end{equation}

 (ii) for differentiable \(g,h\) with bounded first derivatives it holds
  \begin{equation}
   \begin{aligned}
    & \sup_{\bstheta \in \Theta} \| \nabla_{\bstheta} \log g(\bstheta) - \nabla_{\bstheta} \log h(\bstheta)\| \\
    & \quad \quad \leq C_1(h) \left( \sup_{\bstheta \in \Theta} |g(\bstheta) - h(\bstheta)| + \sup_{\bstheta \in \Theta} \| \nabla_{\bstheta} g(\bstheta) - \nabla_{\bstheta} h(\bstheta)\| \right),
   \end{aligned}
  \end{equation}
where
\[
C_1(h) = \frac{1 + \sup_{\bstheta} \|\nabla_\bstheta h\|}{\delta^2} .
\]
 
 (iii) for \(2\)-times differentiable \(g,h\) with bounded derivatives up to second order it holds
  \begin{equation}
   \begin{aligned}
    & \sup_{\bstheta \in \Theta} \| \nabla_{\bstheta,\bstheta} \log g(\bstheta) - \nabla_{\bstheta,\bstheta} \log h(\bstheta)\| \\
    & \leq  C_2(h) \Bigg( \sup_{\bstheta \in \Theta} |g(\bstheta) - h(\bstheta)| + \sup_{\bstheta \in \Theta} \| \nabla_{\bstheta} g(\bstheta) - \nabla_{\bstheta} h(\bstheta)\| \\
    & \quad \quad + \sup_{\bstheta \in \Theta} \| \nabla_{\bstheta} g(\bstheta) - \nabla_{\bstheta} h(\bstheta)\|^2 + \sup_{\bstheta \in \Theta} \| \nabla_{\bstheta,\bstheta} g(\bstheta) - \nabla_{\bstheta,\bstheta} h(\bstheta)\| \Bigg) ,
   \end{aligned}
  \end{equation}
where
\[
C_2(h) = 4 \frac{1 + \sup_{\bstheta} \|\nabla_\bstheta h\|^2 + \sup_{\bstheta} \|\nabla_{\bstheta,\bstheta} h \| }{\delta^3} .
\]
 
\begin{proof}
First note that  
\[\frac{\rd^j}{\rd x^j}\log(x) = \frac{(-1)^{j-1} (j-1)!}{ x^{j} }. \]
Then, for (i) and (ii) we use that \(\|\psi^{(j)}\|_{L_\infty(J)} = \frac{(j-1)!}{\delta^j}\) with \(j \in \N\) and \(J = [\delta, \infty)\). For (iii) we additionally use the inequality \(\max(x,x^2) \leq 1 + x^2\).
\end{proof}

\end{corollary}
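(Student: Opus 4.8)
The plan is to derive all three estimates as direct specializations of Theorem \ref{thm_concat} to the choice $\psi(x) = \log(x)$, so that the only genuine work is to compute the relevant derivative norms $\|\psi^{(j)}\|_{L^\infty(J)}$ and then to absorb the resulting prefactors into the single constants $C_1(h)$ and $C_2(h)$. The positivity assumption $g,h \in J = [\delta, D]$ with $\delta > 0$ is exactly what is needed to make $\log$ smooth and its derivatives bounded on $J$, so the hypotheses of Theorem \ref{thm_concat} are met.

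First I would record the derivatives of the logarithm: for $j \in \N$ one has $\frac{\rd^j}{\rd x^j}\log(x) = \frac{(-1)^{j-1}(j-1)!}{x^j}$. Since $|\psi^{(j)}(x)| = (j-1)!/x^j$ is monotonically decreasing in $x$, its supremum over $J$ is attained at the left endpoint $x = \delta$, giving $\|\psi^{(j)}\|_{L^\infty(J)} = (j-1)!/\delta^j$. In particular $\|\psi'\|_{L^\infty(J)} = 1/\delta$, $\|\psi''\|_{L^\infty(J)} = 1/\delta^2$, and $\|\psi'''\|_{L^\infty(J)} = 2/\delta^3$.

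Part (i) then follows immediately from \eqref{eqn_k0} with the factor $\|\psi'\|_{L^\infty(J)} = 1/\delta$. For part (ii), I would insert $\|\psi'\|_{L^\infty(J)} = 1/\delta$ and $\|\psi''\|_{L^\infty(J)} = 1/\delta^2$ into \eqref{eqn_k1}; this produces a coefficient $\sup_\bstheta\|\nabla_\bstheta h\|/\delta^2$ in front of $\sup|g-h|$ and a coefficient $1/\delta$ in front of $\sup\|\nabla_\bstheta g - \nabla_\bstheta h\|$. Bounding both by $C_1(h) = (1 + \sup_\bstheta\|\nabla_\bstheta h\|)/\delta^2$ settles the claim: the first is dominated trivially, and the second because the assumption $\delta < 1$ forces $1/\delta \le 1/\delta^2 \le C_1(h)$.

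Part (iii) proceeds identically but with more bookkeeping. Substituting the three derivative norms into \eqref{eqn_k2} and regrouping the five summands according to the four error quantities $\sup|g-h|$, $\sup\|\nabla_\bstheta g - \nabla_\bstheta h\|$, $\sup\|\nabla_\bstheta g - \nabla_\bstheta h\|^2$, and $\sup\|\nabla_{\bstheta,\bstheta} g - \nabla_{\bstheta,\bstheta} h\|$, I must check that every prefactor is at most $C_2(h) = 4(1 + \sup\|\nabla_\bstheta h\|^2 + \sup\|\nabla_{\bstheta,\bstheta} h\|)/\delta^3$. Because $\delta < 1$ gives $1/\delta^j \le 1/\delta^3$, most coefficients are dominated by $C_2(h)$ at once. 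The one step needing a small trick is the term linear in $\sup\|\nabla_\bstheta h\|$ (the $2\|\psi''\|$ summand): using $x \le 1 + x^2$ one has $\tfrac{2}{\delta^2}\sup\|\nabla_\bstheta h\| \le \tfrac{2}{\delta^3}(1 + \sup\|\nabla_\bstheta h\|^2) \le C_2(h)$. I expect the main (and only mild) obstacle to be keeping this comparison of coefficients organized so that the generous factor $4$ in $C_2(h)$ is seen to cover the sum of all contributions to each error quantity; there is no analytic difficulty once Theorem \ref{thm_concat} is in hand.
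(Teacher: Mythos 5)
Your proposal is correct and takes essentially the same route as the paper's own proof: both specialize Theorem \ref{thm_concat} to $\psi(x)=\log(x)$, compute $\|\psi^{(j)}\|_{L^\infty(J)}=(j-1)!/\delta^{j}$, and absorb the resulting prefactors into $C_1(h)$ and $C_2(h)$ using $\delta<1$ together with the inequality $x\leq 1+x^2$. The only difference is that you make explicit the coefficient bookkeeping that the paper's three-line proof leaves implicit.
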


We finally prove the following result that was employed in the analysis of the logit model in Section \ref{s_examples}.

\begin{lemma} \label{cor_logit_derivative}
 Let \(\bstheta, \bsv \in \R^n\) and
 \begin{equation}
 f(\bstheta, \bsv) = \frac{1}{ 1 + \exp(- \bstheta \cdot \bsv) } .
 \end{equation}
 Then it holds for all \(\bsalpha, \bsbeta \in \N_0^n\) that 
 \begin{enumerate}
 	\item[(i)] there exists a constant \(0 < c(\bsalpha, \bsbeta) < \infty\) such that
 \begin{equation}
 | D^{(\bstheta)}_\bsalpha D^{(\bsv)}_\bsbeta f(\bstheta, \bsv) | \leq c(\bsalpha, \bsbeta) \bstheta^\bsbeta \cdot \bsv^\bsalpha \quad \text{ for all } \bsv, \bstheta \in \R^n .
 \end{equation}
 
 \item[(ii)] there exists a constant \(0 < \tilde{c}(\bsalpha, \bsbeta) < \infty\) such that
 \begin{equation}
 | D^{(\bstheta)}_\bsalpha D^{(\bsv)}_\bsbeta f(\bstheta, \bsv) | \leq \tilde{c}(\bsalpha, \bsbeta) \bstheta^\bsbeta \cdot \frac{e^{\frac{\bsv^t \bsv}{2}}}{\prod_{i=1}^n \sqrt{1+v_i^2}} \quad \text{ for all } \bsv, \bstheta \in \R^n .
 \end{equation}
 \end{enumerate}

 \begin{proof}
First, we note that (ii) follows from (i), because the function $t \mapsto e^{t^2/2}/\sqrt{1+t^2}$ grows faster than any polynomial.

In order to prove (i), we write $f(\bstheta, \bsv) := g(\bstheta \cdot \bsv)$, where $g(t) = (1+e^{-t})^{-1}$. Note that $g$ and hence $g^p$ are bounded on $\R$ for all $p \in \N$.
Next, we use that
\begin{equation}
	\frac{\partial}{\partial\theta_i} g(\bstheta \cdot \bsv) = v_i ( g(\bstheta \cdot \bsv) -  g^2(\bstheta \cdot \bsv))
\end{equation}
and likewise also $\frac{\partial}{\partial v_i} g(\bstheta \cdot \bsv) = \theta_i ( g(\bstheta \cdot \bsv) -  g^2(\bstheta \cdot \bsv))$.
This proves (i) for $|\bsalpha|=1$ and $|\bsbeta|=1$. The case of higher order derivatives follows by induction by showing that $D^{(\bstheta)}_\bsalpha D^{(\bsv)}_\bsbeta g(\bstheta \cdot \bsv)$ always has a representation as $\sum_j c_j \bsv^{\bsa_j} \bstheta^{\bsb_j} g^{p_j}(\bstheta \cdot \bsv)$ with $\bsa_j \leq \bsalpha$, $\bsb_j \leq \bsalpha$ and $p_j \leq |\bsalpha| + |\bsbeta| + 1$.
 \end{proof}
\end{lemma}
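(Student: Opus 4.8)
The plan is to reduce the whole statement to the one-dimensional logistic function. I would write $f(\bstheta,\bsv)=g(\bstheta\cdot\bsv)$ with $g(t)=(1+e^{-t})^{-1}$, and record the two facts that drive everything: first, $g$ is uniformly bounded, $0<g(t)<1$ for all $t\in\R$; second, $g$ satisfies the identity $g'=g-g^2$, so that $\frac{\rd}{\rd t}g^{p}=p\,g^{p-1}(g-g^2)=p\,(g^{p}-g^{p+1})$ for every $p\in\N$. Combined with the chain rule $\partial_{\theta_i}g(\bstheta\cdot\bsv)=v_i\,g'(\bstheta\cdot\bsv)$ and $\partial_{v_i}g(\bstheta\cdot\bsv)=\theta_i\,g'(\bstheta\cdot\bsv)$, this says that each $\bstheta$-derivative pulls out a factor $v_i$ and each $\bsv$-derivative a factor $\theta_i$, while the only transcendental object ever produced is a power of $g$.

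The core step is an induction on the total order $|\bsalpha|+|\bsbeta|$ proving the structural claim
\[
 D^{(\bstheta)}_\bsalpha D^{(\bsv)}_\bsbeta g(\bstheta\cdot\bsv)=\sum_j c_j\,\bsv^{\bsa_j}\,\bstheta^{\bsb_j}\,g^{p_j}(\bstheta\cdot\bsv),
\]
a finite sum in which $\bsa_j\le\bsalpha$, $\bsb_j\le\bsbeta$ componentwise and $p_j\le|\bsalpha|+|\bsbeta|+1$. The base case is $g$ itself. For the inductive step I would apply one further derivative, say $\partial_{\theta_m}$, to a generic summand $c_j\,\bsv^{\bsa_j}\bstheta^{\bsb_j}g^{p_j}$ and split by the product rule into two branches: differentiating the monomial $\bstheta^{\bsb_j}$ lowers one $\theta$-exponent and leaves $\bsa_j$ and $p_j$ untouched, whereas differentiating $g^{p_j}$ yields $p_j\,v_m\,(g^{p_j}-g^{p_j+1})$, which raises the $\bsv$-exponent by one unit $\mathbf{e}_m$ and the power by at most one. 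Both branches stay within the asserted index bounds, and the $\bsv$-derivative case is completely symmetric.

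With the structural claim in hand, part (i) is immediate: since $0<g<1$ we have $|g^{p_j}(\bstheta\cdot\bsv)|\le1$ uniformly, so the derivative is bounded in modulus by a polynomial in $(\bsv,\bstheta)$ whose $\bsv$-multidegree is at most $\bsalpha$ and whose $\bstheta$-multidegree is at most $\bsbeta$; absorbing the finitely many $c_j$ into one constant $c(\bsalpha,\bsbeta)$ gives the bound. Part (ii) then follows from (i) by a purely scalar observation, namely that $t\mapsto e^{t^2/2}/\sqrt{1+t^2}$ eventually dominates any polynomial, so the $\bsv$-monomials can be swallowed into $e^{\bsv^t\bsv/2}/\prod_i\sqrt{1+v_i^2}$ at the price of enlarging the constant, leaving the $\bstheta^\bsbeta$ factor in place.

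The main obstacle is the multi-index bookkeeping in the induction: one has to check that neither branch of the product rule can push a monomial degree above $(\bsalpha,\bsbeta)$ and that the powers $p_j$ cannot escape the bound $|\bsalpha|+|\bsbeta|+1$. A second, more cosmetic point is how to read the right-hand side of (i): because the expansion genuinely contains lower-order monomials (and even a degree-zero term in the mixed case $\alpha_i=\beta_i=1$), the product $\bsv^\bsalpha\bstheta^\bsbeta$ should be understood as the leading monomial of a polynomial bound of that degree rather than a pointwise envelope valid near the origin. This is harmless for the application, where the polynomial is in any case dominated by the Gaussian weight of (ii).
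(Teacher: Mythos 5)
Your proof is correct and follows essentially the same route as the paper's: the reduction to $g(\bstheta\cdot\bsv)$ with $g'=g-g^2$, the induction yielding a representation $\sum_j c_j\,\bsv^{\bsa_j}\bstheta^{\bsb_j}g^{p_j}(\bstheta\cdot\bsv)$ with the same index bounds, and the domination of polynomials by the Gaussian factor for part (ii). Your closing caveat is also well taken: the bound in (i) cannot hold as a pointwise envelope (for $\bsalpha=\bsbeta=\mathbf{e}_1$ the mixed derivative at $\bstheta=\bsv=\boldzero$ equals $1/4$ while the monomial $\bstheta^\bsbeta\cdot\bsv^\bsalpha$ vanishes), so (i) must indeed be read as a polynomial bound of multidegree $(\bsalpha,\bsbeta)$ --- an imprecision present in the paper's statement and glossed over by its proof as well.
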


\end{document}